\newcolumntype{x}[1]{>{\centering\arraybackslash}p{#1}}
\tikzstyle{box} = [rectangle, rounded corners, minimum width=1.618cm, minimum height=1cm,text centered, draw=black]
\tikzstyle{arrow} = [thick,->,>=stealth]
\newtheorem{thm}{Theorem}
\newtheorem*{thm*}{Theorem}
\newcommand{\setthmtag}[1]{
  \let\oldthethm\thethm
  \renewcommand{\thethm}{#1}
  \g@addto@macro\endthm{
    \addtocounter{thm}{-1}
    \global\let\thethm\oldthethm}
  }
\newtheorem{prop}[thm]{Proposition}
\newtheorem*{prop*}{Proposition}
\newtheorem{lemma}[thm]{Lemma}
\newtheorem*{lemma*}{Lemma}
\newtheorem{cor}[thm]{Corollary}
\newtheorem*{cor*}{Corollary}
\newtheorem*{cj*}{Conjecture}
\newtheorem{Def}[thm]{Definition}
\newtheorem*{Def*}{Definition}
\def\thmhead@plain#1#2#3{%
  \thmname{#1}\thmnumber{\@ifnotempty{#1}{ }\@upn{#2}}%
  \thmnote{ {\the\thm@notefont#3}}}
\let\thmhead\thmhead@plain
\theoremstyle{definition}
\newcommand{\bb}{\begin{equation}}
\newcommand{\bbb}{\begin{equation*}}
\newcommand{\ee}{\end{equation}}
\newcommand{\eee}{\end{equation*}}
\newcommand\ceil[1]{\left\lceil#1\right\rceil}
\newcommand{\eqt}[1]{\stackrel{\mathclap{\scriptsize \mbox{#1}}}{=}}
\newcommand{\leqt}[1]{\stackrel{\mathclap{\scriptsize \mbox{#1}}}{\leq}}
\newcommand{\geqt}[1]{\stackrel{\mathclap{\scriptsize \mbox{#1}}}{\geq}}
\newcommand{\ketbra}[1]{\ket{#1}\!\!\bra{#1}}
\newcommand{\sumno}{\sum\nolimits}
\newcommand{\G}{\mathrm{\scriptscriptstyle \, G}}
\newcommand{\id}{\mathds{1}}
\newcommand{\R}{\mathds{R}}
\newcommand{\C}{\mathds{C}}
\DeclareMathOperator{\Tr}{Tr}
\DeclareMathOperator{\rk}{rk}
\DeclareMathAlphabet{\pazocal}{OMS}{zplm}{m}{n}
\newcommand{\lsmatrix}{\left(\begin{smallmatrix}}
\newcommand{\rsmatrix}{\end{smallmatrix}\right)}
\newcommand\xxrightarrow[2][]{\mathrel{%
  \setbox2=\hbox{\stackon{\scriptstyle#1}{\scriptstyle#2}}%
  \stackunder[5pt]{%
    \xrightarrow{\makebox[\dimexpr\wd2\relax]{$\scriptstyle#2$}}%
  }{%
   \scriptstyle#1\,%
  }%
}}
\newcommand{\tends}[2]{\xxrightarrow[\! #2 \!]{\mathrm{#1}}}
\newcommand{\tendsn}[1]{\xxrightarrow[\! n\rightarrow \infty\!]{\mathrm{#1}}}
\newcommand*\rel@kern[1]{\kern#1\dimexpr\macc@kerna}
\newcommand*\widebar[1]{%
  \begingroup
  \def\mathaccent##1##2{%
    \rel@kern{0.8}%
    \overline{\rel@kern{-0.8}\macc@nucleus\rel@kern{0.2}}%
    \rel@kern{-0.2}%
  }%
  \macc@depth\@ne
  \let\math@bgroup\@empty \let\math@egroup\macc@set@skewchar
  \mathsurround\z@ \frozen@everymath{\mathgroup\macc@group\relax}%
  \macc@set@skewchar\relax
  \let\mathaccentV\macc@nested@a
  \macc@nested@a\relax111{#1}%
  \endgroup
}
\newcommand{\gie}{E^\G_\downarrow}
\newcommand{\gietilde}{\widetilde{E}^\G_\downarrow}
\newcommand{\gier}{E^{\G,\infty}_\downarrow}
\newcommand{\reof}{E_{F,2}^\G}
\newcommand{\kg}{K^{\G}}
\newcommand{\kgm}{K^{\G,\mathrm{\scriptscriptstyle\, M}}}
\newcommand{\kp}{K_{\mathcal{P}}}
\newcommand{\M}{\pazocal{M}}
\newcommand{\K}{\pazocal{K}}
\newcommand{\CC}{\pazocal{C}}
\newcommand{\F}{\pazocal{F}}
\newcommand{\Z}{\pazocal{Z}}
\DeclareSymbolFontAlphabet{\mathcalorig}{symbols}
\newcommand{\fakepart}[1]{
 \par\refstepcounter{part}
  \sectionmark{#1}
}
\begin{document}

\title{Fundamental limitations to key distillation from Gaussian states with Gaussian operations}

\author{Ludovico Lami \orcidlink{0000-0003-3290-3557}}
\email{ludovico.lami@gmail.com}
\affiliation{School of Mathematical Sciences and Centre for the Mathematics
and Theoretical Physics of Quantum Non-Equilibrium Systems (CQNE), University of Nottingham, University Park, Nottingham NG7 2RD, United Kingdom}
\affiliation{Institut f\"{u}r Theoretische Physik und IQST, Universit\"{a}t Ulm, Albert-Einstein-Allee 11, D-89069 Ulm, Germany}
\affiliation{QuSoft, Korteweg-de Vries Institute for Mathematics, and Institute for Theoretical Physics, University of Amsterdam, Science Park, 1098 XG Amsterdam, the Netherlands}

\author{Ladislav Mi\v{s}ta, Jr. \orcidlink{0000-0002-1979-7617}}
\email{mista@optics.upol.cz}
\affiliation{Department of Optics, Palack\'{y} University, 17. listopadu 12, 771 46 Olomouc, Czech Republic}

\author{Gerardo Adesso \orcidlink{0000-0001-7136-3755}}
\email{gerardo.adesso@nottingham.ac.uk}
\affiliation{School of Mathematical Sciences and Centre for the Mathematics
and Theoretical Physics of Quantum Non-Equilibrium Systems (CQNE), University of Nottingham, University Park, Nottingham NG7 2RD, United Kingdom}

\begin{abstract}
We establish fundamental upper bounds on the amount of secret key that can be extracted from quantum Gaussian states by using only local Gaussian operations, local classical processing, and public communication. For one-way public communication, or when two-way public communication is allowed but Alice and Bob first perform destructive local Gaussian measurements, we prove that the key is bounded by the R\'enyi-$2$ Gaussian entanglement of formation $E_{F,2}^{\mathrm{\scriptscriptstyle G}}$. Since the inequality is saturated for pure Gaussian states, this yields an operational interpretation of the R\'enyi-$2$ entropy of entanglement as the secret key rate of pure Gaussian states that is accessible with Gaussian operations and one-way communication. In the general setting of two-way communication and arbitrary interactive protocols, we argue that $2 E_{F,2}^{\mathrm{\scriptscriptstyle G}}$ is still an upper bound on the extractable key. 
We conjecture that the factor of $2$ is 
spurious, which would imply that $E_{F,2}^{\mathrm{\scriptscriptstyle G}}$ coincides with the secret key rate of Gaussian states under Gaussian measurements and two-way public communication. We use these results to prove a gap between the secret key rates obtainable with arbitrary versus Gaussian operations. Such a gap is observed for states produced by sending one half of a two-mode squeezed vacuum through a pure loss channel, in the regime of sufficiently low squeezing or sufficiently high transmissivity. Finally, for a wide class of Gaussian states that includes all two-mode states, we prove a recently proposed conjecture on the equality between $E_{F,2}^{\mathrm{\scriptscriptstyle G}}$ and the Gaussian intrinsic entanglement. The unified entanglement quantifier emerging from such an equality is then endowed with a direct operational interpretation as the value of a quantum teleportation game.
\end{abstract}


\maketitle
\fakepart{Main text}

\section{Introduction}

Quantum entanglement enables distant parties to generate a shared secret key by employing public discussion only~\cite{bennett1984quantum, Ekert91, Bennett1992}, a feat impossible in the classical setting~\cite{Shannon1949, Maurer1993} without additional assumptions on the information available to the eavesdropper~\cite{Wyner1975, CsiszarKoerner, Maurer1993, Maurer1994, AhlswedeCsiszar1}. In the last decades, quantum key distribution (QKD) has established itself as a fundamental primitive in quantum cryptography, thus gaining a central role in the flourishing quantum information science and technology~\cite{QKD-review}. Accordingly, the amount of secret key that can be extracted from a state is regarded as an entanglement measure of fundamental operational importance~\cite{Christandl-Master, squashed,Horodecki05, faithful}. \phantom{$\mathcal{P}$}

Continuous variable (CV) platforms, based on communication over quantum optical modes~\cite{Braunstein-review, HOLEVO-CHANNELS-2}, transmitted either via optical fibres or across free space~\cite{Pirandola2021, Sidhu2021}, have been of paramount importance in the demonstration of QKD. Recently, they witnessed impressive experimental progress~\cite{CanaryIslands-1, CanaryIslands-2, QKD-1200, QKD-intercontinental} and will likely play a major role in any future large-scale technological implementation of QKD~\cite{weedbrook12, CVQKD-review, Laudenbach2018}.

Paradigmatic examples of CV QKD protocols are those based on Gaussian states and Gaussian measurements~\cite{Ralph1999, Hillery2000, Reid2000, Gottesman2001, Cerf01, Grosshans02, Grosshans2003, Weedbrook04, GarciaPatron09, GarciaPatron-Thesis, Tserkis2020, Mountogiannakis2022}. The main advantage of this all-Gaussian paradigm~\cite{weedbrook12, adesso14, BUCCO} is that it is relatively experimentally friendly: coherent states~\cite{Schroedinger1926-coherent, Klauder1960, Glauber1963, Sudarshan1963}, squeezed states~\cite{Kennard1927, Stoler1970, Yuen1976, Slusher1985, Andersen2016, Schnabel2017}, homodyne and heterodyne detection~\cite{Braunstein-review, BUCCO} are nowadays relatively inexpensive ingredients, especially compared to general quantum states and operations. At the same time, it is still quite powerful in the context of QKD: in fact, it has been shown that any sufficiently entangled Gaussian state can be used, in combination with local Gaussian operations and public communication, to distil a secret key~\cite{Navascues2005, Navascues2005b, Rodo2007}. The effectiveness of the all-Gaussian paradigm in QKD is in stark contrast with its fundamentally limited performances at many other important tasks, such as universal quantum computation~\cite{Bartlett2002, Menicucci2006, Ohlinger2010, Mari-Eisert}, entanglement distillation~\cite{nogo1, nogo2, nogo3}, error correction~\cite{Niset2009}, and state transformations in general resource theories~\cite{G-resource-theories, assisted-Ryuji}.

In this paper, we investigate the operational effectiveness of the all-Gaussian framework in the context of QKD, establishing ultimate limitations on the amount of secret key that can be extracted from arbitrary multi-mode Gaussian states by means of local Gaussian operations, local classical processing, and public communication --- a quantity that we call \emph{Gaussian secret key}. The fact that the initial state is Gaussian and that the available \emph{quantum} operations are Gaussian does not mean that the state will be Gaussian at all stages of the protocol, essentially because local classical operations are entirely unrestricted. For example, Alice could decide to apply a random displacement to her system (say, either $+s$ or $-s$, with equal probabilities), making the resulting state non-Gaussian.

In a nutshell we prove that, while key distillation is indeed possible in the Gaussian setting, it is not as efficient as it could be if also non-Gaussian measurements were allowed. Our bounds are given in terms of a Gaussian entanglement measure known as the {\em R\'enyi-$2$ Gaussian entanglement of formation} (denoted $\reof$)~\cite{Wolf03, adesso05, AdessoSerafini, Lami16, LL-log-det, extendibility}, and thus endow this quantity with a sound operational meaning. In this context, after formalizing basic definitions on CV systems (Section~\ref{sec:CV}) and Gaussian key distillation protocols (Section~\ref{sec:protocols}),  we establish three main results in Section~\ref{sec:Main}. 

First, if only one-way public communication is allowed then the Gaussian secret key is at most $\reof$, with the inequality saturated for pure Gaussian states (Theorem~\ref{KG_ub_thm}). Secondly, we argue that the Gaussian secret key is anyway limited by $2\reof$ even in the most general setting where we allow two-way public communication (Theorem~\ref{KW_ub_thm}). Lastly, we show that the upper bound $\reof$ --- without the factor $2$ --- still holds even for two-way public communication, provided that Alice and Bob start the protocol with destructive Gaussian measurements (Theorem~\ref{KGM_ub_thm}).

The R\'enyi-$2$ Gaussian entanglement of formation $\reof$ is a monogamous and additive Gaussian entanglement monotone enjoying a wealth of properties~\cite{AdessoSerafini, Lami16, LL-log-det, extendibility}. Moreover, its computation amounts to a simple single-letter optimisation problem that is analytically solvable for all two-mode mixed states~\cite{Wolf03, adesso05}. Instrumental to our approach is the study of the connection between $\reof$ and another Gaussian entanglement measure known as the {\em Gaussian intrinsic entanglement} (denoted $\gie$)~\cite{GIE, GIE-PRA, GIE-PRA2}. In Section~\ref{sec:Conj} we prove that $\gie\leq\reof$ holds for all multi-mode Gaussian states and, more remarkably, we establish the recently conjectured~\cite{GIE} equality $\gie=\reof$ for the vast class of `normal' Gaussian states, which include in particular all two-mode Gaussian states (Theorem~\ref{equality normal thm}).

In Section~\ref{sec:EX} we explore further applications and interpretations of our results. In particular, in the one-way communication scenario we show  that $\reof$ is often smaller than the one-way distillable entanglement on the physically relevant class of states obtained by sending one half of a two-mode squeezed vacuum across a pure loss channel, entailing that restricting to Gaussian operations leads to a decrease of distillable key. We also provide a general operational interpretation for $\reof$ in a game-theoretical context based on quantum teleportation in the presence of a malicious jammer. We present our concluding remarks in Section~\ref{sec:End}.

\section{Continuous variable basics}\label{sec:CV}
We start by recalling the formalism of CV Gaussian states and measurements~\cite{Wang2007, weedbrook12, BUCCO, adesso14}; see Appendix~\ref{app:Gauss} for further details. 

\subsection{Phase space representations}
For a CV system made of $m$ harmonic oscillators (modes), the displacement operator associated with a vector $\xi\in \R^{2m}$ is defined by~\cite[Section~3.1]{BUCCO}
\bb
D(\xi)\coloneqq e^{i\xi^\intercal \Omega r}\, .
\label{displacement}
\ee
Note that $D(\xi)$ is a unitary operator. Furthermore, for all $\xi$ it holds that $D(\xi)^\dag = D(-\xi)$. The canonical commutation relations can be rewritten in the so-called Weyl form in terms of the displacement operators. They read~\cite[Eq.~(3.11)]{BUCCO}
\bb
D(\xi_1) D(\xi_2) = e^{-\frac{i}{2}\, \xi_1^\intercal \Omega \xi_2} D(\xi_1+\xi_2)\, .
\label{Weyl}
\ee

The characteristic function of an $m$-mode quantum state $\rho$ is the function $\chi_\rho:\R^{2m}\to \C$ defined by~\cite[Section~4.3]{BUCCO}
\bb
\chi_\rho(\xi)\coloneqq \Tr[\rho D(-\xi)]\, .
\label{chi}
\ee
Its Fourier transform is the Wigner function, in formula
\bb
W_\rho(u) \coloneqq \frac{1}{2^m \pi^{2m}}\int d^{2m}\xi\, \chi_\rho(\xi)\, e^{-i \xi^\intercal \Omega u}\, .
\label{Wigner}
\ee

\subsection{Gaussian states}\label{sec:Gauss_subsec}

Let $x_j$ and $p_j$ ($1\leq j\leq m$) be the canonical operators of an $m$-mode CV system, whose vacuum state we denote with $\ket{0}$. Defining the vector $r \coloneqq (x_1,\ldots, x_m, p_1, \ldots, p_m)^\intercal$, the canonical commutation relations can be written in matrix notation as $[r,r^\intercal ] = i \Omega$, where 
\bb
\Omega\coloneqq \begin{pmatrix} 0_m & \id_m \\ -\id_m & 0_m \end{pmatrix}\,.
\ee

A quadratic Hamiltonian is a self-adjoint operator of the form $H_q = \frac12 r^\intercal K r + t^\intercal r$, where $K>0$ is a $2m\times 2m$ real matrix, and $t\in \R^{2m}$. {\it Gaussian states} are by definition thermal states of quadratic Hamiltonians (and limits thereof). They are uniquely defined by their mean or displacement vector, expressed as $s \coloneqq \Tr[\rho\, r ] \in \mathds{R}^{2m}$~\cite{notemean}, and by their quantum covariance matrix (QCM), a $2m\times 2m$ real symmetric matrix given by $V\coloneqq \Tr \left[\rho \left\{r-s,(r-s)^\intercal\right\} \right]$.
Physical QCMs $V$ satisfy the Robertson--Schr\"odinger uncertainty principle \bb V \geq i \Omega, \ee hereafter referred to as {\em  bona fide} condition~\cite{simon94}, which implies $V>0$ and $\det V\geq 1$. Any real matrix $V$ that obeys the bona fide condition is the QCM of some Gaussian state, which is pure iff $\det V=1$. The Gaussian state with mean $s$ and QCM $V$ will be denoted by $\rho_\G[V,s]$. Note that mean vectors and QCMs compose with direct sum under tensor products, $\rho_\G[V,s]\otimes \rho_\G[W,t] = \rho_\G[V\oplus W,\, s\oplus t]$~\cite{notedirect}.

The characteristic function as well as the Wigner function of Gaussian states are in fact Gaussian. More precisely,
\begin{align}
\chi_{\rho_\G[V,s]}(\xi) &= \exp\left[ -\frac14 \xi^\intercal \Omega^\intercal V \Omega \xi + i s^\intercal \Omega \xi \right] , \label{chi_Gaussian} \\
W_{\rho_\G[V,s]}(u) &= \frac{2^m}{\pi^m\sqrt{\det V}} \exp\left[ - (u-s)^\intercal V^{-1} (u-s) \right] . \label{Wigner_Gaussian}
\end{align}
Note that $W_{\rho_\G[V,s]}$ is a Gaussian with mean $s$ and covariance matrix $V/2$. In particular, the differential entropy $H\left( W_{\rho_\G[V,s]}\right) \coloneqq - \int d^{2m} u\, W_{\rho_\G[V,s]}(u) \log_2 W_{\rho_\G[V,s]}(u)$ of the Wigner function associated with a Gaussian state $\rho_\G[V,s]$ evaluates to
\bb
H\left( W_{\rho_\G[V,s]}\right) = \frac12 \log_2 \det V + m \log_2 (\pi e) = M(V) + m \log_2 (\pi e) \, ,
\label{Wigner_entropy_Gaussian}
\ee
where we used the notation
\bb
M(V) \coloneqq \frac12 \log_2 \det V\, .
\label{M}
\ee
If $V$ is a QCM, then $\det V = \prod_{j=1}^m \nu_j^2(V)$ is the squared product of the symplectic eigenvalues. Since these are no smaller than $1$, we conclude that $\det V\geq 1$ and hence $M(V)\geq 0$. Therefore, for all Gaussian states it holds that
\bb
H\left( W_{\rho_\G[V,s]}\right) \geq m \log_2 (\pi e) \, .
\label{lower_bound_M_Gaussian}
\ee

\subsection{Gaussian measurements}\label{sec:GMP}

Quantum measurements are modelled by positive operator-valued measures (POVM) $E(dx)$ over a measure space $\pazocal{X}$, with outcome probability distribution being $p(dx)=\Tr[\rho\, E(dx)]$. 

{\it Gaussian measurements} over an $m$-mode system are  defined by the POVM \bb
E(d^{2m} x) = \rho_G[\Gamma, x]\, \frac{d^{2m}x}{(2\pi)^m}
\label{Gaussian measurement POVM}
\ee
on the measure space $\pazocal{X}=\R^{2m}$, 
with the QCM $\Gamma$ denoting the \emph{seed} of the measurement. We will sometimes represent this as the quantum--classical channel
\bb
\M_{\Gamma}^\G (\cdot) \coloneqq \int \frac{d^{2m}x}{(2\pi)^m}\, \Tr\left[ \rho_G[\Gamma, x] (\cdot) \right] \ketbra{x}\, ,
\label{G_meas_channel}
\ee
where the vectors $\ket{x}$ are formally orthonormal.\nocite{Barchielli2006,Holevo-Kuznetsova}\footnote{For a mathematically rigorous notion of quantum--classical channel, see Barchielli and Lupieri~\cite{Barchielli2006} and also Holevo and Kuznetsova~\cite{Holevo-Kuznetsova}.}

On a Gaussian state $\rho_\G[V, s]$, the Gaussian measurement in~\eqref{Gaussian measurement POVM} yields as outcome a random variable $X\in \R^{2m}$ whose probability density function reads~\cite[Section~5.4.4]{BUCCO}
\bb
p(x) = \frac{e^{-(x-s)^\intercal (V+\Gamma)^{-1}(x-s)}}{\pi^m \sqrt{\det (V+\Gamma)}}\, .
\label{Gaussian measurement PDF}
\ee
In other words, $X$ is normally distributed with mean $s$ and covariance matrix $(V+\Gamma)/2$.\footnote{The factor $1/2$  depends on the different conventions chosen for QCMs and classical covariance matrices. For example, we have defined the first entry of the QCM of an $m$-mode state $\rho$ with vanishing displacement vector to be $V_{11} = 2 \Tr[x_1^2\rho]$, which is twice the variance of the observable $x_1$ on $\rho$.}

If the measured system $A$ is part of a bipartite system $AB$ initially in a Gaussian state $\rho_\G [V_{AB},\, s_{AB}]$, the post-measurement state on $B$ conditioned on obtaining the outcome $x$ is again Gaussian, has QCM given by the {\it Schur complement}
\bb
V'_B = (V_{AB}+\Gamma_A)/(V_A+\Gamma_A)\, ,
\label{post_measurement_QCM}
\ee
and displacement vector that depends on $V_{AB}$, $s_{AB}$, and $x$ as reported in \cite[Section~5.4.5]{BUCCO}. Importantly, note that the post-measurement state of Gaussian measurements depends on the measurement outcome only through its mean vector and not through its QCM: indeed, the expression~\eqref{post_measurement_QCM} is independent of $x$.

From the above interpretation of the expression~\eqref{post_measurement_QCM} as the QCM of the reduced post-measurement state it immediately follows that $V'_B$ is also a QCM, i.e.\ it satisfies 
\begin{equation}
V'_B\geq i\Omega_B\, .
\label{elementary_pure_G_lemma}
\end{equation}
Moreover, one can also conclude that $V'_B$ must be pure if such are both $V_{AB}$ and $\Gamma_A$. These important facts can also be established directly by exploiting the properties of Schur complements reviewed in Appendix~\ref{Schur_subsec}.\footnote{Doing this is a useful exercise that the interested reader is encouraged to solve on their own.}

 The simplest unitary operations one can account for in the Gaussian formalism are so-called {\em Gaussian unitaries}, constructed as products of factors of the form $e^{-iH_q\tau}$, with $H_q$ a quadratic Hamiltonian. For a Gaussian unitary $\pazocal{U}$, the induced state transformation $\rho\mapsto \pazocal{U}\rho \pazocal{U}^\dag$ becomes $V\mapsto SVS^\intercal$ at the level of QCMs. Here $S$ is a $2m\times 2m$ symplectic matrix, satisfying $S\Omega S^\intercal = \Omega$.

A \emph{Gaussian measurement protocol} on a CV system $A$ conditioned on a random variable $U$ is a procedure of the following form: (i)~We append to $A$ a single-mode ancilla $R_1$ in the vacuum state; conditioned on $U$, we apply a Gaussian unitary to $AR_1$ and perform a Gaussian measurement with seed $\Gamma_{1}^U$ on the last $m_1$ modes of the resulting state (possibly, $m_1=0$), obtaining a random variable $X_1\in \R^{2m_1}$. Note that both the Gaussian unitary and $\Gamma_1^U$ may depend on $U$. The modes remaining after the measurement form a system that we denote with $A_1$. (ii)~We append to $A_1$ a single-mode ancilla $R_2$ in the vacuum state, and use $X_1$ together with $U$ to decide on a Gaussian unitary to apply to $A_1 R_2$ and on a Gaussian measurement with seed $\Gamma_2^{X_1U}$ to carry out on the last $m_2$-modes of the resulting state (possibly, $m_2=0$). (iii)~We continue in this way, until after $r$ rounds the protocol terminates. The output products are a random variable $X = (X_1,\ldots, X_r)\in \R^{2\sum_{i=1}^r m_i}$ (the \emph{measurement outcome}) and a quantum system $A_r$.\footnote{Note that allowing general Gaussian states for the ancillae and general non-deterministic Gaussian operations instead of Gaussian unitaries  does not lead to a wider set of protocols. In fact, recall that any Gaussian state can be prepared by applying a Gaussian unitary to the vacuum and discarding some modes, and that non-deterministic Gaussian operations can always be realised by appending ancillae in the vacuum state, acting with Gaussian unitaries, and performing Gaussian measurements on some of the modes~\cite[Sections~5.3--5.5]{BUCCO}.}

\subsection{Gaussian  entanglement measures}\label{sec:GEM106}
In this paper we will relate the {\it secret key} that can be distilled by means of Gaussian protocols to the {\it quantum correlations} contained in Gaussian states.
In general, operationally motivated correlation quantifiers for quantum states are usually based on the von Neumann entropy
\bb
S_1(\rho)\coloneqq -\Tr[\rho \log_2 \rho]\, ,
\label{entropy}
\ee
which is the correct quantum generalisation of the Shannon entropy for classical random variables. Other R\'enyi-$\alpha$ entropies, given for $\alpha\geq 1$ by
\bb
S_\alpha(\rho) \coloneqq \frac{1}{1-\alpha}\log_2 \Tr[\rho^\alpha]\, ,
\label{Renyi_entropy}
\ee
although mathematically important, are commonly thought not to have such a direct operational meaning. However, in the constrained Gaussian setting we study here our interest lies not in the correlations possessed by the state per se, but rather in that part of them that can be accessed by the local parties. Since we also assume that these are restricted to Gaussian measurements, we in fact want to look at the correlations displayed by the classical random variables that constitute the outcomes of those measurements. When the random variable $X$ models the outcome of a Gaussian measurement with seed $\sigma$ performed on the Gaussian state $\rho_\G[V,s]$, its Shannon differential entropy, generally defined by the formula $H(X) = - \int d^{2n}x\, p_X(x) \log_2 p_X(x)$, takes the form (cf.~\eqref{Wigner_entropy_Gaussian})
\bb
H(X) = \frac12 \log_2 \det (V+\sigma) + n \log_2 (\pi e)\, .
\label{differential entropy G outcomes}
\ee
This kind of expression, basically the log-determinant of a QCM, up to additive constants, resembles that appearing in the formula for the R\'enyi-$2$ entropy of the state,
\bb
S_2\left(\rho_\G[V,s]\right) = \frac12 \log_2 \det V = M(V)\, ,
\label{Renyi-2}
\ee
where $M$ has already been defined in~\eqref{M}. Although~\eqref{differential entropy G outcomes} and~\eqref{Renyi-2} are not identical, they share the same functional form. Hence, in some sense it is the R\'enyi-$2$ entropy, and not the von Neumann entropy, that is connected to the Shannon entropy of the experimentally accessible measurement outcomes, when those measurements are also Gaussian. For this precise reason, one can expect the R\'enyi-$2$ entropy to play a role in quantifying those correlations of Gaussian states that can be extracted via Gaussian measurements~\cite{LL-log-det}. In fact, quantifiers based on the R\'enyi-$2$ entropy and their applications have been extensively investigated~\cite{AdessoSerafini, Kor, Simon16, Lami16, GIE, LL-log-det}.

Let us start by introducing a simple correlation quantifier known as the \textit{classical mutual information} of the quantum state $\rho$~\cite{Terhal-Ep, locking}. It is formally given by 
\bb
I^c(A:B)_\rho \coloneqq \sup_{\M_A,\M_B} I(X:Y)\, ,
\ee
where $\M_A,\M_B$ are measurements on $A$ and $B$ with outcomes being the classical random variables $X$ and $Y$. When $\rho_{AB}=\rho_\G[V_{AB}, s_{AB}]$ is Gaussian, and $\M_A,\M_B$ are also restricted to be Gaussian measurements with seeds $\Gamma_A, \Gamma_B$, the maximal mutual information between the local outcomes becomes the \textit{Gaussian mutual information}, given by~\cite{Lada2011}
\begin{align}
I_M^c(A : B)_V &\coloneqq \sup_{\Gamma_A,\Gamma_B} I(X:Y)_Q = \sup_{\Gamma_A,\Gamma_B} I_M(A:B)_{V_{AB}+\Gamma_A\oplus \Gamma_B}\, , \nonumber \\
Q_{XY} &\coloneqq \left( \M_{\Gamma_A}^\G,\M_{\Gamma_B}^\G \right) \left(\rho_\G\left[ V_{AB}, 0\right] \right) ,
\label{IMc}
\end{align}
where the log-determinant mutual information of a bipartite QCM $V_{AB}$ is given by~\cite{LL-log-det}
\begin{align}
I_M(A:B)_V &\coloneqq M(V_A) + M(V_B) - M(V_{AB}) \nonumber \\
&= \frac12 \log_2 \frac{(\det V_A) (\det V_B)}{\det V_{AB}}\, .
\label{IM}
\end{align}
Proving~\eqref{IMc} using~\eqref{Gaussian measurement PDF} and~\eqref{differential entropy G outcomes} is an elementary exercise that is left to the reader. Its solution rests upon the fact that the conditional mutual information is a balanced entropic expression, and hence the `spurious' constant terms in~\eqref{differential entropy G outcomes} cancel out. 

While~\eqref{IMc} is difficult to compute in general, it is known that~\cite{giedkemode, Lada2011, GIE-PRA}
\bb
I_M^c(A:B)_\gamma = \frac12 I_M(A:B)_\gamma = M(\gamma_A)
\label{IMc pure states}
\ee
for all pure bipartite QCMs $\gamma_{AB}$. 
We present a self-contained proof of this fact in Appendix~\ref{sec properties IM}, Lemma~\ref{classical_mutual_info_pure_lemma}. 
Note that the last equality simply follows from the fact that the local reductions of a pure state all have the same R\'enyi entropies.

Moving on from total correlations to entanglement, we can rely on~\eqref{Renyi-2} and~\eqref{M} to form a version of the entanglement of formation called the \textit{\bfseries R\'enyi-2 Gaussian entanglement of formation}~\cite{AdessoSerafini}:
\bb
\reof (V_{AB}) \coloneqq \inf_{\substack{i\Omega_{AB}\leq \gamma_{AB}\leq V_{AB},\\[0.3ex] \text{$\gamma_{AB}$ pure}}} \frac12 \log_2 \det (\gamma_A)\, .
\label{reof}
\ee
This quantity obeys several properties, most notably it is faithful and {\it monogamous}: for all QCMs $V_{ABC}$, it holds that~\cite[Corollary~7]{Lami16}
\bb
\reof(V_{A:BC}) \geq \reof(V_{A:B}) + \reof(V_{A:C})\, ,
\label{reof monogamy}
\ee
where we use colons to signify the partition we are referring to. When combined with the fact that it comes from a convex roof construction, this implies that $\reof$ is also additive~\cite[Corollary~17]{LL-log-det}:
\bb
\reof \left( V_{AB}^{\oplus n} \right) = n\, \reof (V_{AB})\qquad \forall\ n\, .
\label{reof additivity}
\ee
Incidentally, both~\eqref{reof monogamy} and~\eqref{reof additivity} are easy corollaries of the identity~\cite[Theorem~15]{LL-log-det}
\bb
\reof (V_{AB}) =\frac12 \inf_{V_{ABC}} I_M(A:B|C)_V\, ,
\label{reof = sq}
\ee
where
\begin{align}
I_M(A:B|C)_V &\coloneqq M(V_{AC}) + M(V_{BC}) - M(V_C) - M(V_{ABC}) \nonumber \\
&= \frac12 \log_2 \frac{(\det V_{AC})(\det V_{BC})}{(\det V_C)(\det V_{ABC})}
\label{IM_conditional}
\end{align}
is the {\em log-determinant conditional mutual information}~\cite{LL-log-det}, and the infimum ranges over all extensions $V_{ABC}$ of $V_{AB}$, i.e.\ over all QCMs $V_{ABC}$ such that $\Pi_{AB}V_{ABC}\Pi_{AB}^\intercal=V_{AB}$.
Furthermore, the measure \eqref{reof} is known to coincide~\cite{LL-log-det} with a Gaussian version of the {\em squashed entanglement}~\cite{Tucci1999, squashed, Christandl2005, faithful, Takeoka14b}, and it can be analytically computed in a variety of cases of strong physical interest~\cite{Wolf03, adesso05}.

Following an entirely different path, a new entanglement quantifier for Gaussian states has been recently introduced~\cite{GIE, GIE-PRA, GIE-PRA2}. The \textit{\bfseries Gaussian intrinsic entanglement} $\gie(V_{AB})$ of a bipartite Gaussian state with QCM $V_{AB}$ is defined as the minimal \emph{intrinsic information}~\cite{AhlswedeCsiszar1, MaurerWolf, GisinRennerWolf2002, Christandl-Master, RennerWolf, Christandl2003-proceedings, ChristandlRenner, VV2005, Christandl2007} of the classical random variables obtained upon measuring it with Gaussian measurements, assuming that Eve holds a purification of it but her measurement and classical post-processing are also Gaussian. Denoting with $\gamma_{ABE}$ a purification of $V_{AB}$, we get
\bb
\gie( V_{AB}) \coloneqq \sup_{\Gamma_A, \Gamma_B} \inf_{\Gamma_E} I_M\left(A:B|E\right)_{\gamma_{ABE} + \Gamma_A\oplus \Gamma_B\oplus \Gamma_E}\, ,
\label{gie}
\ee
where $\Gamma_A$, $\Gamma_B$, and $\Gamma_E$ are arbitrary QCM on systems $A$, $B$, and $E$, respectively, and $I_M$ is defined in (\ref{IM_conditional}). It is an easy exercise to show that the objective function on the right-hand side of~\eqref{gie} coincides with the conditional mutual information of the triple of random variables generated by carrying out Gaussian measurements with seeds $\Gamma_A, \Gamma_B, \Gamma_E$ on the Gaussian state with QCM $V_{ABE}$. In formula,
\begin{align}
&I_M\left(A:B|E\right)_{\gamma_{ABE} + \Gamma_A\oplus \Gamma_B\oplus \Gamma_E} = I(X:Y|Z)_Q\, , \label{daje} \\
&Q_{XYZ} = \left( \M_{\Gamma_A}^\G \otimes \M_{\Gamma_B}^\G \otimes \M_{\Gamma_E}^\G \right) \left( \rho_\G\left[ V_{ABE},\, 0\right] \right) . \nonumber
\end{align}
The proof of~\eqref{daje} is entirely analogous to that of~\eqref{IMc}. One can show that~\eqref{gie} does not depend on the choice of the purification~$\gamma_{ABE}$ of~$V_{AB}$~\cite{GIE, GIE-PRA, GIE-PRA2}. In order to investigate the asymptotic setting, we will consider the regularisation of~\eqref{gie} as well, given by
\bb
\gier(V_{AB}) \coloneqq \liminf_{n\to\infty} \frac1n \gie\left( V_{AB}^{\oplus n} \right) .
\label{gier_SM}
\ee

It is also worth noticing that~\eqref{IM_conditional} can be cast into the form of an unconditional log-determinant mutual information~\eqref{IM} with the help of Schur complements. Namely, by iteratively applying Schur's determinant factorisation formula~\eqref{det fact} one can easily verify that~\cite[Eq.~(28)]{LL-log-det}
\bb
I_M(A:B|E)_{V_{ABE}} = I_M(A:B)_{V_{ABE}/V_E}\, .
\label{IM_conditional_Schur}
\ee


\section{Gaussian secret key distillation protocols}\label{sec:protocols}

We now consider a communication scenario where two separate parties, Alice and Bob, hold a large number $n$ of copies of a bipartite state $\rho_{AB}$ and want to exploit them to generate a secret key by employing only Gaussian local operations and public communication (GLOPC). It is always understood that we grant them access to local randomness, modelled by random variables that are independent of everything else. 

A generic {\bfseries GLOPC protocol} can be formalised as a quantum-to-classical channel from the bipartite CV system $AB$ to a set of classical alphabets $\K \K' \CC$ (with $\K$ and $\K'$ finite and identical) controlled by Alice, Bob, and the eavesdropper Eve, respectively. Such a protocol will thus be composed of the following steps: (i)~Alice performs a Gaussian measurement protocol on $A$ conditioned on some local random variable $U_1$. (ii)~She uses the measurement outcome $X_1$ together with $U_1$ to prepare a message $C_1$, which is sent to Bob and Eve. (iii)~Bob performs a Gaussian measurement protocol on $B$ conditioned on $C_1$ and on some other local random variable $V_1$. He uses the measurement outcome $Y_1$ together with $V_1$ and $C_1$ to prepare a message $C'_1$, which is sent to Alice and Eve. (iv)~After $2r$ back-and-forth rounds the communication ceases. Alice uses her local random variables $U_1, \ldots, U_r$, the measurement outcomes $X_1, \ldots, X_r$, and Bob's messages $C'_1,\ldots, C'_r$ to prepare a random variable $S$ stored in $\K$, that is, her share of the secret key. Bob does the same with his local random variables $V_1,\ldots, V_r$, his measurement outcomes $Y_1,\ldots, Y_r$, and Alice's messages $C_1,\ldots, C_r$, generating his share of the key $S'$ and storing it into $\K'$. 

In what follows, we will also consider two restricted classes of Gaussian protocols. First, the {\bfseries $\boldsymbol{1}$-GLOPC protocols}, in which public communication is permitted only in one direction, say from Alice to Bob (see Figure~\ref{1GLOPC_fig}). Second, the protocols that can be implemented with Gaussian local (destructive) measurements and public communication ({\bfseries GLMPC protocols}); these start with Alice and Bob making preliminary Gaussian measurements on their entire local subsystems, and then processing only the obtained classical variables with the help of two-way public communication. 

Unless otherwise specified, we will always assume that Eve has access to a purification of the initial quantum state of Alice and Bob and can intercept all publicly exchanged messages (denoted with $C$), storing them in her register $\CC$. 

\begin{figure}[t]
\includegraphics[scale=1]{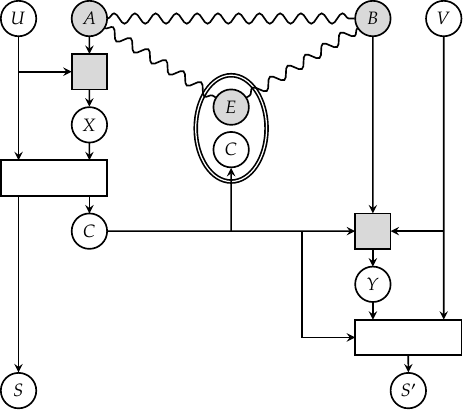}
\caption{A pictorial representation of a generic $1$-GLOPC protocol. (i) Alice performs a Gaussian measurement protocol on $A$ conditioned on some local random variable $U$, obtaining an outcome $X$.  (ii) She uses $X$ together with $U$ to prepare two random variables, $S$ (her share of the secret key) and $C$ (message, sent to Bob and intercepted by Eve).  (iii) Bob performs a Gaussian measurement protocol on $B$ conditioned on $C$ and on some other local random variable $V$, obtaining an outcome $Y$.  (iv) Bob uses $Y$ together with $V$ and $C$ to prepare a random variable $S'$ (his share of the secret key). In the picture, white circles stand for classical random variables, grey circles for quantum systems, and wavy lines for correlations (possibly quantum entanglement). Processes are represented by rectangles, either grey, if they involve quantum systems, or white, if they are purely classical. The central ellipse is Eve's system, which may contain a quantum part $E$ correlated with the initial state of Alice and Bob.}
\label{1GLOPC_fig}
\end{figure}

\subsection{Distillable secret key}
Generally speaking, we say that a number $R>0$ is an \emph{achievable rate} for secret key distillation from the state $\rho_{AB}$ with a class of protocols $\mathcal{P}$ if there exists transformations $\Lambda_n\in \mathcal{P}$ taking as inputs states on $A^nB^n$ and producing as outputs random variables $S_n,S'_n,C_n$ in classical registers $\K_n \K'_n \CC_n$, with the range of $S_n,S'_n$ being $\left\{1,\ldots, 2^{\ceil{Rn}}\right\}$, in such a way that~\cite{Christandl2007}
\bb
\lim_{n\to\infty}\ \inf_{\substack{\Lambda_n: A^n B^n \to \K_n \K'_n \CC_n,\\ \omega_{\CC_n E^n}}} \left\| \Lambda_n \left( \Psi_{ABE}^{\otimes n} \right) - \left(\kappa_{2^{\ceil{Rn}}} \right)_{\K_n \K'_n} \otimes \omega_{\CC_n E^n} \right\|_1 = 0\, .
\label{achievable rate}
\ee
Here, $\Psi_{ABE}$ denotes a purification of $\rho_{AB}$, $\kappa_N\coloneqq \sum_{i=1}^N \ketbra{ii}$ is an ideal secret key of length $\ceil{Rn}$, and the minimisation is over all classical-quantum states $\omega_{\CC_n E^n}$. The meaning of~\eqref{achievable rate} is that the key held by Alice and Bob is asymptotically decoupled from Eve's system, and is thus sufficiently secure to be used in applications~\cite{Koenig2007}.

\begin{Def} \label{SK_rate_def}
The \textbf{$\boldsymbol{\mathcal{P}}$-distillable secret key} $\kp$ of the state $\rho_{AB}$ is the supremum of all rates achievable with protocols in $\mathcal{P}$, 
$\kp (\rho_{AB}) \coloneqq \sup R>0$ such that Eq.~(\ref{achievable rate}) holds.
\end{Def}
In this paper we are naturally interested in the case where $\rho_{AB}$ is a Gaussian state with QCM $V_{AB}$, and the considered protocols are either GLOPC, or $1$-GLOPC, or GLMPC. The associated secret keys are easily seen to depend on $V_{AB}$ only; we will denote them with the shorthand notation $\kg_\leftrightarrow(V_{AB})$, $\kg_\to (V_{AB})$, and $\kgm_\leftrightarrow(V_{AB})$, respectively.

A particularly useful upper bound on $\kp (V_{AB})$ can be established by forcing Eve to apply a Gaussian measurement with pure seed of her choice before the beginning of the protocol, and to broadcast the obtained outcome to Alice and Bob together with the description of $\Gamma_E$. 
From \eqref{post_measurement_QCM} we know that in this case the state that Alice and Bob share is Gaussian and has QCM 
\bb
\tau_{AB} = (\gamma_{ABE}+\Gamma_E)/(\gamma_E+\Gamma_E)\, ,
\label{protocol_eq0}
\ee
where $\gamma_{ABE}$ is a purification of the QCM $V_{AB}$. Since Alice and Bob also know Eve's measurement outcome, they can easily apply local displacements and have their state's mean vanish. The protocols then proceed as detailed earlier in this Section.
Note that Eve's measurement outcome is now independent of Alice and Bob's state and is thus useless. We summarise this discussion by giving the following definition.

\begin{Def} \label{modified_SK_rate_def}
For $\mathcal{P}\in \left\{ \text{\emph{GLOPC, 1-GLOPC, GLMPC}} \right\}$, the \textbf{modified $\boldsymbol{\mathcal{P}}$-distillable secret key} associated with a Gaussian state with QCM $V_{AB}$, denoted $\widetilde{K}_{\mathcal{P}} (V_{AB})$ --- or more succinctly $\widetilde{K}_\leftrightarrow^\G(V_{AB}),\, \widetilde{K}^\G_\to (V_{AB}),\, \widetilde{K}^{\G,\mathrm{\scriptscriptstyle\, M}}_\leftrightarrow(V_{AB})$ --- is the supremum of all numbers $R>0$ such that
\begin{align}
\lim_{n\to\infty} \sup_{\Gamma_{E^n}} \inf_{\substack{ \Lambda_n: A^n B^n \to \K_n \K'_n \CC_n,\\ Q_{\CC_n}}} &\Big\| \Lambda_n \left( \rho_\G\left[ (\gamma_{ABE}^{\oplus n}+\Gamma_{E^n})/(\gamma_E^{\oplus n}+\Gamma_{E^n}),\, 0\right] \right) \nonumber\\ 
&\  - \left( \kappa_{2^{\ceil{Rn}}} \right)_{\K_n\K'_n} \otimes Q_{\CC_n} \Big\|_1 = 0\, ,
\label{modified_SK_rate}
\end{align}
where 
$Q_{\CC_n}$ is an arbitrary probability distribution over the alphabet $\CC_n$.
\end{Def}

It should be clear that the new class of protocols in Definition~\ref{modified_SK_rate_def} allows for a secret key distillation rate that is never smaller than that corresponding to the protocols in Definition~\ref{SK_rate_def}, because in the former case Eve is forced to lose access to her quantum system at an early stage. 
We give a formal proof of this below.

\begin{lemma} \label{K_tilde_ub_lemma}
For all $\mathcal{P}\in \left\{ \text{\emph{GLOPC, 1-GLOPC, GLMPC}} \right\}$ and all QCMs $V_{AB}$, it holds that 
\bb
\kp (V_{AB}) \leq \widetilde{K}_{\mathcal{P}}(V_{AB})\, ,
\label{K_tilde_ub}
\ee
where $\kp (V_{AB})$ and $\widetilde{K}_{\mathcal{P}} (V_{AB})$ are given in Definitions~\ref{SK_rate_def} and~\ref{modified_SK_rate_def}, respectively.
\end{lemma}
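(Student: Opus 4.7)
The plan is to simulate the original secret-key-distillation setting inside the modified one. If $R$ is $\mathcal{P}$-achievable for $\kp(V_{AB})$ as witnessed by a sequence of $\mathcal{P}$-protocols $(\Lambda_n)$ and states $(\omega^{(n)}_{\CC_n E^n})$, then for any Eve's seed $\Gamma_{E^n}$ I will exhibit a $\mathcal{P}$-protocol $\tilde\Lambda_n$ for the modified setting that achieves the same rate, by having Alice and Bob locally generate a classical surrogate for the outcome Eve would have obtained from her forced measurement and then undo the associated state modification via appropriate displacements.

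Concretely, let $p$ denote the centered Gaussian distribution of outcomes of Eve's measurement with seed $\Gamma_{E^n}$ on $E^n$ of $\Psi_{ABE}^{\otimes n}$; by Section~\ref{Gaussian_measurements_subsec}, the post-measurement state on $A^nB^n$ conditional on outcome $z$ has QCM $\tau^{\oplus n}$, with $\tau = (\gamma_{ABE}+\Gamma_E)/(\gamma_E+\Gamma_E)$, and a mean vector $\mu(z) = \mu_A(z) \oplus \mu_B(z)$ depending linearly on $z$. In $\tilde\Lambda_n$ Alice samples $Z \sim p$ using her local randomness, publicly broadcasts $Z$ to Bob (Alice-to-Bob communication, available in all three classes), and then Alice and Bob apply the local displacements $\mu_A(Z)$ and $\mu_B(Z)$ before executing the original $\Lambda_n$. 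For GLOPC and 1-GLOPC these displacements are realised as Gaussian unitaries; for GLMPC, where all quantum operations must precede the public communication, the same effect is achieved by first performing the prescribed destructive Gaussian measurements and then classically shifting the two outcomes by $\mu_A(Z)$ and $\mu_B(Z)$ once $Z$ is broadcast, exploiting the fact that a pre-measurement displacement of the input shifts the outcome distribution of any Gaussian measurement by the same vector.

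Since $\int dp(z)\, \rho_\G[\tau^{\oplus n}, \mu(z)] = \rho_{AB}^{\otimes n}$, linearity of $\Lambda_n$ implies that the joint distribution of $(S_n, S'_n, C_n, Z)$ produced by $\tilde\Lambda_n$ from the modified initial state $\rho_\G[\tau^{\oplus n},0]$ coincides with the joint distribution obtained by running $\Lambda_n$ on $\Psi_{ABE}^{\otimes n}$ and subsequently letting Eve measure $E^n$ with seed $\Gamma_{E^n}$. Monotonicity of the trace distance under that measurement channel on $E^n$ then yields
\[
\left\| \tilde\Lambda_n\big(\rho_\G[\tau^{\oplus n},0]\big) - \kappa_{2^{\lceil Rn\rceil}}\otimes Q_{\widetilde{\CC}_n}\right\|_1 \leq \left\|\Lambda_n\big(\Psi_{ABE}^{\otimes n}\big) - \kappa_{2^{\lceil Rn\rceil}} \otimes \omega^{(n)}_{\CC_n E^n} \right\|_1 ,
\]
with $\widetilde{\CC}_n$ the augmented public register containing $\CC_n$ together with the broadcast $Z$, and $Q_{\widetilde\CC_n}(c,z)=p(z)\,(\Tr_{E^n}\omega_{\CC_n E^n}^{(n)})(c)$. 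The right-hand side vanishes as $n\to\infty$ uniformly in $\Gamma_{E^n}$, so~\eqref{modified_SK_rate} is satisfied and rate $R$ is modified-achievable. The main obstacle is ensuring that $\tilde\Lambda_n$ remains a legitimate $\mathcal{P}$-protocol for GLMPC, in which pre-measurement public communication is forbidden, so Alice cannot instruct Bob to apply $\mu_B(Z)$ at the quantum level; this is precisely what the classical outcome-shift trick above circumvents, and apart from that the argument is entirely structural.
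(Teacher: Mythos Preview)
Your proposal is correct and follows the same construction as the paper's proof: simulate Eve's forced measurement by having Alice sample its outcome locally, broadcast it, and have both parties apply the corresponding displacements (realised as classical outcome shifts in the GLMPC case), then invoke data processing under the measurement channel on $E^n$. One slip: the ansatz $Q_{\widetilde{\CC}_n}(c,z)=p(z)\,(\Tr_{E^n}\omega_{\CC_n E^n}^{(n)})(c)$ is not what data processing actually produces; applying the measurement channel to $\kappa\otimes\omega^{(n)}_{\CC_n E^n}$ yields $\kappa\otimes \M_{\Gamma_{E^n}}^\G\!\big(\omega^{(n)}_{\CC_n E^n}\big)$, which in general retains correlations between $\CC_n$ and the outcome register, so your product formula need not make the left-hand side small. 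With the corrected choice $Q_{\widetilde{\CC}_n}=\M_{\Gamma_{E^n}}^\G\!\big(\omega^{(n)}_{\CC_n E^n}\big)$ --- which is exactly the paper's ansatz --- the displayed inequality follows and the rest of your argument goes through unchanged.
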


\begin{proof}
Let $R>0$ be an achievable rate for $\kp (V_{AB})$. Construct a sequence of protocols $\Lambda_n: A^n B^n \to \K_n \K'_n \CC_n$ of class $\mathcal{P}$, where $\K_n,\K'_n$ are two copies of an alphabet of size $2^{\ceil{Rn}}$, and a sequence of states $\omega_{\CC_n E^n}$, such that
\bb
\lim_{n\to\infty}\ \left\| \Lambda_n \left( \rho_{ABE}^{\otimes n} \right) - \left(\kappa_{2^{\ceil{Rn}}} \right)_{\K_n \K'_n} \otimes \omega_{\CC_n E^n} \right\|_1 = 0\, .
\label{K_tilde_ub_eq0}
\ee
For a fixed $n$, consider an arbitrary QCM $\Gamma_{E^n}$. For a vector $x\in \F_n \coloneqq \R^{2nm_E}$, with $m_E$ being the number of modes of $E$, let $p(x)$ denote the value on $x$ of the probability density function of the outcome of the Gaussian measurement with seed $\Gamma_{E^n}$ on Eve's share of the state $\rho_{ABE}^{\otimes n}$. Also, let $t_{A^nB^n}^x = (t_{A^n}^x,\, t_{B^n}^x)\in \R^{2n(m_A+m_B)}$ be the displacement vector of the post-measurement state on $A^n B^n$ corresponding to the outcome $x$.

We now construct a modified protocol $\Lambda'_n: A^n B^n\to \K_n \K'_n \CC_n \F_n$ of class $\mathcal{P}$, where the measurable space $\F_n = \R^{2n m_E}$ pertains to Eve. To do this, we distinguish two separate cases. If $\mathcal{P}\in \left\{\text{GLOPC, 1-GLOPC}\right\}$, then $\Lambda'_n$ proceeds as follows:
(i) Alice draws a local random variable $X$ on $\F_n$ distributed according to $p$, applies to $A^n$ the displacement unitary $D\left(t_{A^n}^X\right)$, and then continues with her (first) Gaussian measurement protocol as prescribed by $\Lambda_n$. (ii)
 During the (first) round of communication, Alice sends to Bob and Eve not only the message originally prescribed by $\Lambda_n$, but also the random variable $X$.
(iii) Before continuing with his (first) Gaussian measurement protocol dictated by $\Lambda_n$, Bob applies a displacement unitary $D\left(t_{B^n}^X\right)$ to his share of the system.
(iv)  The protocol continues with further communication rounds (if $\mathcal{P}=\text{GLOPC}$) or directly with key generation (if $\mathcal{P}=\text{1-GLOPC}$) as prescribed by $\Lambda_n$.

If instead $\mathcal{P} = \text{GLMPC}$, the modified protocol $\Lambda'_n$ is even simpler:
(i) Alice and Bob apply global Gaussian measurements to their entire subsystems as dictated by $\Lambda_n$, obtaining measurement outcomes $Z_n$ and $W_n$, respectively.
(ii) Before preparing her first message for Bob, Alice draws a local random variable $X$ on $\F_n$ distributed according to $p$ and translates $Z_n$ by $t_{A^n}^X$.
(iii) Alice then sends to Bob not only the message originally prescribed by $\Lambda_n$, but also $X$.
(iv) Before preparing his first message for Alice, Bob translates $W_n$ by $t_{B^n}^X$.
(v) The protocol continues with further communication rounds and then with key generation as prescribed by $\Lambda_n$.

It is not too difficult to verify that in all three cases
\bb
\begin{aligned}
\Lambda'_n(\cdot) =&\ \int d^{2nm_E}x\ p(x)\, \left(\Lambda_n\circ \mathcal{D}(t_{A^nB^n}^x)\right) \left( \cdot \right) \otimes \ketbra{x}_{\F_n}\, , \\
\mathcal{D}(t_{A^nB^n}^x) (\cdot) \coloneqq&\ D(t_{A^n}^x) \otimes D(t_{B^n}^x)\, (\cdot)\, D(-t_{A^n}^x) \otimes D(-t_{B^n}^x)\, ,
\end{aligned}
\label{K_tilde_ub_eq1}
\ee
with the system $\F_n$ storing $X$ being on Eve's side.
Let us now estimate the figure of merit in~\eqref{modified_SK_rate} for this protocol. We have that
\begin{align*}
&\inf_{Q_{\CC_n \F_n}} \big\| \Lambda'_n \left( \rho_\G\left[ \left(\gamma_{ABE}^{\oplus n}\!+\!\Gamma_{E^n}\right)\Big/\left(\gamma_E^{\oplus n}\!+\!\Gamma_{E^n}\right),\, 0\right] \right) - \kappa_{2^{\ceil{Rn}}} \otimes Q_{\CC_n\F_n} \big\|_1 \\
&\eqt{1} \inf_{Q_{\CC_n \F_n}} \bigg\| \int d^{2nm_E}x\ p(x) \left(\Lambda_n\!\circ\! \mathcal{D}(t_{A^nB^n}^x)\right) \\ & \qquad\qquad \times \left( \rho_\G\left[ \left(\gamma_{ABE}^{\oplus n}\!+\!\Gamma_{E^n}\right)\Big/\left(\gamma_E^{\oplus n}\!+\!\Gamma_{E^n}\right),\, 0\right] \right) \\ & \qquad\qquad \otimes \ketbra{x}_{\F_n}  - \kappa_{2^{\ceil{Rn}}} \otimes Q_{\CC_n \F_n} \bigg\|_1  \\
&\leqt{2} \bigg\| \int d^{2nm_E}x\, \Big( p(x)\, \Lambda_n \left( \rho_\G\left[ \left(\gamma_{ABE}^{\oplus n}\!+\!\Gamma_{E^n}\right)\Big/\left(\gamma_E^{\oplus n}\!+\!\Gamma_{E^n}\right),\, t_{A^nB^n}^x \right] \right) \\ &\qquad \otimes \ketbra{x}_{\F_n} - \kappa_{2^{\ceil{Rn}}} \otimes \Tr_{E^n}\! \left[ \omega_{\CC_n E^n} \rho_\G\!\left[ \Gamma_{\!E^n}, x\right] \right] \otimes \ketbra{x}_{\F_n} \Big) \bigg\|_1 \\
&\leqt{3} \big\| \Lambda_n \left( \rho_\G\left[ \gamma_{ABE}^{\oplus n},\, 0 \right] \right) - \kappa_{2^{\ceil{Rn}}} \otimes \omega_{\CC_n E^n} \big\|_1.
\end{align*}
Here: in~1 we used~\eqref{K_tilde_ub_eq1}; in~2 we let the displacement act on the Gaussian state and considered the ansatz
$Q_{\CC_n\F_n} = \int d^{2nm_E}x\ \Tr_{E^n}\! \left[ \omega_{\CC_n E^n} \rho_\G\!\left[ \Gamma_{\!E^n}, x\right] \right] \otimes \ketbra{x}_{\F_n}$, which is nothing but the probability distribution obtained by making the Gaussian measurement with seed $\Gamma_{E^n}$ on $\omega_{\CC_n\F_n}$; finally, 3~follows from the data processing inequality for the trace norm. Taking the supremum over $\Gamma_{E^n}$  yields
$\sup_{\Gamma_{E^n}} \inf_{Q_{\CC_n \F_n}} \Big\| \Lambda'_n \left( \rho_\G\left[ \left(\gamma_{ABE}^{\oplus n}\!+\!\Gamma_{E^n}\right)\Big/\left(\gamma_E^{\oplus n}\!+\!\Gamma_{E^n}\right),\, 0\right] \right) - \kappa_{2^{\ceil{Rn}}} \otimes Q_{\CC_n\F_n} \Big\|_1 
 \leq \left\| \Lambda_n \left( \rho_\G\left[ \gamma_{ABE}^{\oplus n},\, 0 \right] \right) - \kappa_{2^{\ceil{Rn}}} \otimes \omega_{\CC_n E^n} \right\|_1 \tendsn{} 0$,
where  we used~\eqref{K_tilde_ub_eq0}. In light of Definition~\ref{modified_SK_rate_def}, this shows that $R$ is also an achievable rate for $\widetilde{K}_{\mathcal{P}}(V_{AB})$,  concluding the proof.
\end{proof}

\section{Bounds to Gaussian secret key distillation}\label{sec:Main}

We now present our main results establishing fundamental upper bounds on the secret key that can be distilled by means of the Gaussian protocols introduced in Section~\ref{sec:protocols}. To keep the presentation accessible, some auxiliary results and more technical  derivations will be deferred to Appendices.

\subsection{One-way public communication} \label{sec:Tm1}

As announced, our first result is a bound on the 1-GLOPC distillable secret key of an arbitrary Gaussian state.

\begin{thm} \label{KG_ub_thm}
For all QCMs $V_{AB}$, it holds that
\bb
\kg_\to (V_{AB}) \leq \reof(V_{AB})\, .
\label{KG_ub}
\ee
If $V_{AB}=\gamma_{AB}$ is pure, then~\eqref{KG_ub} is tight, i.e.
\bb
\kg_\to(\gamma_{AB}) = \reof(\gamma_{AB}) = \frac12 \log_2 \det (\gamma_A)\, .
\label{reof=KG_pure}
\ee
\end{thm}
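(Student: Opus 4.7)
The plan is to establish the inequality $\kg_\to(V_{AB}) \leq \reof(V_{AB})$ via a Gaussian convex-decomposition argument that reduces the mixed case to the pure one, and then to prove the matching lower bound in the pure case by an explicit $1$-GLOPC protocol.

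For the upper bound, I fix any pure QCM $\gamma_{AB}$ with $i\Omega_{AB}\leq \gamma_{AB}\leq V_{AB}$. The PSD condition $V_{AB}-\gamma_{AB}\geq 0$ is precisely what is needed to represent $\rho_\G[V_{AB},0]$ as a Gaussian mixture of displaced copies of the pure state $\rho_\G[\gamma_{AB},0]$,
\begin{equation*}
\rho_\G[V_{AB},0] \;=\; \int_{\R^{2m}} p(d)\, \rho_\G[\gamma_{AB}, d]\, d^{2m}d,
\end{equation*}
where $p(d)$ is a centered (possibly degenerate) Gaussian density with covariance proportional to $V_{AB}-\gamma_{AB}$. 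This exhibits the classical-quantum extension $\sigma_{ABE'} \coloneqq \int p(d)\, \rho_\G[\gamma_{AB},d]\otimes \ketbra{d}_{E'}\, d^{2m}d$ of $\rho_\G[V_{AB}]$, which is obtained from any pure purification $\Psi_{ABE}$ by a quantum channel acting on $E$. In the security game~\eqref{achievable rate}, an eavesdropper holding $\Psi_{ABE}$ is therefore no weaker than one holding $\sigma_{ABE'}$, so the $1$-GLOPC key rate of $V_{AB}$ against the former is at most its rate against the latter.

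Conditional on $D=d$, however, Alice and Bob share the pure state $\rho_\G[\gamma_{AB}, d]$, and $d$ is public classical information. By translation covariance of GLOPC protocols --- local displacements are Gaussian unitaries that Alice and Bob can implement once $d$ has been announced --- the conditional one-way Gaussian key rate reduces, for every $d$, to $\kg_\to(\rho_\G[\gamma_{AB},0])$; averaging over $d$ yields the same quantity. Combining the two reductions gives $\kg_\to(V_{AB}) \leq \kg_\to(\gamma_{AB})$, and taking the infimum over admissible pure $\gamma_{AB}$ delivers $\kg_\to(V_{AB})\leq \reof(V_{AB})$ by definition of $\reof$. When $V_{AB}=\gamma_{AB}$ is pure, the only admissible choice in~\eqref{reof} is $\gamma_{AB}=V_{AB}$ itself, so the upper bound specialises to $\kg_\to(\gamma_{AB})\leq \tfrac12\log_2\det\gamma_A$.

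For the matching achievability on pure states, I would construct an explicit $1$-GLOPC protocol: Alice performs a rank-one Gaussian (heterodyne-type) measurement on her modes, producing a Gaussian-distributed outcome $X$; conditional on $X$, Bob's reduced state is a displaced pure Gaussian with known mean linear in $X$, and he carries out a matched Gaussian measurement followed by classical reconciliation and privacy amplification. Because R\'enyi-$2$ entropies of Gaussian states and of Gaussian probability distributions both reduce to log-determinants, the resulting Devetak--Winter--type rate evaluates cleanly to $\tfrac12\log_2\det\gamma_A$, meeting the upper bound and completing~\eqref{reof=KG_pure}. The main obstacle I anticipate lies in the first step --- making the ``Eve holds the classical register $E'$'' reduction fully rigorous within the operational definition~\eqref{achievable rate}, in particular verifying that the trace-norm decoupling condition passes cleanly from the pure purification to the classical extension uniformly over the protocol family $\{\Lambda_n\}$, and packaging the translation-covariance step so as to produce a single rate-$R$ protocol rather than a $d$-indexed family. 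The explicit Gaussian achievability is comparatively routine once the correct measurement seed has been identified.
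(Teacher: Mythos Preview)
Your reduction from the mixed case to the pure case via the Gaussian convex decomposition is sound in spirit (and indeed mirrors the paper's device of having Eve perform a Gaussian measurement on her purification, which likewise collapses $V_{AB}$ to a pure QCM $\tau_{AB}$). The gap is in the next sentence: from $\kg_\to(V_{AB})\leq \kg_\to(\gamma_{AB})$ you conclude ``taking the infimum over admissible pure $\gamma_{AB}$ delivers $\kg_\to(V_{AB})\leq \reof(V_{AB})$ by definition of $\reof$.'' But $\reof(V_{AB})=\inf_{\gamma}M(\gamma_A)$, not $\inf_{\gamma}\kg_\to(\gamma_{AB})$. Your argument therefore presupposes the pure-state inequality $\kg_\to(\gamma_{AB})\leq M(\gamma_A)$, which is exactly the nontrivial content of the theorem; the line ``the upper bound specialises to $\kg_\to(\gamma_{AB})\leq \tfrac12\log_2\det\gamma_A$'' is circular. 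Nothing in your proposal explains why the Gaussian restriction forces the R\'enyi-$2$ entropy $M(\gamma_A)$ rather than the von Neumann entropy $S_1(\gamma_A)$ to appear as the ceiling --- and for a pure state Eve is trivial, so no obvious security argument supplies this.

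The paper fills precisely this gap by a Wigner-function simulation combined with Maurer's classical converse. After Alice's Gaussian measurement protocol (conditioned on her outcomes $U,X$), Bob's post-measurement state is Gaussian, hence has a positive Wigner function and can be replaced by a \emph{classical} random variable $W_B$ on phase space. Maurer's bound then gives $H(S)\leq I(S:C)+H(S|S')+I(UX:W_B|C)$, and the last term is controlled via
\[
I(UX:W_B|C)\;\leq\; H(W_B)-H(W_B|UX)\;\leq\; \big(M(\tau_B)+m_B\log_2(\pi e)\big)-m_B\log_2(\pi e)\;=\;M(\tau_B),
\]
using that the conditional Wigner function, being that of a Gaussian state, has differential entropy at least $m_B\log_2(\pi e)$. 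This is where the log-determinant (i.e.\ R\'enyi-$2$) quantity enters, and it is the step your outline is missing. Optimising over Eve's pure seed and invoking the equality of $\reof$ with the Gaussian squashed entanglement (together with additivity) then yields~\eqref{KG_ub}. Your achievability sketch for~\eqref{reof=KG_pure} is essentially correct, though the optimal seeds are (limits of) homodynes rather than heterodyne.
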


Since the right-hand side of~\eqref{KG_ub} does not depend on the direction of communication,~\eqref{KG_ub} holds irrespectively of whether we consider Alice-to-Bob or Bob-to-Alice public communication, as long as we do not allow both. The protocol achieving~\eqref{reof=KG_pure} consists in the application of local homodyne measurements followed by a classical secret key distillation protocol~\cite{Maurer1994, AhlswedeCsiszar1}.
To prove \eqref{KG_ub}, we will make use of the modified secret key $\widetilde{K}^\G_\to(V_{AB})$ introduced in  Definition~\ref{modified_SK_rate_def} and establish the following chain of inequalities, $\kg_\to (V_{AB}) \leq \widetilde{K}^\G_\to(V_{AB}) \leq \reof(V_{AB})$,
where the leftmost one follows from Lemma~\ref{K_tilde_ub_lemma}.

\begin{figure}[t]
\includegraphics[scale=1]{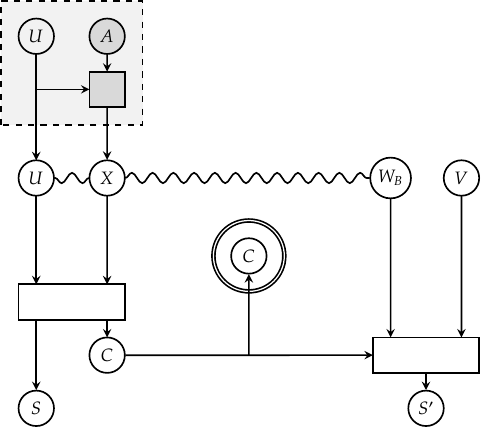}
\caption{The modified protocol used in the proof of Theorem~\ref{KG_ub_thm}. Once the Gaussian protocol on $A$ has been carried out, we can think of the $B$ system as formally simulated by the classical random variable $W_B$ whose distribution conditioned on $UX$ coincides with the Wigner function of the (Gaussian) reduced state on $B$ corresponding to the recorded values of $U$ and $X$.}
\label{1GLOPC_modified_fig}
\end{figure}

\begin{proof}
Let $\gamma_{ABE}$ be a purification of $V_{AB}$, and let us consider a 1-GLOPC protocol applied on the corresponding Gaussian state.
Let us look at the situation right before Bob's measurement (see Figure~\ref{1GLOPC_fig}). Almost all `quantumness' has disappeared, in the sense that the only party still holding a quantum state is Bob. From the point of view of Alice, who knows the value of $U$, the Gaussian measurement protocol she has applied in the first step, and the associated outcome $X$, Bob's state $\rho_{B|U,X}$ is Gaussian. 

We now claim that this situation can be simulated by an entirely classical system. Namely, let $W_B$ be a random variable on the phase space $\R^{2m_B}$ of Bob's system whose probability distribution conditioned on the values of $U$ and $X$ coincides with the Wigner function of $\rho_{B|U,X}$, which is everywhere positive because $\rho_{B|U,X}$ is Gaussian. Noting that: (a)~the vacuum itself has positive Wigner function; (b)~any Gaussian unitary amounts to a linear transformation at the phase space level, and thus preserves the positivity of the Wigner function; and (c)~the POVM elements describing Gaussian measurements also have positive Wigner function, by inspecting the definition of Gaussian measurement protocol (Section~\ref{sec:GMP}) we see that step~(iii) in Figure~\ref{1GLOPC_fig} can be simulated by purely classical operations on $W_B$, $C$, and $V$. We are therefore in the situation depicted in Figure~\ref{1GLOPC_modified_fig}.

We can proceed by following to a certain extent the technique introduced by Maurer~\cite{Maurer1993}. In what follows, we will compute conditional entropies and mutual informations between random variables that are both discrete ($U$, $S$, $C$, $S'$, $V$) and continuous ($X$ and $W_B$). In the latter case it is understood that we employ the differential entropy (measured in bits) instead of the discrete one, although we will denote both with the symbol $H$ for simplicity. Remember that a linear entropy inequality involving differential entropies is valid if and only if its discrete counterpart is `balanced' and valid~\cite{Chan-balanced}. Our derivation rests only upon balanced inequalities. We start by writing
\bb
H(S) = I(S:C) + H(S|C)\, .
\label{protocol_eq1}
\ee
Now,
\bb
\begin{aligned}
H(S|C) &= H(SUX|C) - H(UX|SC) \\
&= H(UX|C) - H(UX|SC) \\
&\leqt{1} H(UX|C) - H(UX|SCVW_B) \\
&= H(UX|C) - H(UXS|CVW_B) + H(S|CVW_B) \\
&\eqt{2} H(UX|C) - H(UX|CVW_B) + H(S|CVW_B) \\
&= I(UX:VW_B|C) + H(S|CVW_B) \\
&\leqt{3} I(UX:VW_B|C) + H(S|S') \\
&\eqt{4} I(UX:W_B|C) + H(S|S')\, .
\end{aligned}
\label{protocol_eq2}
\ee
Here: 1~comes from data processing; 2~is a consequence of the fact that $S$ is a deterministic function of $U$ and $X$; 3~is again data processing, using the fact that $S'$ is a deterministic function of $C$, $V$, and $W_B$; finally, 4~uses that $V$ is independent of $U$, $X$, and $W_B$, even conditioned on $C$. Now, observe that
\bb
\begin{aligned}
I(UX:W_B|C) &= H(W_B|C) - H(W_B|UXC) \\
&= H(W_B|C) - H(W_B|UX) \\
&\leqt{5} H(W_B) - H(W_B|UX) \\
&\leqt{6} H(W_B) - m_B \log_2 (\pi e) \\
&\eqt{7} M(\tau_B)\, .
\end{aligned}
\label{protocol_eq3}
\ee
Note that 5~is just the positivity of the mutual information $I(W_B:C)\geq 0$, 6~is a rephrase of~\eqref{lower_bound_M_Gaussian}, and 7~comes from~\eqref{Wigner_entropy_Gaussian}. Combining~\eqref{protocol_eq1}--\eqref{protocol_eq3} yields
\bb
H(S) \leq I(S:C) + H(S|S') + M(\tau_B) \, .
\label{protocol_eq4}
\ee


We now consider a sequence of $1$-GLOPC protocols $\Lambda_n: A^n B^n \to \K_n \K'_n \CC_n$ with rate $R$ as in Definition~\ref{modified_SK_rate_def}. Pick numbers $\epsilon_n>0$ such that
\begin{align}
\epsilon_n > &\sup_{\Gamma_{E^n}}\ \inf_{Q_{\CC_n}} \Big\| \Lambda_n \left( \rho_\G\left[ \left(\gamma_{ABE}^{\oplus n}+\Gamma_{E^n}\right)\Big/ \left(\gamma_E^{\oplus n}+\Gamma_{E^n}\right),\, 0\right] \right) \nonumber \\
&- \left( \kappa_{2^{\ceil{Rn}}} \right)_{\K_n \K'_n} \otimes Q_{\CC_n} \Big\|_1
\label{protocol_eq5}
\end{align}
and
\bb
\lim_{n\to\infty} \epsilon_n =0\, .
\label{protocol_eq6}
\ee
Now, consider a fixed sequence of QCMs $\Gamma_{E^n}$. Set
\bb
\tau_{A^n B^n} \coloneqq \left(\gamma_{ABE}^{\oplus n}+\Gamma_{E^n}\right)\Big/\left(\gamma_E^{\oplus n}+\Gamma_{E^n}\right) ,
\label{protocol_eq7}
\ee
denote by $S_n,S'_n$ the keys produced by the protocol $\Lambda_n$ run with input $\rho_\G\left[  \tau_{A^nB^n},\, 0\right]$, and let $C_n$ be the message exchanged. Applying~\eqref{protocol_eq4}, we see that
\bb
H(S_n) \leq I(S_n:C_n) + H(S_n|S'_n) + M\left(\tau_{B^n}\right) .
\label{protocol_eq8}
\ee

By tracing away $E_n$, from~\eqref{protocol_eq5} we deduce that the probability distribution $P_{S_n S'_n}$ is at least $\epsilon_n$-close in total variation norm to that of two perfectly correlated copies of the key. In turn, this ensures that $\Pr\{S_n\neq S'_n\} < \epsilon_n$. Hence, Fano's inequality~\cite{FANO} gives
\bb
H(S_n|S'_n) < h_2(\epsilon_n) + \epsilon_n \log_2 \left(|S_n|-1\right) \leq h_2(\epsilon_n) + \ceil{R n} \epsilon_n ,
\label{protocol_eq9}
\ee
where $h_2(x)\coloneqq -x\log_2 x - (1-x) \log_2 (1-x)$ is the binary entropy, and we remembered that $|S_n|=2^{\ceil{Rn}}$.

The same reasoning guarantees that $P_{S_n}$ is at least $\epsilon_n$-close in total variation norm to the uniform distribution over an alphabet of size $2^{\ceil{Rn}}$, whose entropy (measured in bits, as usual) is naturally given by $\ceil{Rn}$. The Fannes--Auedenaert inequality~\cite{Fannes1973, Audenaert2007} thus guarantees that
\begin{align}
H(S_n) &\geq \ceil{Rn} - \frac{\epsilon_n}{2} \log_2 \left( 2^{\ceil{Rn}} - 1\right) - h_2\left(\frac{\epsilon_n}{2}\right) \nonumber \\
&\geq \ceil{Rn} - \frac{\epsilon_n}{2} \ceil{Rn} - h_2\left(\frac{\epsilon_n}{2}\right) .
\label{protocol_eq10}
\end{align}

The last consequence of~\eqref{protocol_eq5} we are interested in can be deduced by tracing away the $K'$ system. By doing so we see that the joint random variable $S_n C_n$ is at least $\epsilon_n$-close in total variation norm to a pair $\widetilde{S}_n \widetilde{C}_n$ of independent random variables such that $\widetilde{S}_n$ is uniformly distributed over an alphabet of size $2^{\ceil{Rn}}$. We deduce that
\bb
\begin{aligned}
I(S_n:C_n) &= H(S_n) - H(S_n|C_n) \\
&\leqt{8} H\big(\widetilde{S}_n\big) - H(S_n|C_n) \\
&\eqt{9} H\big(\widetilde{S}_n \big| \widetilde{C}_n\big) - H(S_n|C_n) \\
&\leqt{10} \frac{\epsilon_n}{2} \log_2 \left(2^{\ceil{Rn}}-1\right) + h_2\left(\frac{\epsilon_n}{2}\right) \\
&\leq \frac{\epsilon_n}{2} \ceil{Rn} + h_2\left(\frac{\epsilon_n}{2}\right) .
\end{aligned}
\label{protocol_eq11}
\ee
The above derivation is justified as follows: in~8 we observed that $H\big(\widetilde{S}_n\big)=\ceil{Rn} \geq H(S_n)$; in~9 we used the fact that $\widetilde{S}_n$ and $\widetilde{C}_n$ are independent; finally, in~10 we exploited the asymptotic continuity of the conditional entropy~\cite{Alicki-Fannes, tightuniform, Alhejji2019, Mark2020}.

Combining~\eqref{protocol_eq7}--\eqref{protocol_eq11} yields the bound
\begin{align}
\ceil{Rn} < &2\epsilon_n \ceil{Rn} + 2\, h_2\left( \frac{\epsilon_n}{2} \right) + h_2(\epsilon_n) \nonumber \\
&+ M\left( \left(\gamma_{AE}^{\oplus n}+\Gamma_{E^n}\right)\Big/\left(\gamma_E^{\oplus n}+\Gamma_{E^n}\right) \right) .
\label{protocol_eq12}
\end{align}
Since this holds for all pure QCMs $\Gamma_{E^n}$, we can take the infimum of the last addend over $\Gamma_{E^n}$. Note that
\begin{align}
\inf_{\text{$\Gamma_{E^n}$ pure QCM}} &M\left( \left(\gamma_{AE}^{\oplus n}+\Gamma_{E^n}\right)\Big/\left(\gamma_E^{\oplus n}+\Gamma_{E^n}\right) \right) \nonumber
\\ &\eqt{11} \frac12\, \inf_{\text{$\Gamma_{E^n}$ pure QCM}} I_M\left(A^n:B^n\right)_{\left(\gamma_{ABE}^{\oplus n}+\Gamma_{E^n}\right)\big/\left(\gamma_E^{\oplus n}+\Gamma_{E^n}\right)} \nonumber \\
&\eqt{12} \reof\left(V_{AB}^{\oplus n}\right) \nonumber \\
&\eqt{13} n\, \reof\left(V_{AB}\right) .
\label{protocol_eq13}
\end{align}
Here, 11~is a consequence of~\eqref{IM pure}, while 12~is an application of the non-trivial fact that the R\'enyi-$2$ Gaussian entanglement of formation coincides with the R\'enyi-2 Gaussian squashed entanglement for all Gaussian states~\cite[Theorem~5 and Remark~2]{LL-log-det} (cf.~\eqref{reof = sq}; remember that $\gamma_{ABE}$ is a purification of $V_{AB}$). Finally, 13~follows from the additivity of the R\'enyi-2 Gaussian entanglement of formation~\cite[Corollary~1]{LL-log-det}. Therefore, optimising~\eqref{protocol_eq12} over pure QCMs $\Gamma_{E^n}$ and using~\eqref{protocol_eq13} yields
\bb
\ceil{Rn} < 2\epsilon_n \ceil{Rn} + 2\, h_2\left( \frac{\epsilon_n}{2} \right) + h_2(\epsilon_n) + n\, \reof\left(V_{AB}\right) .
\label{protocol_eq14}
\ee
Dividing by $n$, taking the limit $n\to\infty$ and using the continuity of the binary entropy together with the fact that $\epsilon_n\tendsn{}0$ finally gives that
\bb
R < \reof(V_{AB})\, .
\label{protocol_eq15}
\ee
Taking the supremum over achievable rates $R$, we then see that
\bb
\widetilde{K}^\G_\to \left( V_{AB} \right) \leq \reof (V_{AB})\, ,
\label{protocol_eq16}
\ee
which together with~\eqref{K_tilde_ub} proves~\eqref{KG_ub}.

It remains to prove~\eqref{reof=KG_pure}. Fortunately, this is much easier to do: indeed, it suffices to exhibit a protocol that starting with an arbitrary number of copies of a pure QCM $\gamma_{AB}$ achieves a secret key distillation rate that is arbitrarily close to $M(\gamma_A)$. To do this, fix $\epsilon>0$, and apply 
\eqref{IMc pure states} to select two Gaussian measurements with seeds $\Gamma_A$ and $\Gamma_B$ such that
\bb
I_M(A:B)_{\gamma_{AB}+\Gamma_A\oplus \Gamma_B} \geq M(\gamma_A) - \frac{\epsilon}{3}\, .
\label{protocol_eq17}
\ee
Calling $X$ and $Y$ the outcomes of those measurements, we know that $I(X:Y) = I_M(A:B)_{\gamma_{AB}+\Gamma_A\oplus \Gamma_B}$. Hence, $I(X:Y)\geq M(\gamma_A)-\frac{\epsilon}{3}$ also holds. If Alice and Bob carry out the aforementioned Gaussian measurements separately on every single copy of $\gamma_{AB}$ they share, by applying the above procedure they obtain $n$ independent copies of the jointly Gaussian random variables $X$ and $Y$. Now, let Alice and Bob `bin' the continuous variables $X$ and $Y$ so as to obtain discrete random variables $X'$ and $Y'$ with the property that $I(X':Y') \geq I(X:Y) - \epsilon/3 \geq M(\gamma_A) - 2\epsilon/3$. This is known to be possible~\cite{Kraskov2004}, and indeed can be verified by elementary means, e.g.\ exploiting the uniform continuity of Gaussian distributions.

At this point, we can use a special case of a result proved by Maurer~\cite[Theorem~4]{Maurer1994} (see also previous works by Maurer himself~\cite{Maurer1993} as well as Ahlswede and Csiszar~\cite[Proposition~1]{AhlswedeCsiszar1}), and later generalised to the classical-quantum case in the fundamental work by Devetak and Winter~\cite[Theorem~1]{devetak2005}. For the case where Eve has no prior information, it states that the secret key distillation rate that one can achieve from i.i.d.\ copies of a correlated pair $(X',Y')$ of discrete random variables by means of one-way public communication\footnote{Actually, Maurer's result~\cite[Theorem~4]{Maurer1994} as stated holds for two-way public communication. However, a quick glance at the proof reveals that the two bounds in~\cite[Eq.~(10)]{Maurer1994} require only one-way communication --- either from Alice to Bob or vice versa. Devetak and Winter are more explicit in clarifying that they only need one-way communication~\cite{devetak2005}.} coincides with the mutual information $I(X':Y')$.\footnote{Remember that in our case the variable $Z$, representing Eve's prior information, is absent. Then our claim follows by combining Theorem~4 and the unnumbered equation above~(10) in Maurer's paper~\cite{Maurer1994}.}
To apply Maurer's achievability result, we need to verify that his security criterion is stronger than ours. Writing out everything for the case where Eve has no prior information, a side-by-side comparison of the two security criteria is as follows.
\begin{align}
&\text{Maurer~\cite[Definition~2]{Maurer1994}:} \quad 
\begin{array}{l}\Pr\{S\neq S'\}\leq \epsilon\, , \\
I(S:C)\leq \epsilon\, \\
H(S) \geq \ceil{Rn} - \epsilon\, .
\end{array}
%
\label{Maurer's security criterion} \\
&\text{This paper (Definition~\ref{SK_rate_def}):} \quad 
\inf_{Q_C} \left\| P_{SS'C} - \frac{\delta_{SS'}}{2^{\ceil{Rn}}} \otimes Q_C \right\|_1\leq \epsilon'\, .  \label{our security criterion} 
\end{align}
Here, $|S|$ is the size of the alphabet of $S$, and $\delta_{SS'}/N$ is the perfectly correlated uniform distribution of size $N$. We now verify that~\eqref{Maurer's security criterion} implies~\eqref{our security criterion} for some $\epsilon'$ universally related to $\epsilon$. For the sake of simplicity, we write out the argument in the case where the random variable $C$ ranges over a discrete alphabet. We have that
\begin{align*}
&\inf_{Q_C} \left\| P_{SS'C} - \frac{\delta_{SS'}}{2^{\ceil{Rn}}} \otimes Q_C \right\|_1 \\
&\leq \left\| P_{SS'C} - \frac{\delta_{SS'}}{2^{\ceil{Rn}}} \otimes P_C \right\|_1 \\
&= \sum_{s,s',c} \left| P_{SS'C}(s,s',c) - \frac{\delta_{s,s'}}{2^{\ceil{Rn}}} P_C(c) \right| \\
&= \sum_{s\neq s',\, c} P_{SS'C}(s,s',c) + \sum_{s,c} \left| P_{SS'C}(s,c) - \frac{\delta_{s,s'}}{2^{\ceil{Rn}}} P_C(c) \right| \\
&= \Pr\{S\neq S'\} + \sum_{s,c} \left| P_{SS'C}(s,c) - \frac{1}{2^{\ceil{Rn}}} P_C(c) \right| \\
&\leq \Pr\{S\neq S'\} + \sum_{s,c} \left| P_{SS'C}(s,s,c) - P_{SC}(s,c)\right|\\ 
&\qquad + \sum_{s,c} \left| P_{SC}(s,c) - P_S(s) P_C(c)\right| \\
&\qquad + \sum_{s,c} \left| P_S(s) P_C(c) - \frac{1}{2^{\ceil{Rn}}} P_C(c) \right| \\
&= 2 \Pr\{S\neq S'\} + \left\| P_{SC} - P_S\otimes P_C\right\|_1 + \left\|P_S - \frac{\id}{2^{\ceil{Rn}}}\right\|_1 \\
&\leq 2 \Pr\{S\neq S'\} + \sqrt{2\ln 2 \, I(S:C)} + \sqrt{2\ln 2\left( \ceil{Rn} - H(S)\right)} \\
&\leq 2 \left( \epsilon + \sqrt{2\ln 2\, \epsilon}\right) .
\end{align*}
Here, in the second to last line we applied twice Pinsker's inequality~\cite{Pinsker, Csiszar1967, Kullback1967}, while the last line follows directly from~\eqref{Maurer's security criterion}.


The above argument shows that $I(X':Y')$ is indeed the supremum of all achievable secret key rates for the random variables $(X',Y')$. Therefore, any rate of the form $I(X':Y')-\epsilon/3\geq M(\gamma_A)-\epsilon$ is achievable. Since this holds for an arbitrary $\epsilon>0$, we see that in fact
\bb
M(\gamma_A) \geq \sup \left\{ \text{$R$: $R$ is an achievable rate} \right\} = \kg_\to (\gamma_{AB})\, .
\label{protocol_eq18}
\ee
Together with~\eqref{KG_ub}, this establishes~\eqref{reof=KG_pure} and concludes the proof.\footnote{We note that in the last part of the above proof, we could have equally well leveraged the result of Devetak and Winter~\cite{devetak2005} instead of that by Maurer. 
Verifying that their security criterion is stronger than ours is elementary, and has already been observed e.g.\ by Christandl et al.~\cite{Christandl2007}.
}
\end{proof}

It is important at this point to recall that, when arbitrary local operations are permitted in conjunction with one- or two-way public communication ($1$-LOPC or LOPC, respectively), the secret key of any {\em pure} state $\psi_{AB}$ is well known to equal its local von Neumann entropy $S_1(\psi_A)$, as defined in \eqref{entropy}. Instead,~\eqref{reof=KG_pure} features the R\'enyi-$2$ entropy~\eqref{Renyi-2} of the local state. Since this is typically smaller, $S_2\leq S_1$, our result~\eqref{reof=KG_pure} shows that the Gaussian secret key of any pure Gaussian state is smaller than its unrestricted LOPC secret key, highlighting a fundamental limitation in the ability of Gaussian operations to extract secrecy from quantum states. Later in Section~\ref{sec plob} we will explore an example of such a limitation in a relevant family of {\em mixed} Gaussian states as well.


\subsection{Two-way public communication} \label{Sec:Tm2}

We now turn to our second main result, a weaker bound on the Gaussian secret key of an arbitrary Gaussian state in the presence of two-way public communication.

\begin{thm} \label{KW_ub_thm}
For all QCMs $V_{AB}$, it holds that
\bb
\kg_\leftrightarrow (V_{AB}) \leq 2 \reof(V_{AB})\, .
\label{KW_ub}
\ee
\end{thm}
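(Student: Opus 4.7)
The plan is to exploit the identification of $\reof$ with a Gaussian version of the squashed entanglement, mentioned in the discussion following~\eqref{reof} and established in~\cite{LL-log-det}. The factor of $2$ in~\eqref{KW_ub} should arise naturally from this identification: the Gaussian squashed entanglement takes the form $\frac{1}{2}\inf_{\Gamma_{\!E}} I_M(A\!:\!B|E)$ over Gaussian extensions of Eve's system, whereas the natural two-way key rate bound produces the full $\inf_{\Gamma_{\!E}} I_M(A\!:\!B|E)$, giving a factor of $2$ between the two. Note that for a pure Gaussian state one easily checks $I_M(A\!:\!B|E) = \log_2 \det \gamma_A = 2\cdot \frac12 \log_2 \det \gamma_A$, so this counting is consistent with the sharp one-way result~\eqref{reof=KG_pure}.

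Concretely, I would argue as follows. Given any achievable rate $R$ in the sense of~\eqref{achievable rate}, the secrecy condition says that Alice's and Bob's classical key shares $(S, S')$ are close in trace norm to an ideal key decoupled from all of Eve's side information, which comprises a purification $\Psi_{ABE}$ of the initial state together with the transcript of public messages $C$. Standard continuity and data-processing arguments then upper bound $R$ by a classical conditional mutual information between $S$ and $S'$ given Eve's system. Next, I would observe that every GLOPC protocol consists of iterative Gaussian measurements performed by Alice and Bob, whose joint statistics together with Eve's Gaussian extension remain Gaussian throughout. Monotonicity of $I_M$ under Gaussian classical post-processing, combined with Eve's freedom to choose any Gaussian extension $\Gamma_{\!E}$ of her system, should then give $R \leq \inf_{\Gamma_{\!E}} I_M(A\!:\!B|E)_{\gamma_{ABE}+\Gamma_{\!A}\oplus\Gamma_{\!B}\oplus\Gamma_{\!E}}$ for suitable choices of the seeds $\Gamma_{\!A}, \Gamma_{\!B}$ induced by the protocol. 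Finally, the squashed-entanglement identification of~\cite{LL-log-det}, expressing $\reof(V_{AB})$ as $\frac12$ times an infimum of log-det conditional mutual informations of this shape, would translate the bound into $R \leq 2\reof(V_{AB})$.

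The main obstacle will be making the first reduction rigorous in the face of arbitrary interactive classical post-processing. Unlike the one-way case treated in Theorem~\ref{KG_ub_thm}, where a clean Devetak--Winter-style argument can be applied to bound the key rate by a single Alice-to-Bob Holevo-type quantity, two-way GLOPC protocols couple Alice's and Bob's message streams across multiple rounds, so one must carefully track the flow of both classical and quantum information while invoking the secrecy condition~\eqref{achievable rate}. The doubling implicit in~\eqref{KW_ub}---which the authors flag as conjecturally removable---reflects precisely this inefficiency: a sharper, non-additive argument, perhaps adapting the full strength of the squashed-entanglement bound to the restricted Gaussian setting along the lines of the pure-state equality~\eqref{reof=KG_pure}, would likely be required to eliminate the factor of $2$.
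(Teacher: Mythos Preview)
Your high-level intuition is sound---the factor of $2$ indeed arises from the identification $\reof = \tfrac12 \inf I_M(A\!:\!B|E)$---but the route you sketch contains a genuine gap and actually proves a different theorem.

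The problematic step is the claim that ``joint statistics together with Eve's Gaussian extension remain Gaussian throughout'' an arbitrary GLOPC protocol. This is false: the paper itself stresses (in the paragraph introducing GLOPC) that local \emph{classical} processing is entirely unrestricted, so Alice can, e.g., apply a random displacement conditioned on a coin flip, destroying Gaussianity. Consequently there are in general no seeds $\Gamma_{\!A},\Gamma_{\!B}$ ``induced by the protocol''---that picture is only valid for the restricted GLMPC class where both parties make a single destructive Gaussian measurement up front. The bound you would actually obtain by your argument is $R \leq \sup_{\Gamma_{\!A},\Gamma_{\!B}} \inf_{\Gamma_{\!E}} I_M(A\!:\!B|E) = \gie(V_{AB}) \leq \reof(V_{AB})$, which is precisely Theorem~\ref{KGM_ub_thm}, not Theorem~\ref{KW_ub_thm}.

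The paper's proof circumvents the non-Gaussianity issue by a different mechanism. First it passes to the modified key $\widetilde K^\G_\leftrightarrow$ via Lemma~\ref{K_tilde_ub_lemma}: Eve makes a pure-seed Gaussian measurement on her entire share at the outset and broadcasts the outcome, leaving Alice and Bob with a \emph{pure} Gaussian state $\tau_{A^nB^n}$. The crucial observation is then that any GLOPC protocol on a Gaussian input can be simulated by a purely classical process acting on the \emph{Wigner function}, since Gaussian states, Gaussian unitaries, Gaussian ancillae, and Gaussian POVM elements all have positive Wigner functions. Maurer's two-way secret-key bound~\cite[Theorem~1]{Maurer1993} applied to this classical simulation gives $H(S_n)\leq I(W_{A^n}\!:\!W_{B^n}) + H(S_n|S'_n) + I(S_n:C_n)$, and since $\tau_{A^nB^n}$ is pure, the Wigner mutual information equals the full log-det mutual information $I_M(A^n\!:\!B^n)_\tau = 2M(\tau_{A^n})$. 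Optimising over Eve's pure seed and using the additivity of $\reof$ then yields $R\leq 2\reof(V_{AB})$. The Wigner-simulation idea is what lets the argument handle fully adaptive, multi-round, non-Gaussian classical post-processing---this is the missing ingredient in your proposal.
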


\begin{proof}
We will prove that $\kg_\leftrightarrow (V_{AB}) \leq \widetilde{K}_\leftrightarrow^\G(V_{AB}) \leq 2 \reof(V_{AB})$, where the first inequality follows from the case $\mathcal{P}=\mathrm{GLOPC}$ of Lemma~\ref{K_tilde_ub_lemma}. To establish the second one, consider as usual a sequence of GLOPC protocols $\Lambda_n: A^n B^n \to \K_n \K'_n \CC_n$ with rate $R$ as in Definition~\ref{modified_SK_rate_def}. Pick numbers $\epsilon_n>0$ such that~\eqref{protocol_eq5} and~\eqref{protocol_eq6} hold, consider an arbitrary sequence of QCMs $\Gamma_{E^n}$, and define the QCM $\tau_{A^nB^n}$ by~\eqref{protocol_eq7}.

Similarly to what we saw in the proof of Theorem~\ref{KG_ub_thm}, since the global input state is Gaussian and all measurements, ancillary states, and unitaries are Gaussian, the whole protocol can be simulated by a purely classical process. The input of this simulation is the pair of correlated random variables $\left(W_{A^n}, W_{B^n}\right)$, whose joint distribution coincides with the Wigner function of $\rho_\G\left[\tau_{A^nB^n},\, 0 \right]$. Let $S_n,S'_n$ be the pair of keys generated by Alice and Bob, and let $C_n$ the messages exchanged. By a result of Maurer, we have that~\cite[Theorem~1]{Maurer1993}
\bb
H(S_n) \leq I\left(W_{A^n} \!: W_{B^n}\right) + H(S_n|S'_n) + I\left(S_n:C_n\right) .
\label{KW_ub_proof_eq1}
\ee
Employing~\eqref{Wigner_entropy_Gaussian} we see immediately that
\bb
\begin{aligned}
I\left(W_{A^n} \!: W_{B^n}\right) &= H(W_{A^n}) + H(W_{B^n}) - H(W_{A^n}W_{B^n}) \\
&= M(\tau_{A^n}) + nm_A \log_2(\pi e) + M(\tau_{B^n}) \\ &\quad + nm_B \log_2(\pi e) - M(\tau_{A^nB^n}) \\ & \quad - n(m_A+m_B)\log_2(\pi e) \\
&= M(\tau_{A^n}) + M(\tau_{B^n}) - M(\tau_{A^nB^n}) \\
&= 2M(\tau_{A^n}) \, ,
\end{aligned}
\label{KW_ub_proof_eq2}
\ee
where the last identity follows because thanks to the discussion following \eqref{elementary_pure_G_lemma} we know that $\tau_{A^nB^n}$ is a pure QCM. Plugging~\eqref{KW_ub_proof_eq2},~\eqref{protocol_eq9},~\eqref{protocol_eq10}, and~\eqref{protocol_eq11} inside~\eqref{KW_ub_proof_eq1} yields
$\ceil{Rn} < 2\epsilon_n \ceil{Rn} + 2\, h_2\left( \frac{\epsilon_n}{2} \right) + h_2(\epsilon_n) + 2\, M\left( \left(\gamma_{AE}^{\oplus n}+\Gamma_{E^n}\right)\Big/\left(\gamma_E^{\oplus n}+\Gamma_{E^n}\right) \right)$,
and in turn 
$\ceil{Rn} < 2\epsilon_n \ceil{Rn} + 2\, h_2\left( \frac{\epsilon_n}{2} \right) + h_2(\epsilon_n) + 2n\, \reof(V_{AB})$
upon taking the infimum over $\Gamma_{E^n}$ as in~\eqref{protocol_eq13}. Dividing by $n$ and taking the limit for $n\to\infty$ gives
$R < 2 \reof(V_{AB})$,
and then
\bb
\widetilde{K}^\G_\leftrightarrow \left( V_{AB} \right) \leq 2 \reof (V_{AB})
\ee
upon an optimisation over all achievable rates $R$.
\end{proof}

We conjecture that the factor of $2$ in~\eqref{KW_ub} is not tight, and that in fact $\kg_\leftrightarrow (V_{AB}) \leq \reof(V_{AB})$ holds true for all QCMs $V_{AB}$. Establishing this would further bolster the operational significance of the Gaussian entanglement measure $\reof$ in the context of QKD. As evidence in favour of our conjecture, we present partial proof of it for the class of protocols GLMPC, corresponding to a scenario where we allow two-way public communication, but only after Alice and Bob perform complete destructive Gaussian measurements on their subsystems. This is the third main result of this paper.

\begin{thm} \label{KGM_ub_thm}
For all QCMs $V_{AB}$, it holds that
\bb
\kgm_\leftrightarrow (V_{AB}) \leq \gier(V_{AB}) \leq \reof(V_{AB})
\label{KGM_ub}
\ee
and moreover
\bb
\gie(V_{AB}) \leq \reof(V_{AB})\, .
\label{gie_ub_reof}
\ee
\end{thm}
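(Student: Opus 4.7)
My plan is to establish the three constituent inequalities in the order: first \eqref{gie_ub_reof}, $\gie(V_{AB}) \leq \reof(V_{AB})$; next the rightmost bound in \eqref{KGM_ub}, $\gier(V_{AB}) \leq \reof(V_{AB})$; and finally the leftmost bound $\kgm_\leftrightarrow(V_{AB}) \leq \gier(V_{AB})$. The middle inequality is essentially free: applying \eqref{gie_ub_reof} to $V_{AB}^{\oplus n}$ and using the additivity of $\reof$ under direct sums~\cite{LL-log-det}, one has $\frac{1}{n}\gie\!\left(V_{AB}^{\oplus n}\right) \leq \frac{1}{n}\reof\!\left(V_{AB}^{\oplus n}\right) = \reof(V_{AB})$ for every $n$, and the definition~\eqref{gier} then yields $\gier(V_{AB}) \leq \reof(V_{AB})$.

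To prove \eqref{gie_ub_reof}, I would fix an arbitrary pure $\gamma_{AB}$ feasible in the infimum~\eqref{reof} defining $\reof(V_{AB})$ and build a specific Gaussian strategy for Eve in~\eqref{gie} tailored to it. Because $\delta \coloneqq V_{AB} - \gamma_{AB} \geq 0$, a characteristic-function computation yields the convex decomposition
\bb
\rho_\G[V_{AB}, 0] = \int \rho_\G[\gamma_{AB}, s]\, p(s)\, d^{2(m_A+m_B)}s ,
\ee
with $p$ a centred Gaussian of covariance $\delta/2$. Gaussian dilation theory then supplies a purification $\gamma_{ABE}$ of $V_{AB}$ together with a seed $\Gamma_{\!E}^\star$ on $E$ whose outcome $s$ is distributed as $p$ and leaves the $AB$ marginal, conditionally on $s$, in the pure state $\rho_\G[\gamma_{AB}, s]$. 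Plugging $\Gamma_{\!E}^\star$ into~\eqref{gie}: for any Alice/Bob seeds $\Gamma_{\!A}, \Gamma_{\!B}$, the conditional distribution of the outcomes $(X,Y)$ given $s$ has covariance identical to that on $\rho_\G[\gamma_{AB}, 0]$ and mean shifted by $s$, so, since shifting the mean leaves Gaussian mutual information unchanged, $I_{\!M}(A\!:\!B|E) = I_{\!M}(X\!:\!Y)_{\gamma_{AB}, \Gamma_{\!A}, \Gamma_{\!B}}$. Since $\gamma_{AB}$ is pure, this classical $I_{\!M}(X\!:\!Y)$ is an achievable 1-GLOPC key rate (via one-way Slepian--Wolf reconciliation), hence at most $\kg_\to(\gamma_{AB}) = \frac{1}{2}\log_2\det\gamma_A$ by the pure-state case~\eqref{reof=KG_pure} of Theorem~\ref{KG_ub_thm}. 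Supremizing over $\Gamma_{\!A}, \Gamma_{\!B}$ and then infimizing over feasible pure $\gamma_{AB}$ delivers \eqref{gie_ub_reof}.

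For the operational inequality $\kgm_\leftrightarrow \leq \gier$ I would argue at the $n$-copy level. Any GLMPC protocol on $V_{AB}^{\oplus n}$ begins with destructive Gaussian measurements with seeds $\Gamma_{\!A}^{(n)}, \Gamma_{\!B}^{(n)}$ on $A^n$ and $B^n$, producing jointly Gaussian classical variables $(X,Y)$ while Eve retains a quantum purification $E^n$. The security condition~\eqref{achievable rate} must hold against every quantum strategy of Eve, in particular the suboptimal one in which Eve performs a Gaussian measurement with seed $\Gamma_{\!E}^{(n)}$; the problem then reduces to classical key agreement from $(X, Y, X_{\!E})$, and the Maurer--Wolf intrinsic-information bound~\cite{MaurerWolf} caps the achievable rate by $I(X\!:\!Y\!\downarrow\! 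X_{\!E}) \leq I_{\!M}(X\!:\!Y|X_{\!E})$. Infimizing over $\Gamma_{\!E}^{(n)}$ and supremizing over $\Gamma_{\!A}^{(n)}, \Gamma_{\!B}^{(n)}$ yields $\kgm_\leftrightarrow(V_{AB}^{\oplus n}) \leq \gie(V_{AB}^{\oplus n})$; combining with the direct-sum multiplicativity $\kgm_\leftrightarrow(V_{AB}^{\oplus n}) = n\,\kgm_\leftrightarrow(V_{AB})$ and passing to $\liminf_n$ of both sides completes the step.

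The principal technical obstacle lies in the second paragraph: exhibiting the Gaussian purification together with the specific seed $\Gamma_{\!E}^\star$ that concretely implements the classical-mixture decomposition above. This is essentially the quantum counterpart of the Gaussian convex-roof construction, and it becomes delicate precisely when $\delta$ is singular --- in which case $\Gamma_{\!E}^\star$ must be defined by a regularisation or limiting argument along a sequence of strictly positive perturbations of $\delta$. A secondary but routine subtlety concerns the restriction of Eve to Gaussian measurements in the third paragraph, which one handles by plugging the explicit (and for her suboptimal) Gaussian strategy directly into the security definition~\eqref{achievable rate}.
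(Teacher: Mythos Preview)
Your route to \eqref{gie_ub_reof} is morally correct and close to the paper's. The paper proceeds by restricting Eve to pure seeds, swapping the sup over $\Gamma_A,\Gamma_B$ and the inf over pure $\Gamma_E$ via the max--min inequality, identifying the inner supremum with $I_M^c(A:B)$ on the pure post-measurement QCM, and then invoking the identity between the R\'enyi-$2$ Gaussian squashed entanglement and $\reof$ from~\cite{LL-log-det}. Your construction of a specific $\Gamma_E^\star$ realising a prescribed pure $\gamma_{AB}\leq V_{AB}$ is precisely the content of that identity, so the two arguments share the same engine. One remark: your detour through Theorem~\ref{KG_ub_thm} to bound $I_M(X\!:\!Y)$ by $\kg_\to(\gamma_{AB})=M(\gamma_A)$ is valid but unnecessarily indirect; the paper simply uses $I_M^c(A:B)_\gamma = M(\gamma_A)$ for pure $\gamma$ directly.

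There is, however, a genuine gap in your argument for $\kgm_\leftrightarrow\leq\gier$. You invoke the Maurer--Wolf \emph{intrinsic-information} bound, which is an \emph{asymptotic} statement for i.i.d.\ triples, and then assert $\kgm_\leftrightarrow(V_{AB}^{\oplus n})\leq \gie(V_{AB}^{\oplus n})$. But a GLMPC protocol on $m$ copies of $V_{AB}^{\oplus n}$ performs a single Gaussian measurement on all $nm$ modes, generically \emph{not} i.i.d.\ across $n$-blocks, so the Maurer--Wolf bound does not apply at that level. What you would actually need for your inequality is subadditivity of $\gie$, which is neither known nor used in the paper. The correct tool is Maurer's \emph{single-shot} inequality $H(S)\leq I(X\!:\!Y|Z)+H(S|S')+I(S\!:\!CZ)$~\cite{Maurer1993}: applied to the protocol $\Lambda_n$ on $n$ copies (with Eve restricted to a Gaussian measurement $\Gamma_{E^n}$), together with Fano-type error control, it yields $nR \leq \gie(V_{AB}^{\oplus n}) + o(n)$ directly. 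Dividing by $n$ and taking the liminf gives $R\leq\gier(V_{AB})$ without any multiplicativity assumption on $\kgm_\leftrightarrow$. Your sketch thus has the right skeleton but misidentifies the key lemma and inserts an intermediate claim that does not follow.
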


The argument we use to prove Theorem~\ref{KGM_ub_thm} is very close in spirit to those proposed by Maurer~\cite{Maurer1993}, Ahlswede and Csiszar~\cite{AhlswedeCsiszar1}, and Maurer and Wolf~\cite{MaurerWolf} to upper bound the secret key capacity of a tripartite probability distribution. In the latter two papers, in particular, the notion of \textit{intrinsic information} was introduced and discussed at length (see~\cite[Theorem~1]{AhlswedeCsiszar1} and~\cite[\S~II]{MaurerWolf}). 

\begin{proof}
We start by proving the first inequality in~\eqref{KGM_ub}. Let Alice, Bob and Eve start with $n$ copies of the pure Gaussian state with QCM $\gamma_{ABE}$, so that the global QCM reads $\gamma_{ABE}^{\oplus n}$. Consider a sequence of GLMPC protocols $\Lambda_n: A^n B^n \to \K_n \K'_n \CC_n$ such that
\begin{align}
&\inf_{\omega_{\CC_n E^n}} \left\| \Lambda_n \left( \rho_\G\left[\gamma_{ABE}^{\otimes n}, 0\right] \right) - \left(\kappa_{2^{\ceil{Rn}}} \right)_{\K_n \K'_n} \otimes \omega_{\CC_n E^n} \right\|_1 < \epsilon_n\, , \label{KGM_ub_proof_eq1}\\
&\lim_{n\to\infty} \epsilon_n = 0\,, \nonumber
\end{align}
for some rate $R>0$, as per Definition~\ref{SK_rate_def}. By construction, the GLMPC protocol $\Lambda_n$ can be decomposed as
\bb
\Lambda_n = \Lambda^c_n\circ \left( \M_{\Gamma_{A^n}}^\G \otimes \M_{\Gamma_{B^n}}^\G \right) ,
\label{KGM_ub_proof_eq2}
\ee
where $\M_{\Gamma_{A^n}}^\G$ and $\M_{\Gamma_{B^n}}^\G$ are complete destructive Gaussian measurements with seeds $\Gamma_{A^n}$ and $\Gamma_{B^n}$ on Alice's and Bob's side, respectively, and $\Lambda^c_n$ is a classical protocol involving only local operations and public communication. For a formal representation of the quantum--classical channels $\M_{\Gamma_{A^n}}^\G, \M_{\Gamma_{B^n}}^\G$, see~\eqref{G_meas_channel}.

Now, consider an arbitrary Gaussian measurement $\M_{\Gamma_{E^n}}^\G$ with seed $\Gamma_{E^n}$ on Eve's subsystem; denote the corresponding output alphabet with $\Z_n = \R^{2nm_E}$. Employing first the data processing inequality for the trace norm and then~\eqref{KGM_ub_proof_eq2} yields
\bb
\begin{aligned}
\epsilon_n &> \inf_{\omega_{\CC_n E^n}} \Big\| \Lambda_n \left( \rho_\G\left[\gamma_{ABE}^{\otimes n}, 0\right] \right) - \left(\kappa_{2^{\ceil{Rn}}} \right)_{\K_n \K'_n} \otimes \omega_{\CC_n E^n} \Big\|_1 \\
&\geq \inf_{Q_{\CC_n \Z_n}} \sup_{\Gamma_{E^n}} \Big\| \left(\Lambda_n \otimes \M_{\Gamma_{E^n}}^\G \right) \left( \rho_\G\left[\gamma_{ABE}^{\otimes n}, 0\right] \right) \\ &\qquad - \left(\kappa_{2^{\ceil{Rn}}} \right)_{\K_n \K'_n} \otimes Q_{\CC_n \Z_n} \Big\|_1 \\
&= \inf_{Q_{\CC_n \Z_n}} \sup_{\Gamma_{E^n}} \Bigg\| \Lambda_n^c\left(\left(\M_{\Gamma_{A^n}}^\G \otimes \M_{\Gamma_{B^n}}^\G \otimes \M_{\Gamma_{E^n}}^\G \right) \left( \rho_\G\left[\gamma_{ABE}^{\otimes n}, 0\right] \right)\right) \\& \qquad - \left(\kappa_{2^{\ceil{Rn}}} \right)_{\K_n \K'_n} \otimes Q_{\CC_n \Z_n} \Big\|_1\, .
\end{aligned}
\label{KGM_ub_proof_eq3}
\ee

Now, the probability distribution $\left(\M_{\Gamma_{A^n}}^\G \otimes \M_{\Gamma_{B^n}}^\G \otimes \M_{\Gamma_{E^n}}^\G \right)$ $\left( \rho_\G\left[\gamma_{ABE}^{\otimes n}, 0\right] \right)$ defines a triple of random variables $X_n,Y_n,Z_n$. Denoting with $C_n$ the message exchanged during the execution of $\Lambda_n^c$ and with $S_n,S'_n$ the locally generated secret keys, the celebrated result of Maurer~\cite[Theorem~1]{Maurer1993} that we have already used multiple times states that
\bb
H(S_n) \leq I(X_n:Y_n|Z_n) + H(S_n|S'_n) + I(S_n:C_n Z_n)\, .
\label{KGM_ub_proof_eq4}
\ee
It is now an elementary exercise to verify that analogous conditions to~\eqref{protocol_eq9}--\eqref{protocol_eq11} apply to our case. The only one which needs a very slight modification is~\eqref{protocol_eq11}. Construct the triple of random variables $\widetilde{S}_n, \widetilde{C}_n, \widetilde{Z}_n$ such that: $\widetilde{S}_n$ and $\widetilde{C}_n, \widetilde{Z}_n$ are independent; $\widetilde{S}_n$ is uniformly distributed over $\left\{1,\ldots, 2^{\ceil{Rn}}\right\}$; and $\widetilde{C}_n, \widetilde{Z}_n$ has probability distribution $Q_{\CC_n \Z_n}$, where
$\Big\| \Lambda_n^c\left(\left(\M_{\Gamma_{A^n}}^\G \otimes \M_{\Gamma_{B^n}}^\G \otimes \M_{\Gamma_{E^n}}^\G \right) \left( \rho_\G\left[\gamma_{ABE}^{\otimes n}, 0\right] \right)\right) - \left(\kappa_{2^{\ceil{Rn}}} \right)_{\K_n \K'_n} \otimes Q_{\CC_n \Z_n} \Big\|_1 \leq \epsilon_n$. 
Then,
\bb
\begin{aligned}
I(S_n:C_n Z_n) &= H(S_n) - H(S_n|C_n Z_n) \\
&\leq H\big(\widetilde{S}_n\big) - H(S_n|C_nZ_n) \\
&= H\big(\widetilde{S}_n \big| \widetilde{C}_n \widetilde{Z}_n \big) - H(S_n|C_nZ_n) \\
&\leq \frac{\epsilon_n}{2} \log_2 \left(2^{\ceil{Rn}}-1\right) + h_2\left(\frac{\epsilon_n}{2}\right) \\
&\leq \frac{\epsilon_n}{2} \ceil{Rn} + h_2\left(\frac{\epsilon_n}{2}\right) .
\end{aligned}
\label{KGM_ub_proof_eq5}
\ee
Also, observe that
\bb
I(X_n:Y_n|Z_n) = I_M \left(A^n : B^n \big| E^n \right)_{\gamma_{ABE}^{\oplus n} + \Gamma_{A^n}\oplus \Gamma_{B^n}\oplus \Gamma_{E^n}}
\label{daje_adapted}
\ee
by~\eqref{daje}. Plugging~\eqref{protocol_eq10},~\eqref{daje_adapted},~\eqref{protocol_eq9}, and~\eqref{KGM_ub_proof_eq5} inside~\eqref{KGM_ub_proof_eq4} yields
\begin{align}
\ceil{Rn} < &\ 2\epsilon_n \ceil{Rn} + 2\, h_2\left( \frac{\epsilon_n}{2} \right) + h_2(\epsilon_n) \nonumber \\ &+ I_M \left(A^n : B^n \big| E^n \right)_{\gamma_{ABE}^{\oplus n} + \Gamma_{A^n}\oplus \Gamma_{B^n}\oplus \Gamma_{E^n}} \, .
\end{align}
Since this holds for all QCMs $\Gamma_{E^n}$,
\begin{align}
\ceil{Rn} < &\ 2\epsilon_n \ceil{Rn} + 2\, h_2\left( \frac{\epsilon_n}{2} \right) + h_2(\epsilon_n) \nonumber \\
&+ \inf_{\Gamma_{E^n}} I_M \big(A^n : B^n \big| E^n \big)_{\gamma_{ABE}^{\oplus n} + \Gamma_{A^n}\oplus \Gamma_{B^n}\oplus \Gamma_{E^n}} \, .
\label{KGM_ub_proof_eq6}
\end{align}
Taking a further supremum on $\Gamma_{A^n}, \Gamma_{B^n}$ and remembering the definition of the Gaussian intrinsic entanglement~\eqref{gie} gives
\bb
\ceil{Rn} < 2\epsilon_n \ceil{Rn} + 2\, h_2\left( \frac{\epsilon_n}{2} \right) + h_2(\epsilon_n) + \gie\left(\gamma_{ABE}^{\oplus n}\right) .
\label{KGM_ub_proof_eq7}
\ee
Dividing by $n$ and taking the liminf for $n\to\infty$ produces
\bb
R < \gier\left(V_{AB}\right) .
\label{KGM_ub_proof_eq8}
\ee
Since this holds for all achievable rates $R$, we also obtain that
\bb
\kgm_\leftrightarrow (V_{AB}) \leq \gier(V_{AB})\, ,
\label{KGM_ub_proof_eq9}
\ee
which proves the first inequality in~\eqref{KGM_ub}.

We now move on to the proof of~\eqref{gie_ub_reof}. To start off, we massage the expression~\eqref{gie} thanks to~\eqref{IM_conditional_Schur}, obtaining
\bb
\gie(V_{AB}) = \sup_{\Gamma_A, \Gamma_B} \inf_{\Gamma_E} I_M\left(A:B\right)_{(\gamma_{ABE} + \Gamma_E)/(\gamma_E + \Gamma_E) + \Gamma_A\oplus \Gamma_B}\, .
\label{gie massaged}
\ee
Before we proceed further, let us define one more quantity via a slight modification of~\eqref{gie massaged}. More precisely, we set
\bb
\gietilde(V_{AB}) \coloneqq \inf_{\text{$\Gamma_E$ pure}}\, \sup_{\Gamma_A,\Gamma_B} I_M(A:B)_{(\gamma_{ABE} + \Gamma_E)/(\gamma_E + \Gamma_E) + \Gamma_A\oplus \Gamma_B}\, .
\label{gie tilde}
\ee
Let us show that $\gie(V_{AB}) \leq \gietilde(V_{AB})$. We have that
\bb
\begin{aligned}
\gie \left( V_{AB}\right) &\leqt{1} \sup_{\Gamma_A, \Gamma_B} \inf_{\text{$\Gamma_E$ pure}} I_M(A:B)_{(\gamma_{ABE} + \Gamma_E)/(\gamma_E + \Gamma_E) + \Gamma_A\oplus \Gamma_B} \\
&\leqt{2} \inf_{\text{$\Gamma_E$ pure}} \sup_{\Gamma_A, \Gamma_B} I_M(A:B)_{(\gamma_{ABE} + \Gamma_E)/(\gamma_E + \Gamma_E) + \Gamma_A\oplus \Gamma_B} \\
&= \gietilde(V_{AB})\, .
\end{aligned}
\label{KGM_ub_proof_eq10}
\ee
Here, in~1 we restricted the infimum in~\eqref{gie massaged} to pure QCMs $\Gamma_{\!E}$, while in~2 we exchanged supremum and infimum according to the max-min inequality $\sup_{x\in \pazocal{X}}\inf_{y\in \pazocal{Y}}f(x,y)\leq \inf_{y\in \pazocal{Y}} \sup_{x\in \pazocal{X}} f(x,y)$~\cite{BOYD}. We now show that in fact $\gietilde(V_{AB}) = \reof (V_{AB})$ holds:
\bb
\begin{aligned}
\gietilde(V_{AB}) &\eqt{3} \inf_{\text{$\Gamma_E$ pure}} I_M^c(A:B)_{(\gamma_{ABE} + \Gamma_E)/(\gamma_E + \Gamma_E)} \\
&\eqt{4} \frac12 \inf_{\text{$\Gamma_E$ pure}}\ I_M (A:B)_{(\gamma_{ABE} + \Gamma_E)/(\gamma_E + \Gamma_E)} \\
&\eqt{5} \frac12 \inf_{\text{$\Gamma_E$ pure}}\ I_M (A:B|E)_{\gamma_{ABE} + \Gamma_E} \\
&\eqt{6} \reof( V_{AB}) .
\end{aligned}
\label{KGM_ub_proof_eq11}
\ee
In~3 we recalled the definition~\eqref{IMc}, in~4 we used~\eqref{IMc pure states}, in~5 we exploited~\eqref{IM_conditional_Schur}, and finally in~6 we leveraged a recently established result on the equality between R\'enyi-$2$ Gaussian squashed entanglement and R\'enyi-$2$ Gaussian entanglement of formation~\cite[Theorem~5 and especially Remark~2]{LL-log-det}.

We have therefore established~\eqref{gie_ub_reof}. Applying it to the QCM $V_{AB}^{\oplus n}$ and taking the liminf for $n\to\infty$ yields
\bb
\begin{aligned}
\gier(V_{AB}) &= \liminf_{n\to\infty} \frac1n \gie \left( V_{AB}^{\oplus n}\right) \\
&\leq \liminf_{n\to\infty} \frac1n \reof \left( V_{AB}^{\oplus n}\right) \\
&= \reof(V_{AB})\, ,
\end{aligned}
\ee
where the last identity is a consequence of the additivity~\eqref{reof additivity} of the R\'enyi-$2$ Gaussian entanglement of formation. This establishes the second inequality in~\eqref{KGM_ub} and concludes the proof.
\end{proof}

Note that the upper bound on $\kgm_\leftrightarrow$ provided by Lemma~\ref{K_tilde_ub_lemma} would lead straight to the inequality $\kgm_\leftrightarrow\leq \reof$. However, this is a priori less tight than the estimate $\kgm_\leftrightarrow \leq \gier$ established in Theorem~\ref{KGM_ub_thm}. This discrepancy is due to the type of constraints we impose on Eve's action: in Lemma~\ref{K_tilde_ub_lemma} we assumed that she performs a destructive Gaussian measurement at the very beginning of the protocol, subsequently broadcasting the obtained outcome to Alice and Bob; in Theorem~\ref{KGM_ub_thm}, instead, we assumed that \emph{first} Alice and Bob make their destructive Gaussian measurements, and \emph{then} Eve makes hers, keeping the outcome secret.


\section{Equivalence of Gaussian entanglement measures}\label{sec:Conj}

In the previous Section we have seen that Theorem~\ref{KGM_ub_thm} brings into play, in addition to the R\'enyi-$2$ Gaussian entanglement of formation~\eqref{reof}, also the (regularised) Gaussian intrinsic entanglement~\eqref{gie}. As mentioned in the Introduction, these two measures have been conjectured to be identical on all Gaussian states~\cite{GIE, GIE-PRA, GIE-PRA2}, and in Theorem~\ref{KGM_ub_thm} we established that at least $\gie\leq \reof$ holds true in general. For a particular --- but, in fact, quite vast --- class of Gaussian states we are able to prove the opposite inequality as well, thus confirming the conjecture.

We call the QCM $V_{AB}$ of a bipartite Gaussian state \emph{normal} if it can be brought into a form in which all $xp$ cross-terms vanish (referred to as $xp$-form) using local symplectic operations alone; see Appendix~\ref{app:XP} for an explicit definition. Here, the $xp$ cross terms of an $m$-mode QCM $W$ are all the entries $W_{jk}$ with $|j-k|>m$, and a local symplectic operation is a map of the form $V_{AB} \mapsto (S_A\oplus S_B) V_{AB} (S_A^\intercal \oplus S_B^\intercal)$, where $S_A, S_B$ are symplectic matrices. All pure QCMs~\cite{giedkemode} as well as all two-mode mixed QCMs~\cite{Simon00, Duan2000} are normal.

Our fourth main result then amounts to the following.
\begin{thm} \label{equality normal thm}
For a normal QCM $V_{AB}$, it holds that
\bb
\gier( V_{AB} ) = \gie( V_{AB} ) = \reof (V_{AB}) .
\label{equality normal}
\ee
In particular,~\eqref{equality normal} holds for all two-mode QCMs.
\end{thm}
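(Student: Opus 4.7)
The plan is to complete the chain of inequalities left open by Theorem~\ref{KGM_ub_thm}. That result already supplies $\gie(V_{AB}) \leq \reof(V_{AB})$ and $\gier(V_{AB}) \leq \reof(V_{AB})$ for every QCM, so the whole statement will follow once I prove the reverse bound $\reof(V_{AB}) \leq \gie(V_{AB})$ when $V_{AB}$ is normal. I would then lift this to the regularisation by observing that direct sums of normal QCMs remain normal, invoking the additivity of $\reof$ recalled in the main text, and taking the liminf in~\eqref{gier}. The two-mode case follows because every two-mode QCM can be brought to a form with vanishing $xp$ cross-terms by local symplectic operations, and is therefore normal by definition.

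For the reverse bound, after a local symplectic reduction to the canonical form $V_{AB} = V^{xx} \oplus V^{pp}$, I would argue that the infimum in~\eqref{reof} can itself be restricted to pure QCMs $\gamma_{AB}$ that share the same block-diagonal structure, so that $\frac12 \log_2 \det \gamma_A = \frac12 \log_2 \det \gamma_A^{xx} + \frac12 \log_2 \det \gamma_A^{pp}$ splits into two independent contributions. Given such an optimal $\gamma_{AB}$, I would construct explicit Gaussian seeds $\Gamma_A,\Gamma_B$ tailored to it --- morally, seeds simulating the optimal homodyne strategies in the $x$- and $p$-sectors separately --- and insert them into the GIE formula~\eqref{gie}. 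For these seeds, $I_M(A\!:\!B|E)$ reduces to a classical conditional mutual information among jointly Gaussian outcomes, and so becomes amenable to direct computation with log-determinant tools.

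The main technical obstacle lies in handling the remaining infimum over $\Gamma_E$ and matching it to $\frac12 \log_2 \det \gamma_A$. My plan is to reduce this step to a tight log-determinant inequality on Schur complements of the extended QCM $\gamma_{ABE} + \Gamma_A \oplus \Gamma_B \oplus \Gamma_E$, exploiting the purity constraint $\det \gamma_{AB} = 1$ together with the fact that $E$ purifies $V_{AB}$. The normality hypothesis is crucial here: it lets the $x$- and $p$-sectors be analysed separately and then recombined additively, avoiding the mixing that would otherwise occur between homodyne measurements along non-commuting quadratures. I expect the subtlest point to be ensuring that the infimum is attained \emph{exactly} at $\frac12 \log_2 \det \gamma_A$, as opposed to just matching up to some multiplicative slack, since this is what pins down the equality in~\eqref{equality normal} rather than merely an inequality.
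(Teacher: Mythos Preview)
Your high-level scaffolding is sound: reduce to $xp$-form via local symplectics, prove $\gie \geq \reof$ there, and lift to $\gier$ using the additivity of $\reof$ together with the fact that direct sums of normal QCMs are normal. The two-mode remark is also correct.

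The gap is in the core step. You speak of ``seeds simulating the optimal homodyne strategies in the $x$- and $p$-sectors separately'', but a Gaussian measurement has a single seed: one cannot homodyne both quadratures at once. The paper instead takes a \emph{single} homodyne limit $\Gamma_A(t),\Gamma_B(t)$ with $t\to 0^+$ and --- after a uniform-continuity argument needed to swap that limit with the infimum over $\Gamma_E$ --- obtains
\[
\gie(V_{AB})\ \geq\ \inf_{\Gamma_E}\ I_M(A_x:B_x)_{\,((\gamma_{ABE}+\Gamma_E)/(\gamma_E+\Gamma_E))^x}\,.
\]
The decisive step, which your outline does not anticipate, is a \emph{range identity}: writing $\Gamma_E$ in $xp$-blocks as $\bigl(\begin{smallmatrix} K & J \\ J^\intercal & L\end{smallmatrix}\bigr)$, the $x$-projection above depends on $\Gamma_E$ only through $K - J(L+D)^{-1}J^\intercal$, and this combination already sweeps out \emph{all} positive matrices when $\Gamma_E$ is restricted to pure QCMs in $xp$-form. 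Hence the infimum over arbitrary $\Gamma_E$ collapses to one over pure $\Gamma_E$ in $xp$-form, at which point $I_M(A_x:B_x)=\tfrac12 I_M(A:B)$ for the resulting pure post-measurement QCM closes the loop back to $\gietilde=\reof$.

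Your alternative route --- first select the optimal pure $\gamma_{AB}$ in the definition of $\reof$, then build seeds tailored to it --- conflates two different objects: that $\gamma_{AB}$ is not the purification $\gamma_{ABE}$ of $V_{AB}$ appearing in $\gie$. The link between them runs through the identity $\gietilde=\reof$ (i.e.\ the Gaussian squashed-entanglement formula), not through any direct matching of seeds to a pre-selected $\gamma_{AB}$. Without the range identity above, the ``tight log-determinant inequality on Schur complements'' you allude to has no concrete candidate, and the argument does not close.
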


The proof of  Theorem~\ref{equality normal thm} makes use of additional facts and technical results presented in Appendices~\ref{sec properties IM} and \ref{sec mammeta}.


\begin{proof}
In the proof of Theorem~\ref{KGM_ub_thm}, and more precisely in~\eqref{gie tilde}, we introduced an auxiliary quantity $\gietilde$. In~\eqref{KGM_ub_proof_eq10} and~\eqref{KGM_ub_proof_eq11} we also showed that $\gie\leq \gietilde = \reof$ on all QCMs.
We now prove that the opposite inequality $\gie (V_{AB})\geq \gietilde (V_{AB})$ holds as well, at least for normal QCMs $V_{AB}$.

Since $V_{AB}$ is normal and both $\gie$ and $\gietilde$ are invariant under local symplectics, we can assume directly that $V_{AB}$ is in  $xp$-form, $V_{AB}=\lsmatrix Q & 0 \\ 0 & P \rsmatrix$. 
Using Lemma~\ref{xp form Williamson lemma} of Appendix~\ref{app:XP}, we construct a symplectic matrix
\bb
S_{AB} = \begin{pmatrix} M^{-1} & 0 \\ 0 & M^\intercal \end{pmatrix} ,
\label{SAB blocks}
\ee
here written with respect to an $xp$ block partition, such that
\bb
S_{AB} V_{AB} S_{AB}^\intercal = \Lambda_{AB} = \begin{pmatrix} D & 0 \\ 0 & D \end{pmatrix} ,
\label{Williamson VAB normal}
\ee
again with respect to the same partition. This is useful because we can now construct very conveniently a purification $\gamma_{ABE}$ of $V_{AB}$:
\bb
\gamma_{ABE} = (S_{AB}\oplus \id_E) \,\gamma_{ABE}^{(0)}\, (S_{AB}\oplus \id_E)^\intercal\, .
\label{gamma ABE 1}
\ee
Here, with respect to an $AB|E$ block partition we have that
\bb
\gamma_{ABE}^{(0)} = \begin{pmatrix} \Lambda & \sqrt{\Lambda^2-\id}\, \Sigma \\ \sqrt{\Lambda^2-\id}\, \Sigma & \Lambda \end{pmatrix} .
\label{gamma ABE 2}
\ee
where $\Sigma\coloneqq (\Pi^x)^\intercal \Pi^x - (\Pi^p)^\intercal \Pi^p $. An important observation to make is that $\gamma_{ABE}^{(0)}$, $S_{AB}$, and hence also $\gamma_{ABE}$ are all in $xp$-form.

Let us now go back to the sought inequality $\gie (V_{AB})\geq \gietilde (V_{AB})$. Since the left-hand side is defined by a supremum over Gaussian measurements, parametrised by $\Gamma_A, \Gamma_B$, we estimate it as
\bb
\begin{aligned}
\gie (V_{AB}) &\eqt{1} \sup_{\Gamma_A, \Gamma_B} \inf_{\Gamma_E} I_M \left(A:B\right)_{(\gamma_{ABE} + \Gamma_E)/(\gamma_E + \Gamma_E) + \Gamma_A\oplus \Gamma_B} \\
&\geqt{2} \lim_{t\to\infty} \inf_{\Gamma_E} I_M\left(A:B\right)_{(\gamma_{ABE} + \Gamma_E)/(\gamma_E + \Gamma_E) + \Gamma_A(t) \oplus \Gamma_B(t)} \\
&\eqt{3} \inf_{\Gamma_E} I_M\left(A_x:B_x\right)_{((\gamma_{ABE} + \Gamma_E)/(\gamma_{E} + \Gamma_E))^x}\, .
\end{aligned}
\label{intermediate_lb_gie}
\ee
In the above derivation, the equality in~1 is simply~\eqref{gie massaged}, the inequality in~2 follows by choosing $\Gamma_A(t), \Gamma_B(t)$ as defined in~\eqref{seeds homodyne}, and the inequality in~3, which is the real technical hurdle here, follows by combining Lemma~\ref{measured QCMs compact} and Proposition~\ref{homodyne limit prop} of Appendix~\ref{sec mammeta}.

We now look at the set of matrices $((\gamma_{ABE} + \Gamma_E)/(\gamma_E + \Gamma_E))^x$, where $\Gamma_E$ is an \emph{arbitrary} QCM on $E$, not necessarily in $xp$-form. With respect to an $xp$ block partition, let us parametrise it as
\bb
\Gamma_E = \begin{pmatrix} K & J \\ J^\intercal & L \end{pmatrix} .
\label{Gamma E blocks}
\ee
We now compute:
\begin{align*}
&\big((\gamma_{ABE} + \Gamma_E)/(\gamma_E + \Gamma_E)\big)^x \\ &= \Pi^x \big((\gamma_{ABE} + \Gamma_E)/(\gamma_E + \Gamma_E)\big) (\Pi^x)^\intercal \\[0.5ex]
&\eqt{4} \Pi^x (S_{AB}\oplus \id_E) \left(\big(\gamma_{ABE}^{(0)} + \Gamma_E\big)\big/\big(\gamma_E^{(0)} + \Gamma_E\big)\right) (S_{AB}\oplus \id_E)^\intercal (\Pi^x)^\intercal \\[0.5ex]
&\eqt{5} M^{-1} \Pi^x \left(\big(\gamma_{ABE}^{(0)} + \Gamma_E\big)\big/\big(\gamma_E^{(0)} + \Gamma_E\big)\right) (\Pi^x)^\intercal M^{-\intercal} \\[0.5ex]
&\eqt{6} M^{-1} \Pi^x \left( \Lambda -\!\! \sqrt{\Lambda^2-\id}\, \Sigma\, (\Lambda+\Gamma_E)^{-1} \Sigma\, \!\! \sqrt{\Lambda^2-\id} \right) (\Pi^x)^\intercal M^{-\intercal} \\[0.5ex]
&\eqt{7} M^{-1} \left( D -\!\! \sqrt{D^2-\id}\, \Pi^x\, (\Lambda+\Gamma_E)^{-1} (\Pi^x)^\intercal\,\!\! \sqrt{D^2-\id} \right) M^{-\intercal} \\[0.5ex]
&\eqt{8} M^{-1} \left( D -\!\! \sqrt{D^2-\id} \left( K\! +\! D\! -\! J (L\! + \! D)^{-1}\! J^\intercal \right)^{-1}\!\! \sqrt{D^2-\id} \right) M^{-\intercal}.
\end{align*}
The justification of the above derivation is as follows. 4: We made use of~\eqref{gamma ABE 1} and of the covariance property~\eqref{Schur covariance}. 5: We applied~\eqref{SAB blocks}. 6: We computed the Schur complement with the help of~\eqref{gamma ABE 2}. 7: We used the fact that $\Lambda$ and $\Sigma$ are all in $xp$-form. 8: We calculated
\bbb
(\Lambda+\Gamma_{\!E})^{-1} = \begin{pmatrix} D + K & J \\ J^\intercal & D + L \end{pmatrix}  = \begin{pmatrix} \left( K + D - J (L+D)^{-1} J^\intercal \right)^{-1} & \!\!* \\ \!\!* & \!\!* \end{pmatrix}
\eee
thanks to the block inversion formulae~\eqref{inv}. Here, the symbols $*$ indicate unspecified matrices of appropriate size.

By the above calculation, the function on the right-hand side of~\eqref{intermediate_lb_gie} depends only on the combination $K-J(L+D)^{-1}J^\intercal$, which is a rather special function of the free variable $\Gamma_E$. Now, on the one hand
\bbb
K-J(L+D)^{-1}J^\intercal = (\Gamma_E + 0\oplus D)/(L+D) \geq \Gamma_E/L >0\, ,
\eee
where we applied the monotonicity of Schur complements~\eqref{Schur monotonicity}, and then observed that $\Gamma_E/L>0$ follows from the block positivity conditions~\eqref{Schur block positivity}, once one remembers that $\Gamma_E>0$ as $\Gamma_E$ is a QCM. On the other hand, every positive definite matrix $T>0$ can be written as $T=K-J(L+D)^{-1}J^\intercal$ for some $K,J,L$ forming --- according to~\eqref{Gamma E blocks} --- a \emph{pure} QCM $\Gamma_E$ in $xp$-form. In fact, it suffices to set $K=T=L^{-1}$ and $J=0$; this makes the corresponding $\Gamma_E$ pure, as can be seen e.g.\ by comparing it with~\eqref{pure in xp form}, and clearly in $xp$-form. We have just proved that
\begin{align}
&\left\{ K-J(L+D)^{-1}J^\intercal:\, \Gamma_E\geq i\Omega_E \right\} = \left\{T:\, T>0\right\}  \\
&= \left\{ K-J(L+D)^{-1}J^\intercal:\, \text{$\Gamma_E$ pure QCM in $xp$-form} \right\} ,\nonumber
\end{align}
which can be rephrased as
\begin{align}
&\left\{ ((\gamma_{ABE} + \Gamma_E)/(\gamma_E + \Gamma_E))^x:\, \Gamma_E\geq i\Omega_E \right\} \label{crucial identity} \\
&= \left\{ ((\gamma_{ABE} + \Gamma_E)/(\gamma_{E} + \Gamma_E))^x:\, \text{$\Gamma_E$ pure QCM in $xp$-form} \right\} . \nonumber
\end{align}
We make use of this crucial fact to further massage the right-hand side of~\eqref{intermediate_lb_gie}, obtaining that
\bb
\begin{aligned}
\gie (V_{AB}) &\geq\, \inf_{\Gamma_E} I_M\left(A_x:B_x\right)_{((\gamma_{ABE} + \Gamma_E)/(\gamma_E + \Gamma_E))^x} \\
&\eqt{9} \inf_{\substack{\\ \text{$\Gamma_E$ pure} \\ \text{in $xp$-form}}} I_M(A_x : B_x)_{((\gamma_{ABE} + \Gamma_E)/(\gamma_E + \Gamma_E))^x} \\
&\eqt{10} \frac12 \inf_{\substack{\\ \text{$\Gamma_E$ pure} \\ \text{in $xp$-form}}} I_M(A : B)_{(\gamma_{ABE} + \Gamma_E)/(\gamma_E + \Gamma_E)} \\
&\eqt{11} \inf_{\substack{\\ \text{$\Gamma_E$ pure} \\ \text{in $xp$-form}}} \sup_{\Gamma_A, \Gamma_B} I_M(A: B)_{(\gamma_{ABE} + \Gamma_E)/(\gamma_E + \Gamma_E) + \Gamma_A\oplus \Gamma_B} \\
&\geqt{12} \inf_{\text{$\Gamma_E$ pure}} \sup_{\Gamma_A, \Gamma_B} I_M(A : B)_{(\gamma_{ABE} + \Gamma_E)/(\gamma_E + \Gamma_E) + \Gamma_A\oplus \Gamma_B} \\
&= \gietilde(V_{AB})\, .
\end{aligned}
\label{basically_done}
\ee
Here: in~9 we used~\eqref{crucial identity}; in~10 we observed that both $\gamma_{ABE},\Gamma_E$, and hence also the pure QCM $(\gamma_{ABE} + \Gamma_E)/(\gamma_E + \Gamma_E)$, are in $xp$-form, which allowed us to apply Corollary~\ref{pure_xp_IM_cor} of Appendix~\ref{sec properties IM}; in~11 we recalled~\eqref{IMc pure states}, and finally~12 follows elementarily from enlarging the set over which we compute the infimum. Combining the above inequality with~\eqref{KGM_ub_proof_eq10} and~\eqref{KGM_ub_proof_eq11} proves that $\gie(V_{AB})=\reof(V_{AB})$ for all normal QCMs $V_{AB}$.

To complete the proof, it suffices to observe that the direct sum $V_{AB}^{\oplus n}$ of normal matrices $V_{AB}$ is still normal. Hence,
\begin{align*}
\gie(V_{AB}) &= \liminf_{n\to\infty} \frac1n \gie\left( V_{AB}^{\oplus n} \right) \\
&= \liminf_{n\to\infty} \frac1n \reof\left( V_{AB}^{\oplus n} \right) = \reof(V_{AB})\, ,
\end{align*}
where the last equality comes from~\eqref{reof additivity}. 
\end{proof}

Theorem~\ref{equality normal thm} establishes a powerful equivalence of two originally quite distinct Gaussian entanglement measures, for all two-mode Gaussian states and more generally all normal QCMs. The reader could wonder whether normal QCMs constitute a proper subset of all QCMs beyond the two-mode case, In  Appendix~\ref{sec non-normal} we show that this is indeed the case, by constructing an explicit example of a non-normal QCM over a $(1+2)$-mode system. The validity of the conjecture $\gier(V_{AB}) \eqt{?} \reof(V_{AB})$ for non-normal QCMs $V_{AB}$ remains  open in general.

\section{Applications and examples}\label{sec:EX}

\subsection{Secret key from noisy two-mode squeezed states}\label{sec plob}

We now apply our results, and in particular Theorem~\ref{KG_ub_thm}, to study secret key distillation from a class of Gaussian states of immediate physical interest.  The states we will look at are obtained by sending one half of a two-mode squeezed vacuum $\ket{\psi_s}$ across a pure loss channel $\pazocal{E}_\lambda$ (a.k.a.\ a quantum-limited attenuator). We recall that a two-mode squeezed vacuum is defined by
\bb
\begin{aligned}
\ket{\psi_s} &\coloneqq \frac{1}{\cosh(r_s)} \sum_{n=0}^\infty \tanh^n r_s \ket{nn}\, 
\end{aligned}
\label{tmsv}
\ee
where $\ket{n}$ denote local Fock states, and the squeeze parameter $r_s\coloneqq \frac{s \ln 10}{20}$ is expressed as a function of the squeezing intensity $s$ measured in $\si{\decibel}$.
The  pure loss channel $\pazocal{E}_\lambda$ is a Gaussian channel whose action at the level of density operators can be expressed as
\begin{align}
\pazocal{E}_\lambda (\rho) &\coloneqq \Tr_2 \left[ \pazocal{U}_\lambda\left( \rho\otimes \ketbra{0}\right) \pazocal{U}_\lambda^\dag \right] , \label{rho_lambda_s_pure_loss}
\end{align}
where $\pazocal{U}_\lambda \coloneqq e^{i\arccos\sqrt\lambda\, \left(x_1 p_2 - x_2 p_1\right)}$ is the Gaussian unitary that represents the action of a beam splitter with transmissivity $\lambda$, $\Tr_2$ stands for the partial trace over the second mode, and as usual $x_j,p_j$ denote the canonical operators pertaining to the $j^{\text{th}}$ mode.

The R\'enyi-$2$ Gaussian entanglement of formation of the state $\left(\pazocal{E}_\lambda \otimes I\right)(\psi_s)$ can be expressed in closed form by adapting the results for the standard (von Neumann) Gaussian entanglement of formation, which has been computed in \cite{Tserkis18}. We find
\bb
\reof\left( \left(\pazocal{E}_\lambda \otimes I\right)(\psi_s)\right) = \log_2 \left(\frac{1+(1+\lambda)\sinh^2 r_s}{1+(1-\lambda)\sinh^2 r_s} \right) .
\label{reof_ex}
\ee

No expression for the corresponding $1$-LOPC secret key $K_\to\left(\left(\pazocal{E}_\lambda \otimes I\right)(\psi_s)\right)$ seems to be known. However, in order to demonstrate the effectiveness of our estimate~\eqref{KG_ub}, it suffices to consider suitable lower bounds on this quantity. One such bound is the one-way distillable entanglement~\cite{Bennett-distillation, Bennett-error-correction, devetak2005}, denoted with $D_\to$. We succeeded in computing $D_\to\left(\left(\pazocal{E}_\lambda \otimes I\right)(\psi_s)\right)$ because the state in question is `degradable', and hence its one-way distillable entanglement equals the readily found coherent information~\cite{Leditzky-Nila}. The resulting expression is
\bb
\begin{aligned}
K_\to\!\left((\pazocal{E}_\lambda\! \otimes\! I)(\psi_s)\right) &\geq D_\to\!\left((\pazocal{E}_\lambda\! \otimes\! I)(\psi_s)\right) \\
&= g\left(\sinh^2 r_s\right) - g\left((1\!-\!\lambda)\sinh^2 r_s \right) ,
\end{aligned}
\label{Ed_ex}
\ee
where $g(x)\coloneqq (x+1)\log_2(x+1) - x \log_2(x)$ is the \emph{bosonic entropy function}.

\begin{figure}[t]
\centering
\includegraphics[scale=.8]{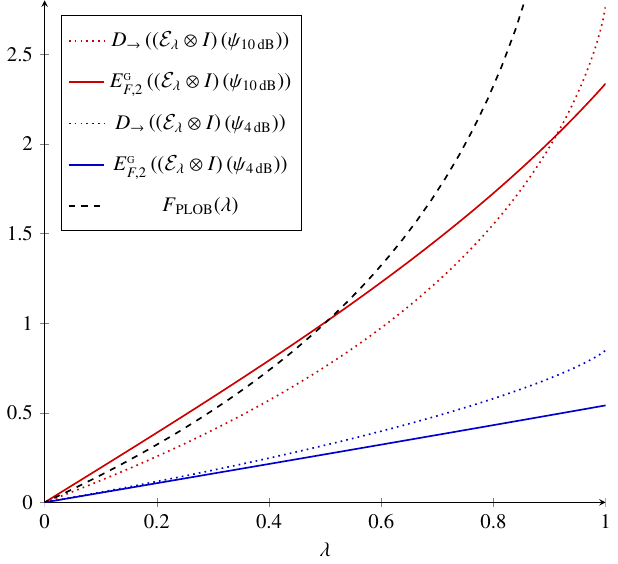}
\caption{
A comparison between the one-way distillable entanglement (communication direction: first-to-second subsystem), which lower bounds the 1-LOPC secret key rate $K_\to$, and the \mbox{R\'enyi-$2$} Gaussian entanglement of formation, which upper bounds the \emph{Gaussian} 1-LOPC secret key rate. Both functions are computed for the states $\left(\pazocal{E}_\lambda \otimes I\right)(\psi_s)$. Explicit formulae are reported in~\eqref{Ed_ex} and~\eqref{reof_ex}, respectively. The comparison proves that the restriction to Gaussian operations reduces the distillable secret key appreciably for either low squeezing or high transmissivity. Our upper bound on the secret key rate can be and often is tighter than the one presented in~\cite{Pirandola17}, equal to $F_{\mathrm{PLOB}}(\lambda)\coloneqq -\log_2(1-\lambda)$ and plotted here as the black dashed line. This is because the bound in~\cite{Pirandola17} applies to all (Gaussian and non-Gaussian) protocols with unbounded energy budget, while ours is tailored to Gaussian protocols and takes into account limitations to the available energy.
}
\label{reof_vs_ed_fig}
\end{figure}

In Figure~\ref{reof_vs_ed_fig} we compare the upper bound~\eqref{reof_ex} for $\kg_\to \left( \left(\pazocal{E}_\lambda \otimes I\right)(\psi_s)\right)$ deduced from Theorem~\ref{KG_ub_thm} and the lower bound~\eqref{Ed_ex} for $K_\to \left( \left(\pazocal{E}_\lambda \otimes I\right)(\psi_s)\right)$. The plots show that the former quantity is smaller than the latter for all $\lambda\in (0,1]$ when $s \leq s_0\approx \SI{4.22}{\decibel}$, and only for sufficiently large $\lambda\geq \lambda_0(s)$ when $s>s_0$. For example, for the nowadays experimentally feasible~\cite{Vahlbruch-10dB, Vahlbruch-15dB} value of $s=\SI{10}{\decibel}$, the useful range becomes $\lambda \geq \lambda_0(\SI{10}{\decibel}) \approx 0.912$. 

This shows that, in several physically interesting regimes (either low squeezing or high transmissivity), our bounds accurately capture and quantify the severity of the Gaussian restriction for the task of distilling secrecy.

\subsection{A conditional mutual information game}\label{sec game}

Finally, we interpret our results in a game-theoretical context. 
We begin by observing, as an interesting side result, that our proof of Theorem~\ref{equality normal thm} implies the following variant of the strong saddle-point property of the log-determinant conditional mutual information.

\begin{prop}[(Saddle-point property of log-determinant conditional mutual information)]
Let $V_{AB}$ be a normal QCM with purification $\gamma_{ABE}$. Then
\bb
\begin{aligned}
\reof(V_{AB}) &= \inf_{\Gamma_{\!E}}\sup_{\Gamma_{\!A}, \Gamma_{\!B}}I_M\left(A\!:\!B|E\right)_{\gamma_{ABE} + \Gamma_{\!A}\oplus \Gamma_{\!B}\oplus \Gamma_{\!E}} \\
&= \sup_{\Gamma_{\!A}, \Gamma_{\!B}} \inf_{\Gamma_{\!E}} I_M\left(A\!:\!B|E\right)_{\gamma_{ABE} + \Gamma_{\!A}\oplus \Gamma_{\!B}\oplus \Gamma_{\!E}}\, .
\end{aligned}
\label{minimax_equality}
\ee
\end{prop}

\begin{proof}
The top-most expression is obviously greater or equal to the bottom-most one, owing to the max-min inequality $\sup_{x\in \pazocal{X}}\inf_{y\in \pazocal{Y}}f(x,y)\leq \inf_{y\in \pazocal{Y}} \sup_{x\in \pazocal{X}} f(x,y)$.
For the opposite inequality, let us write
\begin{align*}
    \sup_{\Gamma_A, \Gamma_B} & \inf_{\Gamma_E} I_M\left(A:B|E\right)_{\gamma_{ABE} + \Gamma_A\oplus \Gamma_B\oplus \Gamma_E} = \gie(V_{AB}) \\
    &\geqt{1} \gietilde(V_{AB}) \\
    &= \inf_{\text{$\Gamma_E$ pure}}\, \sup_{\Gamma_A,\Gamma_B} I_M(A:B)_{(\gamma_{ABE} + \Gamma_E)/(\gamma_E + \Gamma_E) + \Gamma_A\oplus \Gamma_B} \\
    &\geqt{2} \inf_{\Gamma_E}\, \sup_{\Gamma_A,\Gamma_B} I_M(A:B)_{(\gamma_{ABE} + \Gamma_E)/(\gamma_E + \Gamma_E) + \Gamma_A\oplus \Gamma_B}
\end{align*}
Here, 1~follows from~\eqref{basically_done}, while 2~can be deduced by noting that infimum has been enlarged. The proof is then complete.
\end{proof}

Equalities of the form~\eqref{minimax_equality} represent a quintessence of application of methods of game theory in information theory. They appear in the context of a generic problem of finding optimal strategies for communication over a jamming channel. The task is linked to game theory by interpreting the communication as a two-player game between the sender-receiver pair on the one hand, and the malicious jammer on the other; here, the payoff function is some information measure, typically the mutual information~\cite{Borden1985, Stark1988, CT}. The goal of the sender-receiver pair is to maximise the payoff function, whereas the goal of the jammer is to minimise it. If the payoff function exhibits a saddle-point property akin to~\eqref{minimax_equality} on the sets of allowed strategies of the players, then the saddle-point strategies are simultaneously optimal for both players. The game is then said to have a {\it value}, which is equal to the saddle-point value of the payoff function.

Viewed from a game-theoretical perspective, Equation~\eqref{minimax_equality} then ensures the existence of a value of the following Gaussian quantum game with log-determinant conditional mutual information as the payoff function. At the beginning of the game, the players share a fixed pure Gaussian state with QCM
\bb
\label{gammaABE}
\gamma_{ABE} = \begin{pmatrix}
V_{AB} & V_{ABE}\\
V_{ABE}^{T} & V_{E}\\
\end{pmatrix} .
\ee
Clearly, we can see $\gamma_{ABE}$ as a purification of the state with QCM $V_{AB}$. The participants holding subsystems $A$ and $B$, called Alice and Bob in what follows, choose Gaussian measurements characterised by QCMs $\Gamma_{A}$ and $\Gamma_{B}$ to maximise the conditional mutual information $I_M\left(A\!:\!B|E\right)_{\gamma_{ABE} + \Gamma_{\!A}\oplus \Gamma_{\!B}\oplus \Gamma_{\!E}}$, while the jammer Eve holding subsystem $E$ chooses a Gaussian measurement with QCM $\Gamma_{E}$ to minimise it. The equality~\eqref{minimax_equality} then guarantees that such a game has a value, and that this value is equal to $\reof (V_{AB})$ by~\eqref{gie} and~\eqref{equality normal}. 

However, the game does not have the structure of a typical communication game with jamming. Namely, all participants appear symmetrically in the game and, in particular, it is not clearly seen, how the jammer disturbs the communication channel between the sender and the receiver. Nevertheless, we can transform the game into a teleportation game (different from the one presented in~\cite{Pirandola2005}) exhibiting all the features mentioned above. As a bonus, the obtained game reveals how the separability properties of the initial state across the $A:B$ partition and the measurement chosen by the jammer influence the effective state shared by Alice and Bob.

\begin{figure}[t]
\begin{center}
\includegraphics[width=\columnwidth]{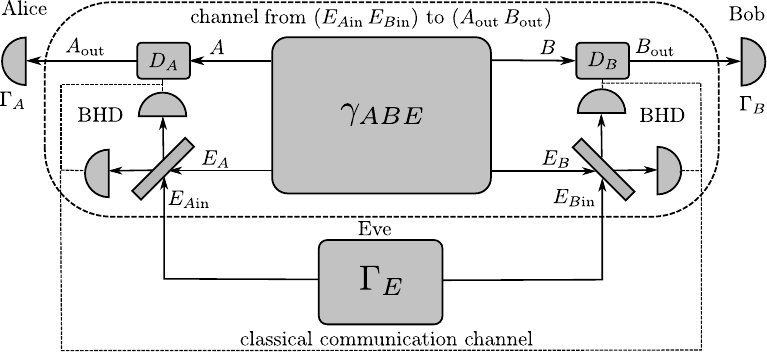}
\caption{Scheme of the teleportation game. The jammer Eve sends through a channel a bipartite Gaussian state of subsystem $E=E_{A{\rm in}}E_{B{\rm in}}$ with zero first moments and QCM $\Gamma_{E}$ to subsystems $A_{\rm out}$ and $B_{\rm out}$ of spatially separated recipients Alice and Bob. The channel transforms QCM $\Gamma_{E}$ as in Eq.~\eqref{Schur_sigma} and it is realised by teleportation (dashed rectangular block) from Eve to Alice and Bob with the help of a shared fixed Gaussian state with QCM $\gamma_{ABE}$, Eq.~\eqref{gammaABE}. Here, the shortcut BHD stands for the balanced homodyne detection and $D_{A}, D_{B}$ are displacements. Alice and Bob finally perform Gaussian measurements on their subsystems $A_{\rm out}$ and $B_{\rm out}$ characterised by QCMs $\Gamma_{A}$ and $\Gamma_{B}$. The payoff function is the log-determinant conditional mutual information $I_M\left(A\!:\!B|E\right)_{\gamma_{ABE} + \Gamma_{\!A}\oplus \Gamma_{\!B}\oplus \Gamma_{\!E}}$ and Alice and Bob choose their measurement QCMs so as to maximise the payoff function, while Eve aims at minimising it. Interestingly, such a game has a value and it is given by the R\'enyi-$2$ Gaussian entanglement of formation $\reof (V_{AB})$ of the reduced QCM $V_{AB}$ of QCM $\gamma_{ABE}$, which belongs to subsystems $A$ and $B$.}
\label{fig_game}
\end{center}
\end{figure}

To find the latter game, we first rewrite equality~\eqref{minimax_equality} as
\begin{equation*}
\inf_{\Gamma_{\!E}}\sup_{\Gamma_{\!A}, \Gamma_{\!B}}I_M\left(A\!:\!B\right)_{\sigma_{AB} + \Gamma_{\!A}\oplus \Gamma_{\!B}}
= \sup_{\Gamma_{\!A}, \Gamma_{\!B}} \inf_{\Gamma_{\!E}} I_M\left(A\!:\!B\right)_{\sigma_{AB} + \Gamma_{\!A}\oplus \Gamma_{\!B}}\, ,
\end{equation*}
where 
\begin{equation}
\sigma_{AB}=V_{AB}-V_{ABE}\left({V_{E}+R\Gamma_{E} R}\right)^{-1} V_{ABE}^\intercal\, ,
\label{Schur_sigma}
\end{equation}
with $R=\mathrm{diag}(1,-1,1,-1,\ldots 1,-1)$ being the diagonal matrix representing on the QCM level the transposition operation $x_{j}\rightarrow x_j^\intercal = x_j$, $p_{j}\rightarrow p_j^\intercal = -p_j$. Here, we used property~\eqref{IM_conditional_Schur} of the conditional mutual information, together with the fact that $R\Gamma_{\!E}R$ is a physical QCM, which runs over the set of all QCMs as $\Gamma_{\!E}$ is varied over the set of all QCMs.

Looking closely at the Schur complement $\sigma_{AB}$, Eq.~\eqref{Schur_sigma}, one further finds~\cite{nogo2,nogo3} that it can be viewed as an output of a Gaussian trace-decreasing completely-positive map characterised by the QCM~\eqref{gammaABE} with the state with QCM $\Gamma_{\!E}$ at the input. Here, $E$ labels the input system and $AB$ the output system. Since the map can be implemented deterministically~\cite{nogo2} via the standard CV teleportation protocol~\cite{Braunstein1998}, we arrive at the teleportation scheme in Figure~\ref{fig_game}.

In view of the saddle-point property~\eqref{minimax_equality}, the depicted teleportation game has a value, which is given exactly by the R\'enyi-$2$ Gaussian entanglement of formation $\reof (V_{AB})$. The optimal strategy for both players is to choose QCMs $\Gamma_{A}, \Gamma_{B}$ and $\Gamma_{E}$, which achieve $\reof (V_{AB})$.   This result makes $\reof(V_{AB})$ a unique instance of an entanglement measure equipped with such a game-theoretical interpretation.

\section{Conclusions} \label{sec:End}
We studied the operational task of distilling a secret key from Gaussian states using local Gaussian operations, local classical processing, and public communication. When only one-way public communication is allowed, we determined the exact expression of the Gaussian secret key for all Gaussian pure states, and established upper bounds that hold for mixed multi-mode Gaussian states in all other cases. These bounds can be used to benchmark state-of-the-art CV QKD protocols against much simpler, Gaussian ones. 
Our findings imply that Gaussian secret key distillation, albeit often possible with positive yield, can be strictly less efficient than a general protocol would be. 
In the Gaussian-restricted scenario, our results often tighten the bounds obtained using the squashed entanglement~\cite{Takeoka14} and the relative entropy of entanglement~\cite{Pirandola17,Pirandola18}.

We also proved a recently proposed conjecture~\cite{GIE} on the equality between Gaussian intrinsic entanglement and R\'enyi-$2$ Gaussian entanglement of formation for all Gaussian states whose covariance matrix is `normal', and in particular for all two-mode Gaussian states. In conjunction with the already proven equality between the latter measure and a Gaussian version of the squashed entanglement~\cite{LL-log-det}, this establishes a coalescent and strongly operationally motivated highway to quantifying entanglement of Gaussian states. We further presented an alternative operational interpretation for this {\it treble} entanglement quantifier in a game-theoretical scenario. The unification presented in this paper stands in stark contrast with the recently uncovered fundamental non-uniqueness of general entanglement measures~\cite{irreversibility, gap}, and points to a much simpler picture in the special case of Gaussian entanglement.

\begin{acknowledgments}
LL and LM contributed equally to this work. LL was supported by the Alexander von Humboldt Foundation. GA acknowledges support by the European Research Council (ERC) under the Starting Grant GQCOP (Grant no.~637352) and by the UK Research and Innovation (UKRI) under BBSRC Grant BB/X004317/1 and EPSRC Grant EP/X010929/1. GA thanks S.~Tserkis and M.~Gideon for fruitful discussions.
\end{acknowledgments}

\bibliography{biblionew}


\allowdisplaybreaks
\appendix

\makeatletter

\setcounter{secnumdepth}{2}

\section{Additional notation and definitions}\label{app:Gauss}

\subsection{Subgroups of the symplectic group}

We start by fixing some notation. Recall that the symplectic form of an $n$-mode system takes the form
\bb
\Omega = \begin{pmatrix} 0 & \id \\ -\id & 0 \end{pmatrix},
\label{Omega}
\ee
where all submatrices are $n\times n$. Let us also define
\bb
H\coloneqq \frac{1}{\sqrt2} \begin{pmatrix} \id & \id \\ i \id & -i \id \end{pmatrix}
\label{H}
\ee
as the unitary matrix that diagonalises $\Omega$, i.e.\ such that
\bb
H^\dag \Omega H = \begin{pmatrix} i\id & 0 \\ 0 & -i \id \end{pmatrix} .
\ee
The symplectic group $\mathbb{SP}(n)$ is formed by all those $2n\times 2n$ real matrices that preserve the symplectic form $\Omega$:
\bb
\mathbb{SP}(n) \coloneqq \left\{ S:\, S\Omega S^\intercal = \Omega \right\} .
\label{Sp}
\ee
The $2n\times 2n$ symplectic orthogonal matrices form a subgroup $\mathbb{K}(n)\subset \mathbb{SP}(n)$ that is isomorphic to the unitary group $\mathbb{U}(n)$. The isomorphism is given by~\cite[\S~2.1.2]{GOSSON}
\bb
\mathbb{U}(n) \ni U \longleftrightarrow K = H \begin{pmatrix} U & 0 \\ 0 & U^* \end{pmatrix} H^\dag \in \mathbb{K}(n)\, .
\label{KU isomorphism}
\ee
Another important subgroup of the symplectic group is isomorphic to $\mathbb{GL}(n)$, the group of $n\times n$ invertible real matrices~\cite[Example~2.5]{GOSSON}:
\bb
\mathbb{GL}(n) \simeq \left\{ \begin{pmatrix} M^{-1} & 0 \\ 0 & M^\intercal \end{pmatrix}:\ \text{$M$ invertible} \right\} \subset \mathbb{SP}(n) \, .
\label{GL isomorphism}
\ee

\subsection{Normal covariance matrices}\label{app:XP}

The phase space of an $m$-mode system is usually divided into its first $m$ and last $m$ components, called the $x$ and $p$ components, respectively. Let us denote by $\Pi^x:\R^{2m}\to \R^{m}$ and $\Pi^p:\R^{2m}\to \R^{m}$ the corresponding orthogonal projectors. Accordingly, we can write e.g.
$\Pi^x \Omega (\Pi^x)^\intercal = 0_n = \Pi^p \Omega (\Pi^p)^\intercal$, $\Pi^x \Omega (\Pi^p)^\intercal = \id_n = - \Pi^p \Omega (\Pi^x)^\intercal$.
Observe that we think of $\Pi^x$ as the $m\times 2m$ rectangular matrix $\lsmatrix \id & 0 \rsmatrix$, and correspondingly of $(\Pi^x)^\intercal$ as the $2m\times m$ rectangular matrix $\lsmatrix \id \\ 0 \rsmatrix$. A matrix $V$ expressed as
\bb
V= \begin{pmatrix} \Pi^x V(\Pi^x)^\intercal & \Pi^x V(\Pi^p)^\intercal \\ \Pi^p V(\Pi^x)^\intercal & \Pi^p V(\Pi^p)^\intercal \end{pmatrix}
\ee
is said to be written with respect to the $xp$ block partition. The representations in~\eqref{Omega} and~\eqref{H} are of this form.

This somehow pleonastic notation comes in handy when multiple systems are involved. In that case, we adopt the convention of indicating as subscripts the systems ($A$, $B$, and so on) and as superscripts the phase space components we want to project onto ($x$ and $p$). For example, the projector onto the $x$ component of the $B$ system will be denoted by $\Pi^x_B$.

Particularly simple matrices are those that have no $xp$ cross-terms. We say that a matrix $V$ is in \textit{${xp}$-form} if
\bb
\Pi^x V (\Pi^p)^\intercal = 0\, .
\label{xp form}
\ee
Observe that since $V$ is symmetric, from the above identity it also follows that $\Pi^p V (\Pi^x)^\intercal = 0$. With respect to an $xp$ block partition, a matrix in $xp$-form then reads
\bb
V = \begin{pmatrix} Q & 0 \\ 0 & P \end{pmatrix} .
\label{V xp}
\ee
When $V$ is a QCM, it is not difficult to verify that the bona fide condition $V\geq i \Omega$ implies that $Q,P>0$ and $Q\geq P^{-1}$. Williamson's theorem~\cite{willy} states that all QCMs can be brought into a special kind of $xp$-form via symplectic congruence: a QCM is in Williamson's form if it can be written as in~\eqref{V xp}, with $Q=P=D$ diagonal. The diagonal entries of $D$, called symplectic eigenvalues of $V$, are uniquely determined by $V$ up to the order and no smaller than $1$ if $V$ is a QCM. They can be characterised as the positive eigenvalues of the matrix $i\Omega V$. Moreover, $D=\id$ if and only if the QCM is pure. This also shows that a $2m\times 2m$ QCM $V$ is pure if and only if
\bb
\rk (V+i\Omega) = m\, ,
\label{rank condition pure}
\ee
Note that upon taking the complex conjugate we can also rephrase this condition as $\rk (V-i\Omega)=m$. For more details, see the discussion below~\cite[Lemma~7]{LL-log-det}.

It is important to realise that if a QCM is in $xp$-form in the first place, it can be brought into Williamson's form by means of symplectics in the subgroup~\eqref{GL isomorphism}, as we argue below. Note that the subgroup in~\eqref{GL isomorphism} is also composed of matrices in $xp$-form!

\begin{lemma} \label{xp form Williamson lemma}
Let $V$ be a QCM in $xp$-form. Then there exists a symplectic matrix $S$ in $xp$-form (i.e.\ belonging to the subgroup~\eqref{GL isomorphism}) such that $SVS^\intercal$ is in Williamson's form.
\end{lemma}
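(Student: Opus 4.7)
Since $V$ is in $xp$-form, write $V = Q \oplus P$ with $Q,P>0$ (positivity follows from the bona fide condition, as remarked just above the statement). A symplectic matrix in $xp$-form, i.e.\ in the subgroup~\eqref{GL isomorphism}, has the shape $S = M^{-1}\oplus M^\intercal$ for some $M\in \mathbb{GL}(n,\R)$, so that
\[
S V S^\intercal \;=\; \bigl(M^{-1} Q M^{-\intercal}\bigr) \,\oplus\, \bigl(M^\intercal P M\bigr).
\]
For this to be in Williamson's form, both blocks must coincide with a common positive diagonal matrix $D$. Thus the task reduces to producing $M\in \mathbb{GL}(n,\R)$ and $D>0$ diagonal such that $M^{-1} Q M^{-\intercal} = D = M^\intercal P M$.

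The plan is to read $M$ off the orthogonal spectral decomposition of the symmetric positive definite matrix $Q^{1/2} P Q^{1/2}$. By the spectral theorem I can write $Q^{1/2} P Q^{1/2} = O D^2 O^\intercal$ for some orthogonal $O$ and positive diagonal $D$; the diagonal entries $d_j^2$ are the eigenvalues of $QP$, which are exactly the squared symplectic eigenvalues of $V$ (since $(i\Omega V)^2 = PQ \oplus QP$). I then define $M := Q^{1/2} O D^{-1/2}$, which is manifestly invertible, and verify the two required identities by direct substitution. Using only $O O^\intercal = I$ and commutativity of diagonal matrices, the computation gives $M^{-1} Q M^{-\intercal} = D^{1/2} O^\intercal Q^{-1/2} Q Q^{-1/2} O D^{1/2} = D$ and $M^\intercal P M = D^{-1/2} O^\intercal (Q^{1/2} P Q^{1/2}) O D^{-1/2} = D^{-1/2} D^2 D^{-1/2} = D$, which is exactly what is needed.

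The only conceptually non-trivial point, and where I expect the main obstacle to lie, is recognising that two simultaneous congruence-diagonalisation conditions (one on $Q$, one on $P$) can be enforced by a single $M$; a priori one might expect this to over-determine the problem. The reason it works is that the \emph{product} $QP$, though not symmetric, is similar via $Q^{1/2}$ to the symmetric matrix $Q^{1/2} P Q^{1/2}$, and it is this orthogonal diagonalisability that provides just enough freedom to align $M^{-1} Q M^{-\intercal}$ and $M^\intercal P M$ to the same diagonal $D$. Once this structural observation is in place, no further technicalities arise.
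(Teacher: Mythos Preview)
Your proof is correct and essentially identical to the paper's own argument: diagonalise $Q^{1/2} P Q^{1/2} = O D^2 O^\intercal$ by the spectral theorem, set $M = Q^{1/2} O D^{-1/2}$, and verify by direct substitution that $M^{-1} Q M^{-\intercal} = D = M^\intercal P M$. The added commentary on why a single $M$ suffices is a nice clarification but not needed for the proof to go through.
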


\begin{proof}
Let $V$ be as in~\eqref{V xp}, and call $m$ the number of modes. Using the spectral theorem, choose an orthogonal $m\times m$ matrix $O$ with the property that $O^\intercal Q^{1/2} P Q^{1/2} O = D^2$ is diagonal. Set $M\coloneqq Q^{1/2} O D^{-1/2} $. We have that
\begin{align*}
M^{-1} Q M^{-\intercal} &= D^{1/2} O^\intercal Q^{-1/2}\, Q\, Q^{-1/2} O D^{1/2} = D\, ,\\
M^\intercal P M &= D^{-1/2} O^\intercal Q^{1/2}\, P\, Q^{1/2} O D^{-1/2} = D\, ,
\end{align*}
from which it follows that
\bbb
\begin{pmatrix} M^{-1} & 0 \\ 0 & M^\intercal \end{pmatrix} \begin{pmatrix} Q & 0 \\ 0 & P \end{pmatrix} \begin{pmatrix} M^{-\intercal} & 0 \\ 0 & M \end{pmatrix} = \begin{pmatrix} D & 0 \\ 0 & D \end{pmatrix}
\eee
is in Williamson's form.
\end{proof}

For more background on Williamson's  decomposition, we refer the reader to~\cite{GOSSON}. A particularly instructive case is that of pure QCMs: a pure QCM $\gamma$ is in $xp$-form if and only if
\bb
\gamma = \begin{pmatrix} Q & 0 \\ 0 & Q^{-1}  \end{pmatrix}
\label{pure in xp form}
\ee
with respect to an $xp$ block partition, with $Q>0$. To see that this is the case, it suffices to remember that pure QCMs are symmetric symplectic matrices, and to use the relation that defines symplecticity. Observe that pure QCMs in $xp$-form are exactly those symmetric matrices that belong to the subgroup~\eqref{GL isomorphism} of the symplectic group.

We can now give the following definition.

\begin{Def} \label{normal def}
A bipartite QCM $V_{AB}$ is called `normal' if it can be brought into $xp$-form by local symplectic operations, i.e.\ if there are local symplectic matrices $S_A$, $S_B$ such that
\bb
\Pi^x_{AB} \left(S_A\oplus S_B\right) V_{AB} \left( S_A^\intercal \oplus S_B^\intercal\right) (\Pi^p_{AB})^\intercal = 0\, .
\ee
\end{Def}

The following is an easy consequence of results of~\cite{giedkemode}.

\begin{lemma}[{\cite{giedkemode}}] \label{pure are normal}
All pure QCMs with arbitrary many modes on each side are normal.
\end{lemma}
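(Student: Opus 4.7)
The plan is to reduce the statement to the mode-wise normal-form theorem for pure bipartite Gaussian states established by Giedke et al.~\cite{giedkemode}. According to that result, for any pure QCM $\gamma_{AB}$ on $m_A + m_B$ modes there exist local symplectics $S_A \in \mathbb{SP}(m_A)$ and $S_B \in \mathbb{SP}(m_B)$ such that $(S_A \oplus S_B)\gamma_{AB}(S_A^\intercal \oplus S_B^\intercal)$ decomposes into a direct sum of two-mode squeezed vacuum QCMs -- one for each pair of Schmidt-matched modes -- together with single-mode vacuum QCMs on the remaining, unmatched modes on whichever side is larger.

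First I would observe that each building block of this normal form is already in $xp$-form. In the $xp$ block partition used throughout the paper, the two-mode squeezed vacuum QCM between one of Alice's modes and one of Bob's modes reads
\begin{equation*}
\gamma_{\text{TMSV}}(r) = \begin{pmatrix} \cosh(2r)\, \id_2 & \sinh(2r)\, Z \\ \sinh(2r)\, Z & \cosh(2r)\, \id_2 \end{pmatrix},
\end{equation*}
with $Z = \mathrm{diag}(1,-1)$, and by inspection every $xp$ cross-term vanishes; the vacuum QCM $\id_2$ is trivially in $xp$-form as well. In the direct-sum convention of the paper~\cite{notedirect}, under which multi-mode systems are systematically arranged in $xp$-ordering, the direct sum of any collection of matrices in $xp$-form is again in $xp$-form, and this property is preserved under appending extra vacuum modes.

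Combining these two observations, the locally-transformed QCM $(S_A \oplus S_B)\gamma_{AB}(S_A^\intercal \oplus S_B^\intercal)$ is in $xp$-form, which is exactly the defining property of a normal QCM in Definition~\ref{normal def}. I do not expect any real obstacle: the non-trivial content is entirely encoded in the fact that the Gaussian Schmidt decomposition can be implemented by \emph{local} symplectics alone -- which is the main statement of \cite{giedkemode} -- while the remaining checks reduce to routine bookkeeping with the $xp$ block partition.
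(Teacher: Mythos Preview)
Your proposal is correct and matches the paper's approach: the paper does not spell out a proof but simply attributes the lemma to~\cite{giedkemode}, noting it is ``an easy consequence'' of the mode-wise Schmidt decomposition established there. Your argument --- local symplectics bring $\gamma_{AB}$ to a direct sum of two-mode squeezed vacua plus vacuum modes, each of which is manifestly in $xp$-form --- is exactly the intended derivation.
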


The following result is due to Simon~\cite{Simon00} and independently to Duan et al.~\cite{Duan2000}.

\begin{lemma}[{\cite{Simon00, Duan2000}}] \label{2-mode are normal}
All two-mode QCMs are normal.
\end{lemma}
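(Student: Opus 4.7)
The plan is to reduce the assertion to the classical Simon--Duan local normal form for two-mode Gaussian covariance matrices, and then to observe that this normal form already sits inside the $xp$-form. I would write $V_{AB}$ in the $AB$ block decomposition as
\[
V_{AB}=\begin{pmatrix}\alpha & \gamma\\ \gamma^\intercal & \beta\end{pmatrix}
\]
with $2\times 2$ real blocks. As a first step, I would apply Williamson's theorem separately to each single-mode local block: since the bona fide condition $V_{AB}\geq i\Omega_{AB}$ implies $\alpha\geq i\Omega_{\!A}$ and $\beta\geq i\Omega_{\!B}$, there exist $S_A,S_B\in\mathbb{SP}(1)$ with $S_A\alpha S_A^\intercal=\nu_A\id_2$ and $S_B\beta S_B^\intercal=\nu_B\id_2$, where $\nu_A,\nu_B\geq 1$ are the local symplectic eigenvalues. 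After the local symplectic congruence by $S_A\oplus S_B$, the correlation block becomes $\tilde\gamma\coloneqq S_A\gamma S_B^\intercal$ while the two local blocks are scalar multiples of the identity.

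The second step exploits the residual freedom. The stabiliser of $\nu\id_2$ under $S\mapsto S(\nu\id_2)S^\intercal$ inside $\mathbb{SP}(1)$ is exactly the compact subgroup $\mathbb{K}(1)\simeq SO(2)$ of phase-space rotations. Using the real singular-value decomposition of $\tilde\gamma$, together with the fact that any reflection in $O(2)$ can be absorbed into a sign flip of a singular value, I would find $R_A,R_B\in SO(2)$ such that $R_A\tilde\gamma R_B^\intercal=\operatorname{diag}(c_1,c_2)$ for some $c_1,c_2\in\R$. After this second layer of local symplectics the QCM takes the Simon normal form, which in the mode-wise ordering $(x_A,p_A,x_B,p_B)$ reads
\[
\begin{pmatrix}
\nu_A & 0 & c_1 & 0\\
0 & \nu_A & 0 & c_2\\
c_1 & 0 & \nu_B & 0\\
0 & c_2 & 0 & \nu_B
\end{pmatrix}.
\]

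The final step is a direct inspection: every entry of the matrix above that links an $x$-coordinate (rows/columns $1$ and $3$) to a $p$-coordinate (rows/columns $2$ and $4$) vanishes, so the resulting QCM satisfies $\Pi^x_{AB}V_{AB}(\Pi^p_{AB})^\intercal=0$; by Definition~\ref{normal def}, $V_{AB}$ is therefore normal. The main technical point is the passage from $O(2)\times O(2)$ to $SO(2)\times SO(2)$ in the second step: one must check that restricting to proper rotations (as forced by $\mathbb{K}(1)\simeq SO(2)$) does not lose any diagonalising power. This is handled by the standard trick of absorbing any stray reflection into a sign change of one of the $c_i$, possibly combined with a $\pi/2$ rotation that swaps the two axes; this is the only mildly delicate bookkeeping in an otherwise direct reduction to Williamson's theorem plus real SVD.
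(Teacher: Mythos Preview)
Your proposal is correct and is exactly the Simon--Duan argument that the paper merely cites without reproducing (the paper gives no self-contained proof of this lemma, only the attribution to~\cite{Simon00,Duan2000}). The only cosmetic difference is ordering: the paper's convention is $(x_A,x_B,p_A,p_B)$ rather than the mode-wise $(x_A,p_A,x_B,p_B)$ you used, so in the paper's basis your final matrix would appear as the manifestly block-diagonal $\left(\begin{smallmatrix}Q&0\\0&P\end{smallmatrix}\right)$ with $Q=\left(\begin{smallmatrix}\nu_A&c_1\\c_1&\nu_B\end{smallmatrix}\right)$ and $P=\left(\begin{smallmatrix}\nu_A&c_2\\c_2&\nu_B\end{smallmatrix}\right)$; your identification of the vanishing cross-terms is of course equivalent.
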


For a real parameter $t>0$, consider the QCMs on systems $A$ and $B$ defined by
\bb
\begin{aligned}
\Gamma_{\!A}(t) &\coloneqq t\, \big(\Pi^x_A\big)^{\!\intercal}\, \Pi^x_A + t^{-1} \big(\Pi^p_A\big)^{\!\intercal}\, \Pi^p_A = \begin{pmatrix} t\id_A^x & 0 \\ 0 & t^{-1}\id_A^p \end{pmatrix} \, ,\\
\Gamma_{\!B}(t) &\coloneqq t\, \big(\Pi^x_B\big)^{\!\intercal}\, \Pi^x_B + t^{-1} \big(\Pi^p_B\big)^{\!\intercal}\, \Pi^p_B = \begin{pmatrix} t\id_B^x & 0 \\ 0 & t^{-1}\id_B^p \end{pmatrix}\, ,
\end{aligned}
\label{seeds homodyne}
\ee
where the block matrices on the r.h.s.\ are written with respect to an $xp$ block partition. In the limit $t\to 0^+$, these QCMs are the seeds of two measurements of the $x$ quadrature, while in the opposite limit $t\to \infty$ they identify measurements of the $p$ quadrature. Occasionally, if there is no ambiguity on the partition used, we will remove superscripts and subscripts from expressions like~\eqref{seeds homodyne}.

\subsection{Properties of Schur complements} \label{Schur_subsec}

Here we discuss some applications of the notion of Schur complement. We have seen in the main text that for a $2\times 2$ block matrix
\bb
R=\begin{pmatrix} X & Z \\ Z^\intercal & Y \end{pmatrix}
\label{R}
\ee
one sets
\bb
R/X\coloneqq Y -Z^\intercal X^{-1} Z\, ,
\label{Schur}
\ee
provided that $X$ is invertible. The expression in~\eqref{Schur} is called the Schur complement of $R$ with respect to $X$. Schur complements are instrumental in matrix analysis~\cite{ZHANG}, and have found widespread applications in the context of CV quantum information as well~\cite{Lami16, LL-log-det}. Here we limit ourselves to recalling a few notable properties of these objects, referring the interested reader to~\cite{ZHANG}.
\begin{enumerate}
\item[(i)] Schur determinant factorisation formula:
\bb
\det R = (\det X) \big(\det (R/X)\big)
\label{det fact}
\ee
\item[(ii)] Inertia additivity formula:
\bb
\rk R = \rk X + \rk (R/X)\, .
\label{inertia additivity}
\ee
\item[(iii)] Block inverse:
\bb
R^{-1} = \begin{pmatrix} (R/Y)^{-1} & & -X^{-1} Y (R/X)^{-1} \\[0.7ex] -(R/X)^{-1} Y^T X^{-1} && (R/X)^{-1} \end{pmatrix} ,
\label{inv}
\ee
provided that $R$ as well as $X$ are invertible.
\item[(iv)] Block positivity conditions:
\bb
R>0\qquad \Longleftrightarrow\qquad X>0\quad \text{and}\quad R/X>0\, .
\label{Schur block positivity}
\ee
\item[(v)] Covariance under congruences: for all invertible $M,N$, it holds that
\bb
\left((M\oplus N)\, R\, (M\oplus N)^\intercal \right) \big/ (MXM^\intercal) = N \left(R/X\right) N^\intercal\, .
\label{Schur covariance}
\ee
\item[(vi)] Variational representation: when $R>0$ is positive definite, it holds that
\bb
R/X = \max\left\{ T:\, R\geq 0\oplus T \right\} .
\label{Schur variational}
\ee
Note that the set on the right-hand side is a matrix set, hence it is not a priori guaranteed to have a maximum. What~\eqref{Schur variational} tells us is that such a maximum however does exist, and that it coincides with the Schur complement of $R$ with respect to $X$. From~\eqref{Schur variational} it is straightforward to deduce the following monotonicity property: if $R>0$ and $R'=\lsmatrix X' & Z' \\ (Z')^\intercal & Y' \rsmatrix>0$, then
\bb
R\geq R' \quad \Longrightarrow\quad R/X \geq R'/X'\, .
\label{Schur monotonicity}
\ee
\end{enumerate}


\subsection{Properties of the log-determinant mutual information} \label{sec properties IM}

Here we summarise known properties of the log-determinant mutual information~\eqref{IM} and establish some new ones --- such as a type of uniform continuity --- that will play an important role in the proof of Theorem~\ref{equality normal thm}. A list of basic properties is as follows.
\begin{enumerate}[(i)]
\item Invariance under local symplectics: for all pairs of symplectic matrices $S_A, S_B$, it holds that
\bb
I_M(A:B)_{(S_A\oplus S_B)\, V_{AB}\, (S_A^\intercal \oplus S_B^\intercal)} \equiv I_M(A:B)_{V_{AB}}\, .
\label{IM invariance symplectics}
\ee
\item Invariance under rescaling: whenever $t>0$, we have that
\bb
I_M(A:B)_{V} \equiv I_M(A:B)_{tV} \, .
\label{IM invariance rescaling}
\ee
\item Data processing inequality: for every positive semidefinite matrix $K_A\geq 0$ on system $A$, it holds that
\bb
I_M(A:B)_{V_{AB} + K_A} \leq I_M(A:B)_{V_{AB}}\, ,
\label{data processing}
\ee
where $K_A$ is a shorthand for $K_A\oplus 0_B$. This is clear if one thinks of $I_M(A\!:\!B)_V$ as the Shannon mutual information of a Gaussian random variable with covariance matrix $V$. Adding a local positive semidefinite matrix corresponds to adding an independent and normal local random variable.
\item For every pure QCM $\gamma_{AB}$, it holds that
\bb
I_M(A:B)_\gamma = 2M(\gamma_A)=2M(\gamma_B)\, .
\label{IM pure}
\ee
This follows trivially from the fact that the local reduced states corresponding to the pure Gaussian state with QCM $\gamma_{AB}$ have the same R\'enyi-$2$ entropies, and moreover $M(\gamma_{AB})=0$ as $\det \gamma_{AB}=1$.
\item For all positive matrices $V>0$, it holds that~\cite[Eq.~(29)]{LL-log-det}
\bb
I_M(A:B)_V = I_M(A:B)_{V^{-1}}\, .
\label{IM inversion}
\ee
\end{enumerate}

We now explore some further properties of the log-determinant mutual information~\eqref{IM}. Consider a bipartite QCM $V_{AB}$, and define
\begin{align}
V_{AB}^x &\coloneqq \Pi_{AB}^x V_{AB} (\Pi_{AB}^x)^\intercal\, , \label{x_proj} \\
V_{AB}^p &\coloneqq \Pi_{AB}^p V_{AB} (\Pi_{AB}^p)^\intercal\, . \label{p_proj}
\end{align}
Since $V_{AB}^x$ and $V_{AB}^p$ are principal submatrices of $V_{AB}$, we can take Schur complements with respect to them. Also, since they still retain formally a block form with respect to the partition $A:B$, we can also compute their log-determinant mutual information, which we denote for instance by $I_M(A_x\!:\!B_x)_{V_{AB}^x}$. With this notation in mind, we start by showing the following.

\begin{lemma} \label{I M decomposition xp lemma}
For all bipartite QCMs $V_{AB}$, the log-determinant mutual information admits the decomposition
\bb
I_M(A:B)_{V_{AB}} = I_M(A_x:B_x)_{V_{AB}^x} + I_M(A_p:B_p)_{V_{AB}/V_{AB}^x}\, .
\label{I M decomposition xp}
\ee
\end{lemma}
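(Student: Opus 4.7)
My plan is to derive~\eqref{I M decomposition xp} as an immediate algebraic consequence of the Schur determinant factorisation formula~\eqref{det fact}, applied three times. Starting from the definition
\begin{equation*}
I_M(A:B)_{V_{AB}} = \frac{1}{2}\log_2 \frac{\det V_A \det V_B}{\det V_{AB}},
\end{equation*}
I would apply the identity $\det M = \det M_{11}\cdot\det(M/M_{11})$ to each of the three determinants with $M_{11}$ being the relevant ``$x$-block'' of the QCM: this yields the three factorisations $\det V_{AB} = \det V_{AB}^x\cdot\det(V_{AB}/V_{AB}^x)$, $\det V_A = \det V_A^x\cdot\det(V_A/V_A^x)$, and $\det V_B = \det V_B^x\cdot\det(V_B/V_B^x)$. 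The QCM hypothesis $V_{AB}\geq i\Omega_{AB}$ enters only as the strict positive-definiteness of $V_{AB}$, which guarantees that every principal $x$-block is invertible and every Schur complement involved is well defined.

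Substituting the three factorisations into the log-determinant and splitting the resulting logarithm gives
\begin{equation*}
I_M(A:B)_{V_{AB}} = \frac{1}{2}\log_2\frac{\det V_A^x \det V_B^x}{\det V_{AB}^x} + \frac{1}{2}\log_2\frac{\det (V_A/V_A^x)\det (V_B/V_B^x)}{\det (V_{AB}/V_{AB}^x)}.
\end{equation*}
The first summand is exactly $I_M(A_x\!:\!B_x)_{V_{AB}^x}$ by inspection, since the principal $A$- and $B$-blocks of $V_{AB}^x$ are precisely $V_A^x$ and $V_B^x$. The second summand is then to be identified with $I_M(A_p\!:\!B_p)_{V_{AB}/V_{AB}^x}$ upon reading the $A_p$- and $B_p$-marginals of the Schur complement $V_{AB}/V_{AB}^x$ as $V_A/V_A^x$ and $V_B/V_B^x$ respectively.

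The only non-routine step, and the point I expect to need an explicit clarifying remark, is this last identification of marginals. The naive principal-submatrix reading $\Pi_{A_p}(V_{AB}/V_{AB}^x)\Pi_{A_p}^\intercal$ agrees with $V_A/V_A^x$ only when $V_{AB}$ is in $xp$-form (in which case $V_{AB}/V_{AB}^x$ simply collapses to $V_{AB}^p$ and both reduce to $V_A^p$); in general the two objects differ, because conditioning on Bob's $x$-coordinates carries extra information about Alice's $p$-coordinates whenever the cross-system, cross-quadrature block $V_{A_p B_x}$ is nonzero. The convention compatible with the lemma is therefore the Schur-complement one, which is the one dictated by the quotient formula for Schur complements and by the probabilistic picture of $V_{AB}/V_{AB}^x$ as the conditional covariance of the $p$-coordinates of $AB$ given the $x$-coordinates. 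Once this convention is made explicit, the identity reduces to the three-line computation above.
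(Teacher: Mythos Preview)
Your approach is precisely that of the paper: three applications of the Schur determinant factorisation formula~\eqref{det fact}, one each for $V_{AB}$, $V_A$, and $V_B$, followed by regrouping the logarithms. Your explicit remark on the marginal identification---that the principal $A_p$-block of $V_{AB}/V_{AB}^x$ need not coincide with $V_A/V_A^x$ unless $V_{AB}$ is in $xp$-form, so the notation $I_M(A_p\!:\!B_p)_{V_{AB}/V_{AB}^x}$ must be read via the Schur-complement (conditional-covariance) convention---is a genuine subtlety that the paper's proof absorbs silently into its final equality.
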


\begin{proof}
It suffices to apply repeatedly the Schur determinant factorisation formula~\eqref{det fact}
\begin{align*}
I_M(A:B)_{V_{AB}}\! &= \frac12 \log_2 \frac{(\det V_A) (\det V_B)}{\det V_{AB}} \\
&= \frac12 \log_2 \frac{(\det V_{A}^x) (\det V_A/V_A^x) (\det V_B^x) (\det V_B/V_B^x)}{(\det V_{AB}^x) (\det V_{AB}/V_{AB}^x)} \\
&= \frac12 \log_2 \frac{(\det V_{A}^x) (\det V_B^x) }{(\det V_{AB}^x)} \\
&+ \frac12 \log_2 \frac{(\det V_A/V_A^x) (\det V_B/V_B^x)}{ (\det V_{AB}/V_{AB}^x)} \\
&= I_M(A_x:B_x)_{V_{AB}^x} + I_M(A_p:B_p)_{V_{AB}/V_{AB}^x}\, .
\end{align*}
This concludes the proof.
\end{proof}

\begin{cor} \label{pure_xp_IM_cor}
Let $\gamma_{AB}$ be a pure QCM in $xp$-form, i.e.\ let it be as in~\eqref{pure in xp form}. Then
\bb
I_M(A_x:B_x)_{\gamma_{AB}^x} = I_M(A_p:B_p)_{\gamma_{AB}^p} = \frac12 I_M(A:B)_{\gamma_{AB}} = M(\gamma_A) \, .
\label{pure_xp_IM}
\ee
\end{cor}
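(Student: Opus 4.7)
My plan is to deduce the corollary from Lemma~\ref{I M decomposition xp lemma}, the inversion property~\eqref{IM inversion}, and~\eqref{IM pure} in essentially three lines. First, I will specialise the decomposition~\eqref{I M decomposition xp} to a pure $\gamma_{AB}$ in $xp$-form. Writing $\gamma_{AB}=\smatrix{Q & 0 \\ 0 & Q^{-1}}$ as in~\eqref{pure in xp form}, the off-diagonal $xp$ block vanishes, so the Schur complement degenerates: $\gamma_{AB}/\gamma_{AB}^x = Q^{-1} = \gamma_{AB}^p$. Lemma~\ref{I M decomposition xp lemma} then yields
\begin{equation*}
I_M(A:B)_{\gamma_{AB}} = I_M(A_x:B_x)_{Q} + I_M(A_p:B_p)_{Q^{-1}}\, .
\end{equation*}

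Second, the key observation is that the two summands on the right-hand side are actually equal. The bipartitions on the $x$- and $p$-subspaces are formally identical (both are the splitting of $Q$ into $A|B$ blocks), so $I_M(A_p:B_p)_{Q^{-1}} = I_M(A_x:B_x)_{Q^{-1}}$. The inversion property~\eqref{IM inversion}, applied to the positive definite matrix $Q$ (positivity follows from the bona fide condition on $\gamma_{AB}$, cf.\ the discussion below~\eqref{V xp}), then gives $I_M(A_x:B_x)_{Q^{-1}} = I_M(A_x:B_x)_{Q}$. Substituting back, the two terms coincide, and each equals $\tfrac{1}{2}I_M(A:B)_{\gamma_{AB}}$.

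Third, I will invoke~\eqref{IM pure}, which for any pure QCM $\gamma_{AB}$ asserts $I_M(A:B)_\gamma = 2M(\gamma_A)$. Dividing by two closes the identity chain claimed in~\eqref{pure_xp_IM}. I do not anticipate any real obstacle here: the work was already done in Lemma~\ref{I M decomposition xp lemma} and in the inversion property~\eqref{IM inversion}, and the only subtlety to flag is that one must justify that $Q>0$ to invoke~\eqref{IM inversion}, which is immediate from the Williamson/Schur analysis of $xp$-form QCMs reviewed just after equation~\eqref{V xp}.
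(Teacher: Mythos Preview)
Your proposal is correct and follows essentially the same route as the paper's own proof: both arguments observe that for a pure QCM in $xp$-form the Schur complement in Lemma~\ref{I M decomposition xp lemma} degenerates to $\gamma_{AB}^p = Q^{-1}$, then use the inversion property~\eqref{IM inversion} to identify the two summands, and finally invoke~\eqref{IM pure}. Your explicit remark that the $A|B$ splittings on the $x$- and $p$-subspaces are formally identical, and your justification of $Q>0$, make the write-up slightly more careful but not substantively different.
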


\begin{proof}
Remembering~\eqref{pure in xp form} and~\eqref{IM inversion}, we see that
\bbb
I_M(A_x:B_x)_{\gamma_{AB}^x} = I_M(A_p:B_p)_{\left(\gamma_{AB}^p\right)^{-1}} = I_M(A_p:B_p)_{\gamma_{AB}^p}\, .
\eee
Also, since $\gamma_{AB}$ has no off-diagonal terms, it holds that $\gamma_{AB}/\gamma_{AB}^x = \gamma_{AB}^p$. Combining this with~\eqref{I M decomposition xp} and~\eqref{IM pure} yields
\begin{align*}
2M(\gamma_A) &= I_M(A:B)_{\gamma_{AB}} = I_M(A_x:B_x)_{\gamma_{AB}^x} + I_M(A_p:B_p)_{\gamma_{AB}^p} \\
&= 2 I_M(A_x:B_x)_{\gamma_{AB}^x}\, ,
\end{align*}
completing the proof.
\end{proof}

\begin{lemma}[(Uniform continuity of log-determinant mutual information)] \label{continuity mutual information lemma}
Let $\kappa\geq 1$, and consider two matrices $V_{AB},W_{AB}$ such that $V_{AB},W_{AB}\geq \kappa^{-1} \id_{AB}$. Then it holds that
\bb
\left| I_M(A:B)_V - I_M(A:B)_W\right| \leq \kappa \log_2(e)\,\|V_{AB} - W_{AB}\|_1\, ,
\label{continuity mutual information}
\ee
where $\|\cdot\|_1$ denotes the trace norm.
\end{lemma}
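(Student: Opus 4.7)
The plan is to reduce the problem to a uniform continuity estimate for the log-determinant functional $M(X) = \frac12 \log_2 \det X$ on the cone $\{X : X \geq \kappa^{-1}\id\}$, and then assemble the three log-determinant terms appearing in the definition $I_M(A\!:\!B)_V = M(V_A) + M(V_B) - M(V_{AB})$.

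First I would prove the single-matrix bound: for any two positive matrices $X, Y \geq \kappa^{-1}\id$ of the same size,
\begin{equation*}
|M(X) - M(Y)| \leq \frac{\kappa \log_2(e)}{2}\, \|X - Y\|_1\, .
\end{equation*}
Since $\{Z : Z \geq \kappa^{-1}\id\}$ is convex, the segment $Z(t) \coloneqq Y + t(X - Y)$ lies in this set for all $t \in [0,1]$, so $Z(t)^{-1}$ exists and $\|Z(t)^{-1}\|_\infty \leq \kappa$. Using $\frac{d}{dt} \ln \det Z(t) = \tr\bigl(Z(t)^{-1}(X-Y)\bigr)$ and the trace-operator norm duality $|\tr(AB)| \leq \|A\|_\infty \|B\|_1$, integration over $[0,1]$ yields the claim after converting $\ln$ to $\log_2$.

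Next I would verify that $V_A, V_B, W_A, W_B$ all satisfy the same lower bound $\geq \kappa^{-1}\id$: this follows immediately by restricting the quadratic form of $V_{AB}$ (resp.~$W_{AB}$) to vectors supported on $A$ (resp.~$B$). Applying the single-matrix bound three times gives
\begin{equation*}
\bigl|I_M(A\!:\!B)_V - I_M(A\!:\!B)_W\bigr| \leq \frac{\kappa \log_2(e)}{2}\, \Bigl( \|V_A - W_A\|_1 + \|V_B - W_B\|_1 + \|V_{AB} - W_{AB}\|_1 \Bigr)\, .
\end{equation*}
To collapse the first two summands into $\|V_{AB} - W_{AB}\|_1$, I would invoke the pinching inequality: for the projection $P$ onto the $A$-subspace and its complement $P^\perp$ onto the $B$-subspace,
\begin{equation*}
\|V_A - W_A\|_1 + \|V_B - W_B\|_1 = \bigl\| P(V_{AB}\!-\!W_{AB})P + P^\perp(V_{AB}\!-\!W_{AB})P^\perp \bigr\|_1 \leq \|V_{AB} - W_{AB}\|_1\, ,
\end{equation*}
where the inequality is the standard pinching bound for the trace norm of a Hermitian matrix. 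Combining gives the desired factor of $\kappa \log_2(e)$.

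The only subtle point is the pinching step: one must remember that pinching is trace-norm non-increasing for Hermitian operands, not for arbitrary ones, but $V_{AB} - W_{AB}$ is symmetric. Everything else is a routine mean-value computation, so I expect no real obstacle.
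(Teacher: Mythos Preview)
Your proof is correct and follows the same overall structure as the paper's: establish a single-matrix bound $|M(X)-M(Y)|\leq \tfrac{\kappa\log_2(e)}{2}\|X-Y\|_1$ for $X,Y\geq\kappa^{-1}\id$, apply it three times, and collapse the local terms via the pinching inequality.

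The one genuine difference is in how the single-matrix bound is obtained. The paper argues eigenvalue-wise: it writes $\log_2\det X-\log_2\det Y=\sum_i(\log_2\mu_i-\log_2\nu_i)$ with $\mu_i,\nu_i$ the ordered eigenvalues, bounds each term by the mean-value theorem, and then invokes Mirsky's inequality $\sum_i|\mu_i-\nu_i|\leq\|X-Y\|_1$. Your integral-along-the-segment approach, using $\tfrac{d}{dt}\ln\det Z(t)=\tr(Z(t)^{-1}(X-Y))$ together with the trace--operator norm duality, sidesteps any eigenvalue perturbation lemma and is arguably more elementary; it also makes the role of the lower bound $\kappa^{-1}\id$ (namely, controlling $\|Z(t)^{-1}\|_\infty$) completely transparent. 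Both routes yield the identical constant. One small remark: the pinching map $X\mapsto PXP+P^\perp XP^\perp$ is in fact a trace-norm contraction on \emph{all} matrices (write it as $\tfrac12(X+UXU^\ast)$ with $U=P-P^\perp$ unitary), so your caveat about Hermiticity, while harmless here, is not actually needed.
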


\begin{proof}
Let us start by proving that for any two matrices $M,N\geq \kappa^{-1} \id$, it holds that
\bb
\left|\log_2 \det (MN^{-1})\right| \leq \kappa \log_2(e)\, \|M-N\|_1\, .
\label{inequality M,N}
\ee
To see this, we write
\begin{align*}
\left|\log_2 \det (MN^{-1})\right| &= \left|\log_2 \det (M) - \log_2 \det (N)\right| \\
&\eqt{1} \left|\sumno_i \left(\log_2 \mu_i - \log_2 \nu_i\right)\right| \\
&\leq \sumno_i \left|\log_2 \mu_i - \log_2 \nu_i\right| \\
&\leqt{2} \kappa \log_2(e)\, \sumno_i |\mu_i - \nu_i| \\
&\leqt{3} \kappa \log_2(e)\, \|M-N\|_1\, .
\end{align*}
Here, in~1 we called $\mu_i,\nu_i$ the eigenvalues of $M,N$, sorted in descending order. To justify~2, instead, start by applying Lagrange's theorem to a continuously differentiable function $f:[a,b]\to \R$:
\bbb
\frac{\left|f(\mu)-f(\nu)\right|}{|\mu-\nu|} \leq \max_{a\leq \xi\leq b} \left| f'(\xi)\right| .
\eee
Setting $f(x)= \log_2 x$, $a=\kappa^{-1}$, and $b=\infty$, we obtain the desired inequality $\left|\log_2 \mu_i - \log_2 \nu_i\right| \leq \kappa \log_2(e)\, |\mu_i - \nu_i|$. Finally, in~3 we used the well-known estimate $\|M-N\|_1\geq \sumno_i |\mu_i-\nu_i|$, which is due to Mirsky~\cite[Theorem~5]{Mirsky1960} (see also~\cite[Eq.~(IV.62)]{BHATIA-MATRIX}).

We now come to the proof of~\eqref{continuity mutual information}:
\begin{align*}
&\left| I_M(A:B)_V - I_M(A:B)_W\right| \\&= \left| \frac12 \log_2 \frac{\det(V_A) \det (V_B)}{\det (V_{AB})} - \frac12 \log_2 \frac{\det(W_A) \det (W_B)}{\det (W_{AB})} \right| \\[1ex]
&= \bigg| \frac12 \log_2 \det\left(V_AW_A^{-1}\right) + \frac12 \log_2 \det \left(V_BW_B^{-1}\right) \\
& \qquad- \frac12 \log_2 \det \left(V_{AB}W_{AB}^{-1}\right) \bigg| \\[1ex]
&\leqt{3} \frac{\kappa \log_2(e)}{2} \left( \|V_A-W_A\|_1 + \|V_B-W_B\|_1 + \|V_{AB} - W_{AB}\|_1\right) \\[1ex]
&\leqt{4} \kappa \log_2(e)\, \left\|V_{AB} - W_{AB}\right\|_1\, .
\end{align*}
Here,~3 comes from inequality~\eqref{inequality M,N} applied to $M=V_A, V_B, V_{AB}$ and $N=W_A,W_B, W_{AB}$, respectively. Note that under the operation of taking principal submatrices the minimal eigenvalue never decreases, hence $V_A, W_A\geq \kappa^{-1}\id_{A}$ and $V_B, W_B\geq \kappa^{-1} \id_{B}$. Finally,~4 descends from the observation that since the `pinching' operation that maps $V_{AB}\mapsto V_A\oplus V_B$ and $W_{AB}\mapsto W_A\oplus W_B$ is a quantum channel, it never increases the trace norm, and hence
\begin{align*}
\left\| V_{AB} - W_{AB}\right\|_1 &\geq \left\| V_A\oplus V_B - W_A\oplus W_B\right\|_1 \\
&= \| V_A-W_A\|_1 + \| V_B - W_B\|_1\, .
\end{align*}
This concludes the proof.
\end{proof}


For the sake of completeness, here we further present a simple justification of the known fact~\cite{giedkemode, Lada2011, GIE-PRA} that in the Gaussian setting the classical mutual information of Gaussian pure states coincides with the local R\'enyi-$2$ entropy, as claimed in~\eqref{IMc pure states}. 

\begin{lemma} \label{classical_mutual_info_pure_lemma}
For all pure QCMs $\gamma_{AB}$, it holds that $I^c_M(A:B)_\gamma = \frac12 I_M(A:B)_\gamma = M(\gamma_A)$. Moreover, if $\gamma_{AB}$ is in $xp$-form, the optimal Gaussian measurements in~\eqref{IMc} are identical homodynes.
\end{lemma}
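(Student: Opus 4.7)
I will show $I_M^c(A:B)_\gamma = M(\gamma_A)$; this, together with $I_M(A:B)_\gamma = 2M(\gamma_A)$ from~\eqref{IM pure}, gives the full chain of equalities, and the lower-bound half will simultaneously identify identical homodynes as optimal in the $xp$-form case. Since $I_M^c$ is invariant under local symplectic congruences (absorbing the symplectic into the seeds, which stay valid QCMs) and $M(\gamma_A)$ is too (as $\det S_A=1$), Lemma~\ref{pure are normal} lets me assume $\gamma=\mathrm{diag}(Q,Q^{-1})$ in the global $xp$ blocking. For the lower bound I plug in the identical homodyne family $\Gamma_A(t)=\Gamma_B(t)=t\,\id^x\oplus t^{-1}\,\id^p$ from~\eqref{seeds homodyne}: since $\gamma$ and $\Gamma(t)$ are both $xp$-diagonal, Lemma~\ref{I M decomposition xp lemma} yields
\begin{equation*}
I_M(A:B)_{\gamma+\Gamma(t)} \;=\; I_M(A_x:B_x)_{Q+t\id} \;+\; I_M(A_p:B_p)_{Q^{-1}+t^{-1}\id}.
\end{equation*}
As $t\to 0^+$ the continuity estimate of Lemma~\ref{continuity mutual information lemma} together with Corollary~\ref{pure_xp_IM_cor} gives $I_M(A_x:B_x)_{Q+t\id}\to M(\gamma_A)$, while the rescaling invariance~\eqref{IM invariance rescaling} rewrites the $p$-term as $I_M(A_p:B_p)_{\id+tQ^{-1}}$, which tends to $I_M(A_p:B_p)_{\id}=0$ by continuity. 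This proves $I_M^c(A:B)_\gamma\geq M(\gamma_A)$ and identifies identical $x$-homodynes (and $p$-homodynes in the symmetric $t\to\infty$ limit) as optimal.

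\textbf{Upper bound.} The remaining goal is the determinantal inequality $\det(\gamma_A+\Gamma_A)\det(\gamma_B+\Gamma_B)\leq \det\gamma_A\cdot\det(\gamma+\Gamma_A\oplus\Gamma_B)$ for every bona-fide pair of seeds. By Williamson's theorem every bona-fide $\Gamma_B$ dominates a pure seed $\Gamma_B^{\mathrm{pure}}\leq\Gamma_B$ (replace all symplectic eigenvalues $\nu_j\geq 1$ by $1$); the data-processing inequality~\eqref{data processing} then gives $I_M(A:B)_{\gamma+\Gamma_A\oplus\Gamma_B}\leq I_M(A:B)_{\gamma+\Gamma_A\oplus\Gamma_B^{\mathrm{pure}}}$, so I may assume both seeds are pure. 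Pure seeds are symmetric symplectic matrices, satisfying $\det\Gamma_{A/B}=1$ and $\Omega^\intercal\Gamma\Omega=\Gamma^{-1}$. Writing $V:=\gamma+\Gamma_A\oplus\Gamma_B$ and using the $A|B$-Schur factorisation $\det V=\det(\gamma_A+\Gamma_A)\det(V/V_A)$, the target reduces to $\det\gamma_A\cdot\det(V/V_A)\geq\det(\gamma_B+\Gamma_B)$. Combining $(\gamma_A+\Gamma_A)^{-1}\leq\gamma_A^{-1}$ (from $\gamma_A+\Gamma_A\geq\gamma_A$) with the pure-state identity $\gamma/\gamma_A=\Omega_B\gamma_B^{-1}\Omega_B^\intercal$ (the companion to the one used in the proof of Lemma~\ref{elementary_pure_G_lemma}) gives $V/V_A\geq \Omega_B\gamma_B^{-1}\Omega_B^\intercal+\Gamma_B$. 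Conjugating by the orthogonal symplectic $\Omega_B^\intercal$ preserves the determinant and, via $\Omega_B^\intercal\Gamma_B\Omega_B=\Gamma_B^{-1}$, converts this into $\det(V/V_A)\geq\det(\gamma_B^{-1}+\Gamma_B^{-1})=\det(\gamma_B+\Gamma_B)/(\det\gamma_B\det\Gamma_B)=\det(\gamma_B+\Gamma_B)/\det\gamma_B$; since $\det\gamma_A=\det\gamma_B$ for pure $\gamma$, the desired inequality follows.

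\textbf{Main obstacle.} The key subtlety is recovering the tight constant: crude data processing alone only gives $I_M^c(A:B)_\gamma\leq I_M(A:B)_\gamma=2M(\gamma_A)$, off by a factor of two, and a naive application of Lemma~\ref{I M decomposition xp lemma} has the same defect. The missing factor of two is recovered by chaining \emph{two} symplectic-inversion identities — one for the pure state ($\gamma/\gamma_A=\Omega_B\gamma_B^{-1}\Omega_B^\intercal$) and one for the pure seed ($\Omega_B^\intercal\Gamma_B\Omega_B=\Gamma_B^{-1}$) — through the $A|B$-Schur complement, the second identity being available only after the Williamson-plus-data-processing reduction to pure seeds, which itself leverages the bona-fide constraint $\Gamma\geq i\Omega$.
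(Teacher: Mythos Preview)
Your proof is correct. The lower-bound half is essentially identical to the paper's: reduce to $xp$-form via Lemma~\ref{pure are normal}, plug in the homodyne family~\eqref{seeds homodyne}, split with Lemma~\ref{I M decomposition xp lemma}, and control the limit with Lemma~\ref{continuity mutual information lemma} and Corollary~\ref{pure_xp_IM_cor}.

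The upper bound takes a genuinely different route. The paper first removes $\Gamma_A$ \emph{entirely} by data processing, $I_M(A:B)_{\gamma+\Gamma_A\oplus\Gamma_B}\leq I_M(A:B)_{\gamma+\Gamma_B}$, and then observes via the Schur factorisation that $I_M(A:B)_{\gamma+\Gamma_B}=M(\gamma_A)-M\big((\gamma+\Gamma_B)/(\gamma_B+\Gamma_B)\big)\leq M(\gamma_A)$, the last step holding simply because the Schur complement is a bona-fide QCM (Lemma~\ref{elementary_pure_G_lemma}) and hence has nonnegative log-determinant. No purity of the seeds is needed. Your argument instead keeps $\Gamma_A$, reduces $\Gamma_B$ to a pure seed via Williamson plus data processing, and then exploits the two symplectic-inversion identities $\gamma/\gamma_A=\Omega_B\gamma_B^{-1}\Omega_B^\intercal$ and $\Omega_B^\intercal\Gamma_B\Omega_B=\Gamma_B^{-1}$ to close the determinantal inequality directly. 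Both are valid; the paper's route is shorter and more conceptual (the ``leftover'' after a Gaussian measurement on $B$ is a physical state on $A$), while yours makes the underlying symplectic algebra fully explicit and shows concretely where the factor of two is recovered.
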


\begin{proof}
Start by employing~\eqref{data processing} to deduce that
\bbb
I_M(A:B)_{\gamma_{AB} + \Gamma_A\oplus \Gamma_B}\leq I_M(A:B)_{\gamma_{AB} + \Gamma_B}\, ,
\eee
for all $\Gamma_A\geq 0$. Using the determinant factorisation formula~\eqref{det fact}, it is not difficult to show that
\bbb
I_M(A:B)_{\gamma_{AB} + \Gamma_B} = M(\gamma_A) - M\left( (\gamma_{AB} + \Gamma_B) / (\gamma_B + \Gamma_B) \right) \leq M(\gamma_A)\, ,
\eee
where the last inequality follows because $(\gamma_{AB} + \Gamma_B) / (\gamma_B + \Gamma_B)$ is a QCM by~\eqref{elementary_pure_G_lemma}, and hence the corresponding log-determinant entropy must be non-negative. Consequently, one obtains that $I_M(A:B)_{\gamma_{AB} + \Gamma_A \oplus \Gamma_B} \leq M(\gamma_A)$. Since $\Gamma_A, \Gamma_B$ were arbitrary, we conclude that
\bb
I_M^c(A:B)_{\gamma} \leq M(\gamma_A)\, ,
\ee
To prove the converse inequality, we can use the normality of pure QCMs (Lemma~\ref{pure are normal}) to find local symplectics $S_A,S_B$ such that
\bb
\gamma_{AB} = (S_A\oplus S_B)\, \gamma_{AB}^{(0)}\, (S_A\oplus S_B)^\intercal\, ,
\ee
where $\gamma_{AB}^{(0)}$ is in $xp$-form. Observe that because of~\eqref{pure in xp form} we have that $\gamma_{AB}^{(0)} = \lsmatrix Q & 0 \\ 0 & Q^{-1} \rsmatrix$ with respect to an $xp$ block partition, where $Q>0$. Obviously, if $\gamma$ is in $xp$-form we can set $S_A=\id_A$ and $S_B=\id_B$ in the first place. Now, construct
\begin{align*}
\widetilde{\Gamma}_A(t) &= S_{A} \Gamma_A(t) S_{A}^\intercal\, ,\\
\widetilde{\Gamma}_B(t) &= S_{B} \Gamma_B(t) S_{B}^\intercal\, ,
\end{align*}
where $\Gamma_A(t), \Gamma_B(t)$ are the seeds of a homodyne as given by~\eqref{seeds homodyne}. With respect to an $xp$ block partition, one has that
\bbb
\gamma_{AB}^{(0)} = \begin{pmatrix} Q+ t\id & 0 \\ 0 & Q^{-1}+t^{-1}\id \end{pmatrix} ,
\eee
and hence that
\begin{align*}
&\lim_{t\rightarrow 0} I_M(A:B)_{\gamma_{AB} + \widetilde{\Gamma}_A (t) \oplus \widetilde{\Gamma}_B (t)}\\ &\eqt{1} \lim_{t\rightarrow 0} I_M(A:B)_{\gamma_{AB}^{(0)} + \Gamma_A (t) \oplus \Gamma_B (t)} \\
&\eqt{2} \lim_{t\rightarrow 0} \left( I_M(A_x:B_x)_{Q+t\id} + I_M(A_p:B_p)_{Q^{-1}+t^{-1}\id} \right) \\
&\eqt{3} \lim_{t\rightarrow 0} \left( I_M(A_x:B_x)_{Q+t\id} + I_M(A_p:B_p)_{\id+tQ^{-1}} \right) \\
&\eqt{4} I_M(A_x:B_x)_Q \\
&\eqt{5} M\big(\gamma^{(0)}_A\big) \\
&= M(\gamma_A)\, .
\end{align*}
The above identities can be justified as follows: 1~descends from~\eqref{IM invariance symplectics}; 2~is an application of~\eqref{I M decomposition xp}; 3~uses the identity~\eqref{IM invariance rescaling}; 4~comes from Lemma~\ref{continuity mutual information lemma}, which implies that
\begin{align*}
I_M(A_p:B_p)_{\id + t Q^{-1}} &= \left| I_M(A_p:B_p)_{\id + t Q^{-1}}- I_M(A:B)_\id\right|
\\ &\leq t \log_2(e)\, \|Q^{-1}\|_1 \tends{}{t\to 0} 0
\end{align*}
and that
\begin{align*}
&\left| I_M(A_p:B_p)_{Q + t \id} - I_M(A_p:B_p)_Q\right| \\
&\qquad \leq t \log_2(e)\, \|\id_p\|_1 \|Q^{-1}\|_1 \tends{}{t\to 0} 0\, ;
\end{align*}
finally, 5~is a consequence of Corollary~\ref{pure_xp_IM_cor}.
\end{proof}

\subsection{A few more technical lemmata}\label{sec mammeta}

Here we need to establish a couple of technical lemmata which will be useful for the proof of Theorem~\ref{equality normal thm}.

\begin{lemma} \label{measured QCMs compact}
Let $V_{AB}$ be any bipartite QCM, and let $\gamma_{ABE}$ be any extension of it, i.e.\ let it satisfy $\gamma_{AB}=V_{AB}$. For some QCM $\Gamma_E$, define
\bb
W_{AB} \coloneqq \left( \gamma_{ABE} + \Gamma_E\right) \big/ \left(\gamma_{E} +\Gamma_E\right) .
\ee
Then
\bb
\lambda_{\max}(V_{AB})^{-1} \id_{AB} \leq \Omega_{AB}^\intercal V_{AB}^{-1} \Omega_{AB} \leq W_{AB}\leq V_{AB} \, ,
\ee
where $\lambda_{\max}(V_{AB})$ denotes the maximal eigenvalue of $V_{AB}$.
\end{lemma}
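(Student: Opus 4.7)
My plan is to establish the three inequalities from right to left, using Schur-complement manipulations together with a standard property of QCMs coming from the bona fide condition.

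For the rightmost inequality $W_{AB} \leq V_{AB}$, I would write $\gamma_{ABE} + \Gamma_E$ in block form with respect to the $(AB,E)$ partition as $\lsmatrix \gamma_{AB} & X \\ X^\intercal & \gamma_E + \Gamma_E \rsmatrix$, where $X$ is the off-diagonal block of $\gamma_{ABE}$. The Schur-complement formula then gives $W_{AB} = \gamma_{AB} - X(\gamma_E + \Gamma_E)^{-1}X^\intercal \leq \gamma_{AB} = V_{AB}$, since the subtracted term is positive semidefinite.

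For the middle inequality $\Omega_{AB}^\intercal V_{AB}^{-1}\Omega_{AB} \leq W_{AB}$, the key input will be the fact that every QCM $M$ satisfies $M \geq \Omega^\intercal M^{-1}\Omega$. I would establish this via Williamson's theorem: writing $M = S(D \oplus D)S^\intercal$ with $S$ symplectic and $D = \mathrm{diag}(\nu_j) \geq \id$, the identity $S^{-1} = \Omega S^\intercal \Omega^\intercal$ combined with the observation that $\Omega$ commutes with $D^{-1} \oplus D^{-1}$ yields $\Omega^\intercal M^{-1}\Omega = S(D^{-1} \oplus D^{-1})S^\intercal$, so the inequality reduces to $\nu_j \geq 1/\nu_j$, which holds since $\nu_j \geq 1$. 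I would then apply this to $W_{AB}$, which is itself a QCM by Lemma~\ref{elementary_pure_G_lemma} (with $E$ playing the role of the measured subsystem and $AB$ the remainder), obtaining $W_{AB} \geq \Omega^\intercal W_{AB}^{-1}\Omega$. Combining with $W_{AB} \leq V_{AB}$ --- which by monotonicity of the matrix inverse on positive definite matrices gives $W_{AB}^{-1} \geq V_{AB}^{-1}$, hence $\Omega^\intercal W_{AB}^{-1}\Omega \geq \Omega^\intercal V_{AB}^{-1}\Omega$ by congruence --- then chains to $W_{AB} \geq \Omega^\intercal V_{AB}^{-1}\Omega$.

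The leftmost inequality is elementary: $V_{AB} \leq \lambda_{\max}(V_{AB})\id$ gives $V_{AB}^{-1} \geq \lambda_{\max}(V_{AB})^{-1}\id$, and orthogonality of $\Omega$ ($\Omega^\intercal \Omega = \id$) preserves this under congruence. I expect the only conceptual hurdle to be the QCM identity $M \geq \Omega^\intercal M^{-1}\Omega$, which is where the bona fide condition really enters; everything else reduces to bookkeeping with Schur complements and monotonicity of the inverse.
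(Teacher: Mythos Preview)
Your argument is correct, and for the rightmost and leftmost inequalities it matches the paper's. For the middle inequality $\Omega_{AB}^\intercal V_{AB}^{-1}\Omega_{AB}\leq W_{AB}$, however, you take a genuinely different route. The paper first applies Schur complement monotonicity~\eqref{Schur monotonicity} to the inequality $\gamma_{ABE}+\Gamma_E\geq\gamma_{ABE}$, obtaining $W_{AB}\geq\gamma_{ABE}/\gamma_E$, and then invokes an external result~\cite[Theorem~3]{Lami16} stating that $\gamma_{ABE}/\gamma_E\geq\Omega_{AB}^\intercal\gamma_{AB}^{-1}\Omega_{AB}$ for any QCM extension $\gamma_{ABE}$. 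You instead bypass the intermediate object $\gamma_{ABE}/\gamma_E$ entirely: you observe that $W_{AB}$ is itself a QCM (via Lemma~\ref{elementary_pure_G_lemma}), derive from Williamson's theorem the general fact $M\geq\Omega^\intercal M^{-1}\Omega$ for any QCM $M$, and then chain this with the already-established $W_{AB}\leq V_{AB}$ through inverse monotonicity. Your path is more self-contained---it needs no citation beyond what is already in the Supplemental Material---whereas the paper's is shorter once the external lemma is granted, and it yields a slightly sharper intermediate bound $W_{AB}\geq\gamma_{ABE}/\gamma_E$ that does not pass through $W_{AB}^{-1}$.
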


\begin{proof}
The very definition of Schur complement implies that $W_{AB}\leq \gamma_{AB} = V_{AB}$. As for the opposite inequality, since $\Gamma_{\!E}\geq 0$ is positive semidefinite and the Schur complement is monotonic~\eqref{Schur monotonicity}, one has that
$\left( \gamma_{ABE} + \Gamma_{\!E}\right) \big/ \left(\gamma_{E} +\Gamma_{\!E}\right) \geq \gamma_{ABE}/\gamma_{E} \geq \Omega_{AB}^\intercal \gamma_{AB}^{-1}\Omega_{AB} = \Omega_{AB}^\intercal V_{AB}^{-1}\Omega_{AB}$,
where the second inequality is an application of~\cite[Theorem~3]{Lami16}.
\end{proof}

\begin{prop} \label{homodyne limit prop}
Let $\Gamma_{\!A}(t)$ and $\Gamma_{\!B}(t)$ be the seeds of homodyne measurements defined by~\eqref{seeds homodyne}. For some $0<\kappa\leq 1$, let $\pazocal{W}_{AB} \geq \kappa^{-1} \id_{AB}$ be a compact set of bipartite QCMs that is bounded away from zero \footnote{Here we mean that every element $W_{AB}\in \pazocal{W}_{AB}$ satisfies $W_{AB}\geq \kappa^{-1}\id_{AB}$}. Then
\bb
\lim_{t\to 0^+} \inf_{W_{AB}\in \pazocal{W}_{AB}} I_M(A:B)_{W_{AB} + \Gamma_A(t)\oplus \Gamma_B(t)} = \inf_{W_{AB}\in \pazocal{W}_{AB}} I_M(A_x:B_x)_{W^x_{AB}}\, ,
\label{homodyne limit}
\ee
where $W^x_{AB}$ is defined analogously to~\eqref{x_proj}.
\end{prop}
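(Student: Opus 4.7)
The plan is to start from the $xp$ decomposition of the log-determinant mutual information provided by Lemma~\ref{I M decomposition xp lemma}, applied to the perturbed matrix $W_{AB}+\Gamma_{\!A}(t)\oplus \Gamma_{\!B}(t)$. Since $\Gamma_{\!A}(t)\oplus \Gamma_{\!B}(t)$ has $x$-block $t\,\id_{AB}^x$ and $p$-block $t^{-1}\id_{AB}^p$, the decomposition yields
\begin{equation*}
I_M(A\!:\!B)_{W_{AB}+\Gamma_{\!A}(t)\oplus \Gamma_{\!B}(t)} \;=\; I_M(A_x\!:\!B_x)_{W_{AB}^x + t\,\id_{AB}^x}\,+\, I_M(A_p\!:\!B_p)_{\big(W_{AB}+\Gamma_{\!A}(t)\oplus \Gamma_{\!B}(t)\big)\big/(W_{AB}^x + t\,\id_{AB}^x)}\, .
\end{equation*}
The strategy is to show that each of the two addends converges to its natural limit \emph{uniformly} over $W_{AB}\in\pazocal{W}_{AB}$: the first tends to $I_M(A_x\!:\!B_x)_{W_{AB}^x}$, and the second tends to $0$. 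Once uniform convergence is established, one may interchange $\lim_{t\to 0^+}$ with $\inf_{W_{AB}\in \pazocal{W}_{AB}}$ and the claim~\eqref{homodyne limit} follows.

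For the first addend, I would invoke the uniform continuity estimate of Lemma~\ref{continuity mutual information lemma}. Since $W_{AB}\geq \kappa^{-1}\id_{AB}$ passes to the principal submatrix $W_{AB}^x\geq \kappa^{-1}\id_{AB}^x$, both $W_{AB}^x$ and $W_{AB}^x+t\,\id_{AB}^x$ satisfy the hypothesis with the same constant $\kappa$, so
\begin{equation*}
\left|I_M(A_x\!:\!B_x)_{W_{AB}^x+t\,\id_{AB}^x} - I_M(A_x\!:\!B_x)_{W_{AB}^x}\right|\ \leq\ \kappa \log_2(e)\, t\, \|\id_{AB}^x\|_1\,,
\end{equation*}
a bound that is $O(t)$ independent of $W_{AB}$. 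For the second addend, I would use the rescaling invariance~\eqref{IM invariance rescaling} to multiply the argument by $t$, obtaining
\begin{equation*}
t\cdot\Big[\big(W_{AB}+\Gamma_{\!A}(t)\oplus\Gamma_{\!B}(t)\big)\big/(W_{AB}^x + t\,\id_{AB}^x)\Big] \;=\; \id_{AB}^p + t\,W_{AB}^p - t\,W_{AB}^{xp\,\intercal}(W_{AB}^x + t\,\id_{AB}^x)^{-1} W_{AB}^{xp}\, ,
\end{equation*}
where $W_{AB}^{xp}\coloneqq \Pi^x_{AB} W_{AB}(\Pi^p_{AB})^\intercal$. Using compactness of $\pazocal{W}_{AB}$ to bound $\|W_{AB}^p\|$ and $\|W_{AB}^{xp}\|$ uniformly, together with $(W_{AB}^x+t\id)^{-1}\leq \kappa\,\id_{AB}^x$, the perturbation on the right-hand side has trace norm $O(t)$ uniformly in $W_{AB}$. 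Hence for $t$ small enough the matrix is $\geq \tfrac12 \id_{AB}^p$, and a second application of Lemma~\ref{continuity mutual information lemma} (this time with constant $2$) gives that the corresponding $I_M$ converges to $I_M(A_p\!:\!B_p)_{\id_{AB}^p}=0$, uniformly in $W_{AB}$.

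The main technical delicacy is really just ensuring the two convergences are genuinely uniform: one needs compactness of $\pazocal{W}_{AB}$ to bound $W_{AB}$ from above, and the positive lower bound $W_{AB}\geq \kappa^{-1}\id_{AB}$ to keep the inverse $(W_{AB}^x+t\id)^{-1}$ under control and to remain within the hypothesis of Lemma~\ref{continuity mutual information lemma} throughout the limit. Granted uniform convergence, the standard argument --- combining $\inf_{W}\big|f_t(W)-f(W)\big|$ estimates to obtain $\big|\inf_W f_t - \inf_W f\big|\leq \sup_W|f_t(W)-f(W)|\to 0$ --- yields~\eqref{homodyne limit} and closes the proof.
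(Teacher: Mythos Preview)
Your proposal is correct and follows essentially the same route as the paper: decompose via Lemma~\ref{I M decomposition xp lemma}, use the rescaling invariance~\eqref{IM invariance rescaling} on the $p$-piece, and then apply the uniform continuity Lemma~\ref{continuity mutual information lemma} to each addend to obtain uniform convergence, whence the interchange of $\lim$ and $\inf$. The only cosmetic difference is that the paper bounds the perturbation in the $p$-piece by observing that $W_{AB}^p - Z^\intercal(W_{AB}^x+t\id)^{-1}Z$ is itself a positive Schur complement with trace at most $\Tr[W_{AB}^p]$ (hence the rescaled matrix is $\geq \id_{AB}^p$ outright), whereas you bound the two terms separately and work with $\geq \tfrac12\id_{AB}^p$; both are fine.
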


\begin{proof}
Let us define the functions $F_t:\pazocal{W}_{AB}\to \R$ and $F:\pazocal{W}_{AB}\to \R$ via
\begin{align}
F_t(W_{AB}) &\coloneqq I_M(A\!:\!B)_{W_{AB} + \Gamma_{\!A}(t)\oplus \Gamma_{\!B}(t)}\, , \label{F t} \\
F(W_{AB}) &\coloneqq I_M(A_x\!:\!B_x)_{W^x_{AB}}\, . \label{F}
\end{align}
Then~\eqref{homodyne limit} becomes
\bb
\lim_{t\to 0^+} \inf_{W_{AB}\in \pazocal{W}_{AB}} F_t(W_{AB}) = \inf_{W_{AB}\in \pazocal{W}_{AB}} F(W_{AB})\, .
\label{homodyne limit F}
\ee
To prove~\eqref{homodyne limit F}, it suffices to show that $F_t \tends{u}{t\to 0^+} F$ \emph{uniformly}, i.e.\ that
\bb
\lim_{t\to 0^+} \|F_t - F\|_\infty = 0\, ,
\label{uniform convergence F t}
\ee
where
\begin{align}
\|F_t - F\|_\infty \coloneqq&\ \sup_{W_{AB}\in\pazocal{W}_{AB}} \left| F_t(W_{AB}) - F(W_{AB}) \right| \label{sup norm W 1}
\\
=&\ \sup_{W_{AB}\in\pazocal{W}_{AB}} \left| I_M(A\!:\!B)_{W_{AB} + \Gamma_{\!A}(t)\oplus \Gamma_{\!B}(t)} - I_M(A_x\!:\!B_x)_{W^x} \right| . \label{sup norm W 2}
\end{align}
The fact that~\eqref{uniform convergence F t} implies~\eqref{homodyne limit F} is well known, see for instance~\cite[Chapter~7]{RUDIN-PRINCIPLES}. 
We thus proceed to prove~\eqref{uniform convergence F t}. Note that we are not in the position to apply Lemma~\ref{continuity mutual information lemma} on the continuity of mutual information, for example because the QCMs $W_{AB} + \Gamma_{\!A}(t)\oplus \Gamma_{\!B}(t)$ do not converge to anything as $t\to 0^+$, due to the presence of the diverging term $t^{-1}$ in~\eqref{seeds homodyne}. To proceed further, let us write down an explicit decomposition of an arbitrary matrix $W_{AB}\in\pazocal{W}_{AB}$ with respect to an $xp$ block partition as
\bb
W_{AB} = \begin{pmatrix} W_{AB}^x & Z_{AB} \\ Z_{AB}^\intercal & W_{AB}^p \end{pmatrix} .
\label{W AB blocks}
\ee
Observe that according to the same splitting one has that
\bb
\Gamma_A(t) \oplus \Gamma_B(t) = \begin{pmatrix} t\, \id_{AB}^x & 0 \\ 0 & t^{-1} \id_{AB}^p \end{pmatrix} .
\label{seeds homodyne blocks}
\ee
We now massage~\eqref{F t} as follows:
\begin{align*}
F_t(W_{AB}) &= I_M(A:B)_{W_{AB} + \Gamma_A(t)\oplus \Gamma_B(t)} \\
&\eqt{1} I_M(A_x:B_x)_{W_{AB}^x + \Gamma_{\!A}^x(t)\oplus \Gamma_B^x(t)} \\ &\qquad + I_M(A_p:B_p)_{\big(W_{AB} + \Gamma_A(t)\oplus \Gamma_B(t)\big)\big/ \big(W_{AB}^{x} + \Gamma_A^x(t)\oplus \Gamma_B^x(t)\big)} \\
&\eqt{2} I_M(A_x:B_x)_{W_{AB}^x + t\, \id_{AB}^x} \\ &\qquad + I_M(A_p:B_p)_{W_{AB}^p + t^{-1} \id_{AB}^p - Z_{AB}^\intercal \left(W_{AB}^x + t\, \id_{AB}^x\right)^{-1} Z_{AB}} \\
&\eqt{3} I_M(A_x:B_x)_{W_{AB}^x + t\, \id_{AB}^x} \\ &\qquad + I_M(A_p:B_p)_{\id_{AB}^p + t\left( W_{AB}^p  - Z_{AB}^\intercal \left(W_{AB}^x + t\, \id_{AB}^x\right)^{-1} Z_{AB}\right)}
\end{align*}
Note that in~1 we applied Lemma~\ref{I M decomposition xp lemma}, in~2 we made use of~\eqref{W AB blocks} and~\eqref{seeds homodyne blocks}, and in~3 we employed~\eqref{IM invariance rescaling}.
Now, we see that
\begin{align*}
&\left| F_t(W_{AB}) - F(W_{AB})\right| \\ &\leqt{4} \left| I_M(A_x:B_x)_{W_{AB}^x + t\, \id_{AB}^x} - I_M(A_x:B_x)_{W_{AB}^x} \right| \\
&\quad\! + \left| I_M(A_p:B_p)_{\id_{AB}^p + t\left( W_{AB}^p  - Z_{AB}^\intercal \left(W_{AB}^x + t\, \id_{AB}^x\right)^{-1} Z_{AB}\right)} - I_M(A_p:B_p)_{\id_{AB}^p}\right| \\
&\leqt{5} \kappa \log_2(e) \left\|W_{AB}^x + t\, \id_{AB}^x - W_{AB}^x \right\|_1 \\
&\quad\! + \log_2(e) \left\| \id_{AB}^p + t\left( W_{AB}^p  - Z_{AB}^\intercal \left(W_{AB}^x + t\, \id_{AB}^x\right)^{-1} Z_{AB}\right) - \id_{AB}^p\right\|_1 \\
&\eqt{6} t \log_2(e) \left(\kappa (m_A+m_B) + \max_{W_{AB} \in \pazocal{W}_{AB}} \Tr \left[W_{AB}^p\right]\right) .
\end{align*}
Note that the inequality in~4 comes directly from the definitions~\eqref{F t} and~\eqref{F}, together with the observation that $I_M(A_p\!:\!B_p)_{\id_{AB}^p}=0$. For~5, instead, we applied Lemma~\ref{continuity mutual information lemma} twice, which is possible because: (i)~$W_{AB}\geq \kappa^{-1}\id_{AB}$ and hence $W_{AB}^x\geq \kappa^{-1}\id_{AB}^x$ (also, $t>0$); and (ii)~from the block positivity conditions~\eqref{Schur block positivity} for $W_{AB}+t\, \id_{AB}$ it follows that
$W_{AB}^p  - Z_{AB}^\intercal \left(W_{AB}^x + t\, \id_{AB}^x\right)^{-1} Z_{AB} = (W_{AB} + t\,\id_{AB}^x) \big/ (W_{AB}^x + t\, \id_{AB}^x) > 0$.
To justify~6, note that exploiting again positivity one has that
$\left\| (W_{AB} + t\,\id_{AB}^x) \big/ (W_{AB}^x + t\, \id_{AB}^x)\right\|_1 = \Tr\left[(W_{AB} + t\,\id_{AB}^x) \big/ (W_{AB}^x + t\, \id_{AB}^x)\right] \leq \Tr[W_{AB}^p]$.
The quantity $\max_{W_{AB} \in \pazocal{W}_{AB}} \Tr \left[W_{AB}^p\right]$ is finite thanks to the compactness of $\pazocal{W}_{AB}$. Also, we denoted by $m_A$ and $m_B$ the number of modes on systems $A$ and $B$, respectively. From the above estimate it is clear that $\lim_{t\to 0^+} \sup_{W_{AB} \in \pazocal{W}_{AB}} \left| F_t(W_{AB}) - F(W_{AB})\right| = 0$, completing the argument.
\end{proof}

\section{A non-normal quantum covariance matrix} \label{sec non-normal}

Here we construct an explicit example of a non-normal QCM over a $(1+2)$-mode system. Let us start by recalling a well-known result in symplectic geometry.

\begin{lemma}[{\cite[Proposition~8.12]{GOSSON}}] \label{Gosson lemma}
Let $A>0$ be a $2n\times 2n$ positive definite matrix. Let $S_1,S_2$ be two symplectic matrices that bring $A$ into Williamson's form, i.e.\ let them satisfy
\bb
A = S_1 \begin{pmatrix} D & 0 \\ 0 & D \end{pmatrix} S_1^\intercal = S_2 \begin{pmatrix} D & 0 \\ 0 & D \end{pmatrix} S_2^\intercal \, ,
\ee
where $D = \sum_{j=1}^n d_j \ketbra{j} > 0$ is diagonal. Then
\bb
S_2=S_1K\, , \qquad K = H \begin{pmatrix} U & 0 \\ 0 & U^* \end{pmatrix} H^\dag \in \mathbb{K}(n)\, ,\qquad [U,D]=0\, .
\ee
The last condition means that $U$ only mixes vectors $\ket{j}$ with the same coefficients $d_j$. If these are all distinct, then necessarily $U=\sum_j e^{i\theta_j} \ketbra{j}$ for some phases $\theta_j\in\R$.
\end{lemma}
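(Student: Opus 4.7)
The plan is to reduce the uniqueness claim to a description of the $\mathbb{SP}(n)$-stabiliser of $W \coloneqq \begin{pmatrix} D & 0 \\ 0 & D \end{pmatrix}$ under the congruence action $X \mapsto KXK^\intercal$. Setting $K \coloneqq S_1^{-1} S_2 \in \mathbb{SP}(n)$, the hypothesis $S_1 W S_1^\intercal = S_2 W S_2^\intercal$ immediately gives $K W K^\intercal = W$. Combining this with the symplectic identity $K^{-\intercal} = \Omega^{-1} K \Omega$ (which follows from rearranging $K\Omega K^\intercal = \Omega$) yields $K W \Omega^{-1} = W \Omega^{-1} K$, so $K$ commutes with $W\Omega$. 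This single commutation relation will carry essentially all the information.

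The next step is to diagonalise $W\Omega$ via the unitary $H$ from~\eqref{H}. A direct $2\times 2$ block computation shows $H^\dag W H = W$ (because $W$ is a scalar on the $x/p$ doublet), while the paper records $H^\dag \Omega H = \begin{pmatrix} i\id & 0 \\ 0 & -i\id \end{pmatrix}$. Hence $H^\dag (W\Omega) H = \begin{pmatrix} iD & 0 \\ 0 & -iD \end{pmatrix}$. Since $D>0$, the spectra of the two diagonal blocks are disjoint, so any matrix commuting with this is block-diagonal; I would then conclude that $K' \coloneqq H^\dag K H = \begin{pmatrix} U_1 & 0 \\ 0 & U_2 \end{pmatrix}$ with $[U_i, D] = 0$ for $i=1,2$.

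Finally I would impose the two remaining constraints on $K$, namely reality and symplecticity. Using the identity $H^* = H \begin{pmatrix} 0 & \id \\ \id & 0 \end{pmatrix}$ (easy to verify entry-wise), the condition $K = K^*$ translates into $U_2 = U_1^*$; calling $U \coloneqq U_1$, this gives exactly $K' = \begin{pmatrix} U & 0 \\ 0 & U^* \end{pmatrix}$ with $[U,D]=0$. The symplectic constraint $K\Omega K^\intercal = \Omega$ then reduces, after conjugating by $H$, to $UU^\dag = \id$, i.e.\ $U$ is unitary; this is precisely the content of the isomorphism~\eqref{KU isomorphism} already recorded in the paper, so $K \in \mathbb{K}(n)$. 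The last sentence of the lemma is then immediate from the standard fact that the commutant of a diagonal matrix with pairwise distinct entries is the diagonal subalgebra, so combined with unitarity $U$ must be a diagonal phase matrix $\sum_j e^{i\theta_j} \ketbra{j}$.

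The main obstacle is purely bookkeeping: tracking how "real" and "symplectic" transfer across the $H$-conjugation and verifying that the three constraints (commuting with $W\Omega$, reality, symplecticity) combine exactly into the statement "$K$ lies in $\mathbb{K}(n)$ and the associated unitary commutes with $D$". Once the isomorphism~\eqref{KU isomorphism} is accepted, as the paper does, every step above is elementary and the proof is self-contained.
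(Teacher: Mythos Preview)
Your argument is correct. The paper does not actually prove this lemma; it is quoted verbatim from~\cite[Proposition~8.12]{GOSSON} and used as a black box for Corollary~\ref{xp Williamson cor}, so there is no in-paper proof to compare against. Your approach --- setting $K=S_1^{-1}S_2$, extracting $[K,W\Omega]=0$ from $KWK^\intercal=W$ and symplecticity, diagonalising $W\Omega$ via $H$, and then reading off the block structure, reality constraint $U_2=U_1^*$, and unitarity $UU^\dag=\id$ --- is exactly the standard route and matches what one finds in de Gosson's text.
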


\begin{cor} \label{xp Williamson cor}
Let $D>0$ be an $n\times n$ diagonal matrix. A $2n\times 2n$ symplectic matrix $S$ is such that $S\left( \begin{smallmatrix} D & 0 \\ 0 & D \end{smallmatrix} \right) S^\intercal$ is in xp-form, i.e.
\bb
S \begin{pmatrix} D & 0 \\ 0 & D \end{pmatrix} S^\intercal = \begin{pmatrix} X & 0 \\ 0 & P \end{pmatrix} ,
\ee
if and only if
\bb
S = \begin{pmatrix} M^{-1} & 0 \\ 0 & M^\intercal  \end{pmatrix} H \begin{pmatrix} U & 0 \\ 0 & U^* \end{pmatrix} H^\dag ,
\label{xp Williamson}
\ee
where $M$ is an arbitrary $n\times n$ invertible matrix, $H$ is given by~\eqref{H}, and $[U,D]=0$.
\end{cor}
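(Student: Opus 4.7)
The plan is to separately establish the two directions of the equivalence.

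For the \emph{if} direction, I would proceed by direct verification. Setting $K \coloneqq H \lsmatrix U & 0 \\ 0 & U^* \rsmatrix H^\dag$, a quick check using the explicit form~\eqref{H} shows that $H$ actually commutes with the block-diagonal matrix $\lsmatrix D & 0 \\ 0 & D \rsmatrix$ (both sides equal $\frac{1}{\sqrt{2}} \lsmatrix D & D \\ iD & -iD \rsmatrix$). Combined with the assumption $[U,D]=0$, and hence $[U^*,D]=0$, this implies that $\lsmatrix U & 0 \\ 0 & U^* \rsmatrix$ commutes with $\lsmatrix D & 0 \\ 0 & D \rsmatrix$, and thus so does $K$. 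Since $K\in \mathbb{K}(n)$ is symplectic orthogonal, this commutation yields $K \lsmatrix D & 0 \\ 0 & D \rsmatrix K^\intercal = \lsmatrix D & 0 \\ 0 & D \rsmatrix$. Conjugating further by $\lsmatrix M^{-1} & 0 \\ 0 & M^\intercal \rsmatrix$ then produces $\lsmatrix M^{-1} D M^{-\intercal} & 0 \\ 0 & M^\intercal D M \rsmatrix$, visibly in $xp$-form.

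For the \emph{only if} direction, the overall strategy is to pre-multiply $S$ by a suitable element of the $xp$-form subgroup~\eqref{GL isomorphism} so that the resulting symplectic leaves $\lsmatrix D & 0 \\ 0 & D \rsmatrix$ invariant under congruence; Lemma~\ref{Gosson lemma} will then force it into $\mathbb{K}(n)$ with the desired commutation property. To implement this, set $A\coloneqq S \lsmatrix D & 0 \\ 0 & D \rsmatrix S^\intercal$ and invoke Lemma~\ref{xp form Williamson lemma}: since $A$ is positive definite and in $xp$-form, there exists a symplectic $T = \lsmatrix N^{-1} & 0 \\ 0 & N^\intercal \rsmatrix$ with $T A T^\intercal = \lsmatrix D'' & 0 \\ 0 & D'' \rsmatrix$ in Williamson's form.

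The main subtlety, and the step that I expect to require the most care, is that the Williamson form is unique only up to a permutation of the symplectic eigenvalues, so a priori $D'' \neq D$. However, $A$ and $\lsmatrix D & 0 \\ 0 & D \rsmatrix$ are symplectically congruent and therefore share the same symplectic spectrum, so there exists an $n\times n$ permutation matrix $\Pi$ with $D'' = \Pi D \Pi^\intercal$. I would then replace $T$ by $T' \coloneqq \lsmatrix \Pi^\intercal & 0 \\ 0 & \Pi^\intercal \rsmatrix T = \lsmatrix (N\Pi)^{-1} & 0 \\ 0 & (N\Pi)^\intercal \rsmatrix$, which is still in the GL subgroup~\eqref{GL isomorphism} because $\Pi^{-1}=\Pi^\intercal$, and which now satisfies $T' A T'^\intercal = \lsmatrix D & 0 \\ 0 & D \rsmatrix$ exactly.

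To conclude, the symplectic matrix $T'S$ preserves $\lsmatrix D & 0 \\ 0 & D \rsmatrix$ under congruence, so Lemma~\ref{Gosson lemma} applied with $S_1 = \id$ and $S_2 = T'S$ shows that $T'S = K$ for some $K\in \mathbb{K}(n)$ whose associated unitary $U$ commutes with $D$. Solving for $S$ and writing $M \coloneqq (N\Pi)^{-1}$, so that $T'^{-1} = \lsmatrix M^{-1} & 0 \\ 0 & M^\intercal \rsmatrix$, yields the claimed factorisation $S = \lsmatrix M^{-1} & 0 \\ 0 & M^\intercal \rsmatrix K$.
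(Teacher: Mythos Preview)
Your proof is correct and follows essentially the same route as the paper: apply Lemma~\ref{xp form Williamson lemma} to bring the $xp$-form matrix $S\lsmatrix D & 0 \\ 0 & D\rsmatrix S^\intercal$ into Williamson's form via an element of the $\mathbb{GL}$ subgroup, absorb the resulting permutation of symplectic eigenvalues, and then invoke Lemma~\ref{Gosson lemma} to identify the remaining factor as an element of $\mathbb{K}(n)$ with $[U,D]=0$. Your treatment is in fact slightly more complete than the paper's, since you spell out the easy \emph{if} direction and make the permutation step explicit rather than dismissing it as ``immaterial.''
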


\begin{proof}
By Lemma~\ref{xp form Williamson lemma}, we can find an invertible matrix $M$ such that
\bbb
\begin{pmatrix} M & 0 \\ 0 & M^{-\intercal} \end{pmatrix} S \begin{pmatrix} D & 0 \\ 0 & D \end{pmatrix} S^\intercal \begin{pmatrix} M^{\intercal} & 0 \\ 0 & M^{-1} \end{pmatrix} = \begin{pmatrix} \widetilde{D} & 0 \\ 0 & \widetilde{D} \end{pmatrix} ,
\eee
where $M^{-\intercal}\coloneqq \big(M^{-1}\big)^\intercal$, and $\left( \begin{smallmatrix} M & 0 \\ 0 & M^{-\intercal} \end{smallmatrix} \right)$ is symplectic by~\eqref{GL isomorphism}. Thanks to the uniqueness of symplectic eigenvalues, we must have that $\widetilde{D}=D$, up to an immaterial permutation that can always be absorbed into $M$. We can then apply Lemma~\ref{Gosson lemma}, guaranteeing that
\bbb
\begin{pmatrix} M & 0 \\ 0 & M^{-\intercal} \end{pmatrix} S = H \begin{pmatrix} U & 0 \\ 0 & U^* \end{pmatrix} H^\dag\, ,
\eee
with $[U,D]=0$. Multiplying by $\left( \begin{smallmatrix} M^{-1} & 0 \\ 0 & M^{\intercal} \end{smallmatrix} \right)$ from the left proves~\eqref{xp Williamson}.
\end{proof}

\begin{prop}[(A family of non-normal QCMs)]
Consider a bipartite $3$-mode system composed by modes $A$ on one side and $B_1,B_2$ on the other. Construct the QCM
\bb
V_{AB}\coloneqq \begin{pmatrix} a\id & F & G \\ F^\intercal & b_1\id & 0 \\ G^\intercal & 0 & b_2\id \end{pmatrix} ,
\ee
where the first and second rows and columns refer to $A$, the third and fourth to $B_1$, and the fifth and sixth to $B_2$. Assume that $\|F\|_\infty,\|G\|_\infty\leq 1$, $a\geq 3$, and $b_1,b_2\geq 2$, so that $V_{AB}\geq \id_{AB}$ is a valid QCM. Further assume that $b_1\neq b_2$ and that $\left[ FF^\intercal,GG^\intercal \right]\neq 0$. Then $V_{AB}$ is not normal.
\end{prop}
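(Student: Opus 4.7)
The plan is to prove the contrapositive: assuming $(S_A\oplus S_B)V_{AB}(S_A\oplus S_B)^\intercal$ is in $xp$-form for some $S_A\in\mathbb{SP}(1)$, $S_B\in\mathbb{SP}(2)$, I will derive $[FF^\intercal,GG^\intercal]=0$. The first move is to use Corollary~\ref{xp Williamson cor} to determine the form of $S_A$ and $S_B$. The $A$-block of the transformed QCM equals $aS_AS_A^\intercal$, so being in $xp$-form forces $S_AS_A^\intercal$ to be diagonal; Corollary~\ref{xp Williamson cor} with $D=a$ (a scalar, so the commutant condition is vacuous) then yields $S_A=\mathrm{diag}(\lambda_A,\lambda_A^{-1})R_A$ for some $\lambda_A>0$ and rotation $R_A\in SO(2)$. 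For $S_B$, the $B$-block $V_B$ is already in Williamson's form with $D_B=\mathrm{diag}(b_1,b_2)$, and Corollary~\ref{xp Williamson cor} gives $S_B=N_B\widetilde R_B$ where $N_B$ is the image of some invertible $2\times 2$ matrix $M_B$ under~\eqref{GL isomorphism} and $\widetilde R_B=H(U_B\oplus U_B^*)H^\dag$ with $[U_B,D_B]=0$. This is where the hypothesis $b_1\neq b_2$ enters: the commutant of $D_B$ is then the diagonal subalgebra, forcing $U_B=\mathrm{diag}(e^{i\theta_1},e^{i\theta_2})$, and a short computation using~\eqref{H} shows that, once we pass from the $xp$-ordering native to Corollary~\ref{xp Williamson cor} to the mode-by-mode ordering used in the proposition, $\widetilde R_B$ becomes block-diagonal across the $B_1|B_2$ split: $\widetilde R_B=R_{B_1}\oplus R_{B_2}$ for some $R_{B_1},R_{B_2}\in SO(2)$.

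Next I would exploit this structure to reduce the $(A,B)$ off-diagonal constraint to a condition purely on rotations. Both scaling factors $\mathrm{diag}(\lambda_A,\lambda_A^{-1})$ and $N_B$ are themselves in $xp$-form, so they act block-diagonally with respect to the $xp$-splitting and thus neither create nor cancel $xp$ cross-terms in the $(A,B)$ block; being invertible, they can be stripped from the $xp$-form condition. What remains is that $R_A(F\;G)\widetilde R_B^\intercal$ has no $xp$ cross-terms, and since $\widetilde R_B=R_{B_1}\oplus R_{B_2}$ in mode-by-mode order, this decouples into the two requirements that the $2\times 2$ matrices $R_AFR_{B_1}^\intercal$ and $R_AGR_{B_2}^\intercal$ be diagonal.

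The final step is to right-multiply each of those identities by its own transpose: since $R_{B_i}R_{B_i}^\intercal=\id$, the right-hand rotations cancel and I am left with $R_A(FF^\intercal)R_A^\intercal$ and $R_A(GG^\intercal)R_A^\intercal$ both diagonal. A single orthogonal matrix can simultaneously diagonalise two symmetric matrices only when they commute, hence $[FF^\intercal,GG^\intercal]=0$, contradicting the hypothesis and proving that $V_{AB}$ is not normal. The step I anticipate as the most delicate is the invocation of $b_1\neq b_2$: one must carefully translate the $xp$-ordered statement $[U_B,D_B]=0$ into the claim that $\widetilde R_B$ respects the physical $B_1|B_2$ tensor split, since it is precisely this block-diagonal structure that allows the two off-diagonal conditions to decouple and each to pin down the same rotation $R_A$ via diagonalisation of $FF^\intercal$ and $GG^\intercal$ separately.
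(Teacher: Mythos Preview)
Your proof is correct and follows essentially the same route as the paper's: both invoke Corollary~\ref{xp Williamson cor} to factor $S_A$ and $S_B$, use $b_1\neq b_2$ to force the orthogonal factor of $S_B$ to split as $R_{B_1}\oplus R_{B_2}$, strip the invertible $xp$-form scaling factors to reduce the off-diagonal constraint to the diagonality of $R_AFR_{B_1}^\intercal$ and $R_AGR_{B_2}^\intercal$, and conclude $[FF^\intercal,GG^\intercal]=0$ from simultaneous orthogonal diagonalisation. Your ``stripping'' step is phrased more conceptually than the paper's explicit computation, but the underlying argument is the same.
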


\begin{proof}
Assume by contradiction that
$(S_A\oplus S_B)\, V_{AB}\, (S_A\oplus S_B)^\intercal$ is in $xp$-form. Then there are scalars $x,p>0$ and $2\times 2$ matrices $X',P'>0$ such that
\bbb
S_A \begin{pmatrix} a & 0 \\ 0 & a \end{pmatrix} S_A^\intercal = \begin{pmatrix} x & 0 \\ 0 & p \end{pmatrix},\qquad S_B \begin{pmatrix} b_1 & 0 & 0 & 0 \\ 0 & b_2 & 0 & 0 \\ 0 & 0 & b_1 & 0 \\ 0 & 0 & 0 & b_2 \end{pmatrix} S_B^\intercal = \begin{pmatrix} X' & 0 \\ 0 & P' \end{pmatrix} .
\eee
If $b_1\neq b_2$, any unitary $U$ satisfying $\left[ U, \left( \begin{smallmatrix} b_1 & 0 \\ 0 & b_2 \end{smallmatrix}\right) \right]=0$ must be of the form $U = \left( \begin{smallmatrix} e^{i\theta_1} & 0 \\ 0 & e^{i\theta_2}\end{smallmatrix} \right)$. Hence, Corollary~\ref{xp Williamson cor} tells us that
\begin{align*}
S_A &= \begin{pmatrix} m^{-1} & 0 \\ 0 & m \end{pmatrix} H \begin{pmatrix} e^{i\varphi} & 0 \\ 0 & e^{-i\varphi} \end{pmatrix} H^\dag \\[1ex]
&= \begin{pmatrix} m^{-1} & 0 \\ 0 & m \end{pmatrix} \begin{pmatrix} \cos \varphi & \sin \varphi \\ -\sin\varphi & \cos\varphi \end{pmatrix} \\[1ex]
&= \begin{pmatrix} m^{-1} & 0 \\ 0 & m \end{pmatrix} R(\varphi)
\end{align*}
and
\begin{align*}
S_B &= \begin{pmatrix} N^{-1} & 0 \\ 0 & N^\intercal \end{pmatrix} H \begin{pmatrix} e^{i\theta_1} & 0 & 0 & 0 \\ 0 & e^{i\theta_2} & 0 & 0 \\ 0 & 0 & e^{-i\theta_1} & 0 \\ 0 & 0 & 0 & e^{-i\theta_2} \end{pmatrix} H^\dag \\[1ex]
&= \begin{pmatrix} N^{-1} & 0 \\ 0 & N^\intercal \end{pmatrix} \begin{pmatrix} \cos \theta_1 & 0 & \sin\theta_1 & 0 \\ 0 & \cos \theta_2 & 0 & \sin\theta_2 \\ -\sin\theta_1 & 0 & \cos\theta_1 & 0 \\ 0 & -\sin \theta_2 & 0 & \cos\theta_2 \end{pmatrix} \\[1ex]
&= \begin{pmatrix} N^{-1} & 0 \\ 0 & N^\intercal \end{pmatrix} (R_{B_1}(\theta_1) \oplus R_{B_2}(\theta_2))\, ,
\end{align*}
where $\varphi,\theta_1,\theta_2\in \R$, $m\neq 0$, and $N$ is $2\times 2$ and invertible. In order for the $AB_1$ and $AB_2$ off-diagonal blocks to be brought into $xp$-form as well, we must also request that
\bbb
R(\varphi) F R_{B_1}(\theta_1)^\intercal = D_1\, ,\qquad R(\varphi) G R_{B_2}(\theta_2)^\intercal = D_2
\eee
be both diagonal. This would imply that
\begin{align*}
0 &= \left[D_1^2, D_2^2\right] \\
&= \left[ R(\varphi)FF^\intercal R(\varphi)^\intercal,\, R(\varphi)GG^\intercal R(\varphi)^\intercal\right] \\
&= R(\varphi) \left[ FF^\intercal,\, GG^\intercal \right] R(\varphi)^\intercal\, ,
\end{align*}
so that $\left[ FF^\intercal,\, GG^\intercal \right]=0$, contrary to the hypotheses.
\end{proof}


\end{document}